%% file: main.tex
\documentclass[10pt]{article}

\usepackage[margin=1in]{geometry}

\usepackage[T1]{fontenc}
\usepackage[scaled=0.95]{inconsolata}   %
\usepackage[final]{microtype}

\input{packages}

\input{macros}

\newtheorem{proposition}{Proposition}
\newtheorem{observation}{Observation}

\newtheorem{claim}{Claim}

\newtheorem{definition}{Definition}

\newtheorem{example}{Example}

\theoremstyle{remark}

\newtheorem{remark}{Remark}

\crefname{definition}{Definition}{Definitions}
\Crefname{definition}{Definition}{Definitions}

\usepackage{tikz}
\usetikzlibrary{positioning, shapes, arrows.meta, calc, patterns.meta, patterns, decorations.pathreplacing, backgrounds}

\definecolor{frameblue}{RGB}{0, 155, 220}
\definecolor{innerorange}{RGB}{255, 120, 30}

\usepackage{todonotes}

\usepackage{color-edits}
\addauthor[TODO]{todo}{orange}
\addauthor[DC]{dc}{blue}
\addauthor[NS]{ns}{teal}
\addauthor[LC]{lc}{olive}

\input{custom}

\title{On Incentivized Exploration beyond Bayesianism and Full-Information}

\usepackage{authblk}

\author[1]{Dimitar Chakarov}
\author[2]{Lee Cohen}
\author[1]{Nathan Srebro}
\affil[1]{Toyota Technological Institute at Chicago}
\affil[2]{Stanford University}

\date{\today}

\begin{document}

\maketitle

\begin{abstract}
    \input{abstract}
\end{abstract}

\input{sec-introduction-lc}
\input{sec-bic}
\input{sec-policies}
\input{sec-main-text}

\bibliographystyle{plainnat}
\bibliography{bibliography}

\appendix
\newpage
\crefalias{section}{appendix}
\crefalias{subsection}{appendix}
\crefalias{subsubsection}{appendix}

\input{app-notation}
\input{app-kremer-policy}
\input{app-sic-example}
\input{app-deterministic-behavioral-policies}
\input{app-proofs}
\input{app-eps-ic}
\input{app-per-step-pareto-optimality}

\end{document}

%% file: packages.tex
\usepackage{graphicx}
\usepackage{subcaption}
\usepackage{booktabs}    %
\usepackage{tabulary}
\usepackage{colortbl}
\usepackage{float}       %

\usepackage{amsmath}
\usepackage{amsthm}
\usepackage{amsfonts}
\usepackage{amssymb}
\usepackage{mathrsfs}
\usepackage{mathtools}
\usepackage{thmtools}
\usepackage{bm}          %

\usepackage[dvipsnames]{xcolor}
\colorlet{toastedorange}{orange!50!BrickRed}
\colorlet{richblue}{blue!75!black}
\colorlet{purpur}{violet!50!magenta}

\usepackage{algorithm}
\usepackage{algpseudocode}
\renewcommand{\algorithmicrequire}{\textbf{Input:}}
\renewcommand{\algorithmicensure}{\textbf{Output:}}

\usepackage[square,sort&compress]{natbib}
\usepackage{xurl}
\usepackage[hidelinks,backref=page,colorlinks=true]{hyperref}
\hypersetup{
    colorlinks=true,
    linkcolor=toastedorange,
    citecolor=richblue,
    urlcolor=purpur
}
\usepackage[capitalise,nameinlink]{cleveref}    %
\let\cref\Cref

\usepackage{setspace}
\usepackage[shortlabels]{enumitem}
\setlist[enumerate]{noitemsep,topsep=0pt,parsep=0pt,partopsep=0pt}
\setlist[itemize]{noitemsep,topsep=0pt,parsep=0pt,partopsep=0pt}

\usepackage{fancyhdr}

\usepackage{xspace}           %
\usepackage{nicefrac}         %
\usepackage{fontawesome}
\usepackage[normalem]{ulem}

%% file: macros.tex
\DeclareMathOperator{\supp}{supp}
\DeclareMathOperator*{\argmax}{arg\,max}

\newcommand{\calA}{\mathcal{A}}

\newcommand{\calE}{\mathcal{E}}
\newcommand{\calF}{\mathcal{F}}

\newcommand{\calH}{\mathcal{H}}

\newcommand{\calM}{\mathcal{M}}

\newcommand{\calU}{\mathcal{U}}

\newcommand*{\bfa}{\mathbf{a}}

\DeclarePairedDelimiter{\bracket}{[}{]}
\DeclareMathOperator*{\Expectation}{\mathbb{E}}
\newcommand{\EE}[2][]{\Expectation_{#1}\bracket*{#2}}

\renewcommand{\Pr}[2][]{\mathbb{P}_{#1}\left[#2\right]}

\newcommand{\bigo}[1]{\mathrm{O}\!\left(#1\right)}

\newcommand{\littleo}[1]{\mathit{o}\!\left(#1\right)}
\newcommand{\bigomega}[1]{\Omega\!\left(#1\right)}

\newcommand{\bigtheta}[1]{\Theta\!\left(#1\right)}

\newcommand{\abs}[1]{\left|#1\right|}

\newcommand{\ds}{\displaystyle}           %

%% file: custom.tex
\newcommand{\paragraphname}[1]{\vspace{0.75em}\noindent\textbf{#1.}\ }

\newcommand{\BIC}{\ensuremath{\operatorname{BIC}}\xspace}
\newcommand{\IC}{\ensuremath{\operatorname{IC}}\xspace}
\newcommand{\SIC}{\ensuremath{\operatorname{SIC}}\xspace}
\newcommand{\oSIC}{\ensuremath{\operatorname{(S)IC}}\xspace}    %
\newcommand{\PO}{\ensuremath{\operatorname{PO}}\xspace}

\newcommand{\epsIC}{\ensuremath{\operatorname{\varepsilon-IC}}\xspace}
\newcommand{\epsPO}{\ensuremath{\operatorname{\widetilde{\varepsilon-PO}}}\xspace}
\newcommand{\hepsPO}{\ensuremath{\operatorname{\varepsilon-PO}}\xspace}

\newcommand{\pip}{\pi_{\mathbf{p}}}

\newcommand{\pia}{\pi_{\mathbf{a}}}

\newcommand{\external}{\mathbf{ext}}

\newcommand{\reg}{\operatorname{Reg}}

\newcommand{\Bern}[1]{\operatorname{Bern}\left(#1\right)}
\newcommand{\Unif}[1]{\calU\left[#1\right]}

\crefname{protocol}{Protocol}{Protocols}
\Crefname{protocol}{Protocol}{Protocols}

\newfloat{protocol}{tbp}{lop}
\floatname{protocol}{Protocol}
\crefname{protocol}{Protocol}{Protocols}
\Crefname{protocol}{Protocol}{Protocols}

%% file: abstract.tex
We extend Incentive Compatible Exploration beyond the Bayesian full-information setting of~\citet{kremer2014implementing}. We consider agents that may possess external information unknown to the principal. We show such settings require new notions of incentivized exploration, as well as going beyond a Bayesian perspective, and we introduce a definition where agents choose any reasonable (undominated) action. Furthermore, our framework provides for a more robust treatment of ties, and  extends to settings where agents lack a single common prior and instead only know that reward distributions belong to a collection of potential priors.

%% file: sec-introduction-lc.tex
\section{Introduction}
\label{section:introduction}

Many learning systems improve by recommending actions to successive users (agents), but the system does not execute those actions itself.
Instead, each round is carried out by a fresh myopic agent who chooses what benefits them immediately, such as in route guidance and recommendations.
This setting is naturally modeled as a \textit{multi-armed bandit} problem (each arm is an action) with autonomous agents, and was introduced as Bayesian incentive-compatible (\BIC) exploration in~\citet{kremer2014implementing}. We refer the reader to~\citet{Slivkins-book} and \citet{Lattimore-Szepesvari-Book} for background on multi-armed bandit.
This seemingly small change is consequential: agents are autonomous and seek to maximize their own reward (utility), without caring about how exploration benefits future agents.
In this work, we further extend the model by allowing agents to condition on \textbf{external information} that the principal (social planner) does not necessarily observe nor control.

We focus on settings without monetary transfers, where the principal can influence agent choices only through messages that leverage \textit{information asymmetry}, i.e., things the principal knows and the agent does not.\footnote{Another approach to bridge this gap relies on monetary transfers, where the principal explicitly compensates agents for the exploration incurred by choosing potentially sub-optimal arms~\citep{frazier2014incentivizing}. }
As a result, the principal faces a learning problem where exploration must be induced through communication, despite users’ incentives and external information.
Formally, in each round, the principal sends a private message to a fresh agent who chooses an action and receives its realized reward, and the principal observes both. The principal's goal is to maximize cumulative reward (which is equivalent to maximizing social welfare or minimizing regret).

We assume deterministic outcomes~\citep{kremer2014implementing,cohen2019optimal,bahar2020fiduciary}: each action has an unknown fixed reward drawn once from the prior the first time it is played, and the same reward is realized on every subsequent play.\footnote{While some prior works study stochastic rewards, we focus on deterministic outcomes because the incentive and information issues that drive our results already arise in this simplest setting. Moreover, all our notions extend to stochastic outcomes in the standard way. }

Classical works on \BIC exploration~\citep{kremer2014implementing,mansour2020bayesian,mansour2022bayesian,cohen2019optimal,sellke2021price} show that recommendations alone can induce (asymptotically) optimal exploration under a tightly structured model. First, agents do not observe the full action-reward history and have no external information, so their decision-relevant knowledge is controlled by what the principal chooses to reveal, and in particular, the principal can anticipate what the agent knows. Second, the uncertainty is a Bayesian instance given by a single common prior shared by the principal and agents,  so the principal can predict how agents evaluate actions. Third, agents are assumed to follow recommendations even when they are merely incentive compatible (\IC) in the weak sense given their knowledge, that is, they comply in ties. Under these assumptions, and because the principal effectively has \emph{one-sided information asymmetry} (where the agents' knowledge is a subset of the principal's knowledge), the principal can recommend actions that are \IC (and therefore followed) while still driving exploration. This corresponds to case (a) in \cref{fig:information-asymmetry}.

Now consider external information. Effective \BIC exploration might still be possible when the principal has access to the external information of the agent, e.g., the agents observe the history with probability $1/2$ and the principal knows which agent observes the history (case (b) in~\cref{fig:information-asymmetry}). However, external information unknown to the principal undermines the foundation of \BIC exploration. For example, a navigation app learns a route’s delay only when users take it, but drivers may also hear a traffic report on the radio and ignore an exploratory suggestion; diners hear restaurant reviews from friends who visited the restaurant before and decide to go to a different restaurant than the one recommended.
When an agent conditions on external information which is unknown to the principal, the principal can no longer certify that a recommendation is \IC for that agent.
This changes the problem from one-sided information asymmetry into \emph{two-sided information asymmetry}, where each side might know something the other does not. See \cref{fig:information-asymmetry} for the spectrum of information settings we study.
Consequently, \IC defined relative to the principal’s information becomes fragile, and the principal cannot guarantee that agents will follow IC recommendations. This is not merely a technical nuisance but a modeling mismatch: in realistic settings, the principal’s recommendation competes with information sources the principal does not observe and cannot control.

\begin{figure}[tbp]
    \centering
    \begin{tikzpicture}[scale=0.8, every node/.style={scale=0.9}]
        \begin{scope}[shift={(0,0)}]
          \draw[toastedorange, line width=1.5pt] (0,0) circle (1.2);
          \draw[richblue, line width=1.5pt] (-0.5,-0.5) circle (0.1);
          \node[below, align=left] at (0,-1.5) {(a) Full Information \\ \phantom{(a) }Setting (\cref{example:bayesian-ext-info-full-info})};
        \end{scope}

        \begin{scope}[shift={(6,0)}]
          \draw[toastedorange, line width=1.5pt] (0,0) circle (1.2);
          \draw[richblue, line width=1.5pt] (-0.5,-0.5) circle (0.4);
          \node[below, align=left] at (0,-1.5) {(b) Observable External \\ \phantom{(b) }Information Setting \\ \phantom{(b) }(\cref{example:bayesian-ext-info-principal-knows})};
        \end{scope}

        \begin{scope}[shift={(12,0)}]
          \draw[toastedorange, line width=1.5pt] (0,0) circle (1.2);
          \draw[richblue, line width=1.5pt] (-1.0,-0.5) circle (0.5);
          \node[below, align=left] at (0.4,-1.5) {(c) Private External \\ \phantom{(c) }Information Setting \\ \phantom{(c) }(\cref{example:bayesian-ext-info})};
        \end{scope}
    \end{tikzpicture}
    \caption{Information asymmetry. The big Orange circles represent the principal's knowledge. The small Blue circles represent an agent's knowledge. Cases (a) and (b) show one-sided information asymmetry, where the principal knows everything the agent knows, even if the agent receives external information (as is in case (b)). Case (c) shows two-sided information asymmetry, where both know something the other does not.}
    \label{fig:information-asymmetry}
\end{figure}

Moreover, external information pushes the problem beyond the usual “Bayesian instance” interpretation in a structural way. Even if the environment is governed by a single common prior that all agree on, later agents need not have a well-defined distribution over earlier agents’ actions, because those actions may depend on external signals that later agents do not observe. As a result, a later agent cannot, in general, assign probabilities to what earlier agents would have done, and therefore cannot form a well-defined Bayesian posterior about the history induced by the principal’s policy. This breaks the standard Bayesian modeling viewpoint in which the agent’s beliefs are pinned down by a shared prior and a known description of how earlier agents respond, and motivates a non-Bayesian approach that does not rely on a shared posterior over past play.

Motivated by this, we do not attempt to predict which action an agent will take from a recommendation. Instead, we adopt a minimal behavioral requirement that remains meaningful under external information: a selfish agent should avoid actions that are clearly inferior given what they know. Concretely, we assume agents choose only actions that are not \textit{dominated} under their information. This “reasonable behavior” assumption is weak enough to accommodate arbitrary tie-breaking and (private) external information signals, yet strong enough to support principled guarantees for the principal’s learning objective. Moreover, the challenge of relaxing standard (yet strong) behavioral assumptions has been identified by~\citet{Slivkins-book} as a key direction for making agent-based bandit models more realistic.

We formalize this idea through \textit{Pareto-optimal (PO) behavior policies}. Roughly, a PO policy allows an agent to pick any action that is not strictly dominated, meaning there is no other action that is at least as good under all plausible scenarios consistent with the agent’s information and strictly better under some. This notion captures the autonomy created by external information: agents may reject a recommendation because of a private signal, may hold a subjective prior that differs from the principal’s, and may break ties arbitrarily. The principal's problem, therefore, changes. Rather than sending recommendations and relying on
agents being obedient and a structured model, the principal’s goal is to send information so that \textit{every} action that remains undominated for the agent still yields sufficient exploration progress.

Our main message is that exploration can still be possible in this external-information regime. We show that there are instances in which the principal can guarantee sublinear regret against PO behavior by using general policies that are designed for the agent’s autonomy, rather than for obedience to recommendations. This approach also resolves two additional modeling issues in a unified way: it is robust to arbitrary tie-breaking, and it naturally extends to non-Bayesian instances in which agents’ beliefs are not described by a single shared prior. In particular, unlike most prior works that fix an incentive notion and design algorithms to optimize regret, we formalize and compare behavior and principal policy classes, and characterize when sublinear regret is possible under each. We remark that the special case of external information where the information was about the past was left as an open question in~\cite{cohen2019optimal}.

\begin{figure}[tbp]
    \centering
    \begin{tikzpicture}[
        scale=0.9,
        transform shape,
        qty/.style={rectangle, draw=black, line width=1.5pt, rounded corners, minimum size=1.1cm, inner sep=5pt, fill=gray!10},
        compare/.style={semithick, -},
        doublearrow/.style={thick, <->, >=Stealth},
        highlight/.style={qty, fill=toastedorange!12, draw=toastedorange}, %
        note/.style={text=toastedorange, font=\bfseries\small, align=center},
        node distance=0.5cm and 3cm
    ]

        \node[qty] (B) {$\ds\inf_{\pip \in \Pi_{\IC}(\Theta)} \reg^{\IC}_T(\pip, \Theta)$};
        \node[highlight] (C) [below=2cm of B] {$\ds\inf_{\pip} \reg^{\PO}_T(\pip, \Theta)$};
        \node[highlight] (D) [below right=of B] {$\ds\inf_{\pip \in \Pi_{\SIC}(\Theta)} \reg^{\SIC}_T(\pip, \Theta)$};

        \draw[compare] (D) -- node[midway, above, sloped] {$\leq$} (B);
        \draw[compare] (D) -- node[midway, below, sloped] {\footnotesize \cref{claim:sic-is-ic}} (B);
        \draw[compare] (D) -- node[midway, above, sloped] {$\leq$} (C);
        \draw[compare] (D) -- node[midway, below, sloped] {\footnotesize \cref{observation:sic-po}} (C);

        \draw[doublearrow] (B) -- node[midway, fill=white, align=center] {\small Incomparable \\ \footnotesize \cref{claim:ic-po-full-info} and \cref{claim:bayesian-external-information}} (C);

        \node[note] (weakest) [below=0.46cm of D] {\small  Weaker notions \\ of reasonable behavior policies};

        \draw[->, thick, toastedorange, >=Stealth] (weakest) -- (C);
        \draw[->, thick, toastedorange, >=Stealth] (weakest) -- (D);
    \end{tikzpicture}
    \caption{Comparison of regret notions. In the general external information setting, \IC regret and \PO regret are incomparable. In the full information setting, $\inf_{\pip \in \Pi_{\IC}} \reg^{\IC}_T(\pip, \Theta) \leq \inf_{\pip} \reg^{\PO}_T(\pip, \Theta)$, however, there exists a Bayesian external information instance in which no \IC advice policy achieves sublinear regret but there exists a general policy that achieves sublinear \PO regret.
    }
    \label{figure:comparison-of-regret}
\end{figure}

\subsection{Our Contributions}
\begin{itemize}
    \item \textbf{General framework for exploration with external information.} Following a brief exposition of the classic \BIC setting in \cref{section:kremer-review}, we study exploration when agents may have external information and may break ties arbitrarily, and show that the principal can no longer rely on \BIC exploration (\cref{section:external-information}) to achieve sublinear regret. Moreover, this motivates a non-Bayesian approach that does not rely on a shared posterior over past play. We therefore introduce a general framework for exploration with selfish agents that separates the principal's policy from the agents' behavior policy in \cref{section:behavior-policies} that accommodates external information and non-Bayesian instances.
    \item \textbf{Agent behaviors policies and principal policies.}
    We introduce different types of \emph{principal policies} (with arbitrary signaling) and \emph{agent behavior policies} (\cref{section:principal-and-agent-behavior-policies}).
    On the agent side, we formalize three notions of behaviors given their knowledge: Incentive Compatible (\IC), which assumes any reasonable recommendation is followed, Strong Incentive Compatible (\SIC), which assumes that only a unique reasonable recommendation is followed, and \PO behavior, which assumes agents select a reasonable actions and that the principal sends messages rather than recommendations.
    On the principal side, we define the corresponding classes of advice policies (\IC, \SIC), and allow general principal policies that can signal beyond action recommendations. This separation lets us analyze exploration under external information, heterogeneous priors, and tie-breaking.  Namely, we  ask when sublinear regret is achievable under different behavioral notions.
    \item \textbf{External information can rule out principal advice policies, while general still works.}
    In the external-information setting, we show that \SIC advice can be infeasible and \IC advice policies can incur linear regret, yet a general principal policy can still achieve sublinear \PO regret by signaling appropriately (\cref{claim:bayesian-external-information}).
    \item \textbf{Relations between different types of policies.}
    In \cref{section:separations}, we compare behavioral notions in terms of best achievable regret.
    We show that \SIC can be overly restrictive: we provide instances where sublinear regret is achievable under \IC or \PO behavior but impossible under \SIC behavior (\cref{table:summary-sic-ic-po-results}).
    Moreover, \IC and \PO regret are incomparable.
    There are instances where  advice policies achieve sublinear regret, but no general principal policy can guarantee sublinear \PO regret, and there are also instances where a general  principal policy attains sublinear \PO regret while every advice policy has linear \IC regret (\cref{claim:bayesian-external-information,claim:non-bayesian-full-information-no-po}).
    This shows that ``recommendations that an agent should follow'' (\IC%
    ) and ``freedom to choose among undominated actions'' (\PO) lead to genuinely different exploration guarantees.

    \item \textbf{Approximate  Pareto-optimality.} We extend the framework to approximate Pareto-optimality (\hepsPO), modeling agents who select approximately reasonable actions due to friction (\cref{section:approximate-pareto-optimality}). We exhibit an instance that admits a general policy with sublinear \PO regret, however, every general policy has linear \hepsPO regret.
\end{itemize}

\begin{table}[tbp]
    \centering
    \caption{
    Achievability of sublinear regret across information settings and regimes. Here \faTimes\ denotes that there exists an instance for which every policy has linear regret of the corresponding type. \faCheck\ indicates that there exists an instance for which there is a policy with sublinear regret of the corresponding type.
    }
    \label{table:summary-sic-ic-po-results}
    \begin{tabular}{@{}lccccc@{}}
        \toprule
        \textbf{Result} & \textbf{Regime} & \textbf{Information Setting} & \textbf{SIC} & \textbf{IC} & \textbf{PO} \\
        \midrule
        \cref{claim:bayesian-full-information} & Bayesian & Full & \faTimes & \faCheck & \faCheck \\
        \cref{claim:bayesian-external-information} & Bayesian & External & \faTimes & \faTimes & \faCheck \\
        \cref{claim:bayesian-external-information-no-sublinear} & Bayesian & External & \faTimes & \faTimes & \faTimes \\
        \cref{claim:non-bayesian-full-information} & Non-Bayesian & Full & \faTimes & \faCheck & \faCheck \\
        \cref{claim:non-bayesian-full-information-no-po} & Non-Bayesian & Full & \faTimes & \faCheck & \faTimes \\
        \bottomrule
    \end{tabular}
\end{table}

\subsection{Related Work}
\label{section:related-work}

Learning with strategic agents is a rapidly growing research area \citep{hardt2016strategic,zhang2022efficient,ben2023learning}.
Our work is closely related to Bayesian incentive-compatible exploration under information asymmetry. \citet{kremer2014implementing} introduced the classical Bayesian incentive-compatible exploration model and derived the optimal policy for two deterministic actions. \citet{mansour2020bayesian} designed a bandit algorithm with asymptotically optimal regret in the case of stochastic actions, and defined the Bayesian full information version of strong incentive compatibility (\SIC). \citet{mansour2022bayesian} extended the model to multi-agent games, while \citet{mansour2018competing} studied two competing planners. \citet{cohen2019optimal} designed optimal exploration policies for the principal, when rewards are deterministic and have limited support. These prior works assumed that the principal knows all the information available to the agents. However, this assumption is not suited for real-world settings where agents observe external information that is unknown to the principal. To this end, \citet{bahar2019social} explored a restricted form of exploration with external information, where agents can observe the action and reward of the agent immediately before them.

\citet{immorlica2020incentivizing} studied incentivized exploration with disclosure policies, where the platform deliberately reveals selected subhistories of past outcomes. Contrast this with our model, where the principal can send any arbitrary message and agents may receive other information about the history from their external information. They consider stochastic Bernoulli rewards as opposed to our deterministic rewards setup. Similarly to our work, they formalize a specific type of agent behavior based on the observed empirical outcomes for each arm.

In a different line of work, \citet{frazier2014incentivizing} considered incentivizing exploration through monetary transfers, while \citet{che2015optimal} consider a setting with two binary-valued actions and a continuum of agents.
\citet{bahar2016economic} investigated a version of external information where agents can observe their neighbors in a social network. More generally, incentivized exploration has been studied broadly in multi-armed bandits \citep{bahar2020fiduciary,slivkins2017incentivizing,immorlica2019bayesian,immorlica2020incentivizing} and in Markov decision processes in the context of reinforcement learning \citep{simchowitz2024exploration}.
The multi-armed bandits (MAB) problem has been well studied in operations research, machine learning, and computer science to model the exploration-exploitation trade-off~\citep{Slivkins-book,Lattimore-Szepesvari-Book}.
In our setting, if the principal could choose actions without having to interact with myopic agents, the learning problem reduces to MAB.

The interaction between the principal and an agent corresponds to a version of \emph{Bayesian persuasion}~\citep{kamenica2011bayesian, kamenica2019bayesian}, that models how an informed principal persuades an agent to act in the principal's favor.
For an algorithmic treatment of Bayesian persuasion see \citep{dughmi2016algorithmic,dughmi2016persuasion,dughmi2017algorithmic}, and for an online learning perspective see \citep{castiglioni2020online,zu2025learning}. More generally, this framework is part of the broader area of information design in theoretical economics~\citep{bergemann2016information,taneva2019information}. Finally,~\citet{Slivkins2023survey} provides a survey on the relation between Bayesian persuasion and \BIC exploration.

%% file: sec-bic.tex
\section{Classic Bayesian Incentive Compatible Exploration}
\label{section:kremer-review}

We begin by reviewing the Bayesian incentive compatible (BIC) exploration model introduced by~\citet{kremer2014implementing}. Consider a principal (learning algorithm) that interacts with a stream of $T$ myopic selfish agents. The principal and agents have a common Bayesian prior $\theta = \theta_1 \times \dots \times \theta_k$, which defines random variables $R_a$ for the reward of each action $a \in [k]$, where $[k] = \{ 1, \ldots, k \}$. Rewards are deterministic but ex ante unknown. Namely, the reward of action~$a$, is sampled from~$\theta_a$ once, and then any pull of action $a$ yields the same reward $R_a = r_a$.

At each round $t \in [T]$, a new agent $t$ arrives. The principal delivers a private message (recommended action) $\sigma_t \in [k]$ to agent $t$.\footnote{By the revelation principle, it is without loss of generality to restrict messages to recommending actions~\citep{kremer2014implementing}.} Agent $t$ then selects an action $a_t$ and receives reward $r_{a_t} \in [0,1]$. Finally, agent $t$ leaves and never returns. The principal observes both the selected action~$a_t$ and the reward~$r_{a_t}$. The expected cumulative reward is $\EE[\theta]{\sum_{t = 1}^T R_{a_t}}$. The principal's objective is to maximize the expected cumulative reward of all the agents which is equivalent to minimizing the regret $\reg_T = \sup_{a \in [k]} \EE[\theta]{ T\cdot R_a - \sum_{t = 1}^T R_{a_t} }$. For the purpose of this exposition, we relax the principal's goal to achieving sublinear regret, i.e., $\reg_T = \littleo{T}$.\footnote{The expected cumulative reward is also known as the social welfare in prior work~\citep{kremer2014implementing,cohen2019optimal,mansour2020bayesian}.}

In order to allow the principal to explore all actions and achieve optimal expected cumulative reward asymptotically, \citet{kremer2014implementing} proposed to design an \emph{incentive compatible recommendation policy} $\pi$ for the principal, such that when the principal recommends an action to an agent, the agent follows it because the recommended action (weakly) maximizes their expected reward. Formally, a recommended action $\sigma_t$ is \emph{Bayesian incentive compatible} (\BIC) for agent~$t$ if $\EE{R_{\sigma_t} - R_a \mid \sigma_t, \pi, t, \calE_{t - 1}} \geq 0$ for any action $a \in [k]$, where $\calE_{t - 1}$ is the event that all previous agents followed their recommended actions. In particular, for the case of two actions \citet{kremer2014implementing} designed an optimal incentive compatible recommendation policy that explores both actions in finite time.

We now present an example instance in which the principal knows everything the agents know (prior $\theta$, principal policy $\pi$, their place in line $t$), and a \BIC recommendation policy can use information asymmetry to explore all actions in finite time and achieve sublinear regret.

\begin{example}
\label{example:bayesian-ext-info-full-info}
    Consider an instance with two actions ($k = 2$). Let $R_1 \sim \Unif{0,1}$ and $R_2 \sim \Bern{0.4}$.
\end{example}

We use $\Bern{p}$ to denote the Bernoulli distribution with parameter $p$, and $\Unif{I}$ to denote the uniform distribution over the interval $I$.

We adapt the scheme of \citet{kremer2014implementing} to a \BIC recommendation policy for \cref{example:bayesian-ext-info-full-info} that explores all actions in three time steps and continues recommending the optimal action to subsequent agents. The policy is deferred to \cref{section:kremer-advice-policy}.

Next, \cref{example:bayesian-ext-info-principal-knows} incorporates the rewards from \cref{example:bayesian-ext-info-full-info} and considers agents that have additional information that the principal also observes.

\begin{example}
\label{example:bayesian-ext-info-principal-knows}
    Consider an instance with two actions ($k = 2$). Let $R_1 \sim \Unif{0,1}$ and $R_2 \sim \Bern{0.4}$. In this setting, each agent observes the entire history with probability $1/2$. The principal knows exactly whether each agent knows the history or not.
\end{example}

The policy designed for \cref{example:bayesian-ext-info-full-info} is no longer \BIC for \cref{example:bayesian-ext-info-principal-knows}.\footnote{This is because agents who receive exploration-driven recommendations know that they do not maximize their expected reward if they know the history.}
Nevertheless, it can be augmented so that even in this scenario, the principal can still send \BIC recommendations and explore all actions. If agent~$t$ knows the history, the principal computes an action that maximizes agent $t$'s expected reward under this external information, recommends that action, and ignores that time step $t$ in the \BIC policy for \cref{example:bayesian-ext-info-full-info}.
If agent $t$ does not know the history, the principal proceeds according to the \BIC policy. The principal explores both actions in finite expected time, and can recommend the optimal action to subsequent agents and achieve sublinear regret.

\subsection{Breaking One-sided Information Asymmetry with External Information}
\label{section:external-information}

Generalizing the setup of BIC exploration \citep{kremer2014implementing}, we consider exploration with agents who receive private external information alongside the principal's message. The picture changes significantly if the principal does not observe the external information actually received by the agent. Consider the following example.

\begin{example}
\label{example:bayesian-ext-info}
    Consider an instance with two actions ($k = 2$). Let $R_1 \sim \Unif{0,1}$ and $R_2 \sim \Bern{0.4}$. In this setting, each agent \textbf{privately} observes the entire history of action-reward pairs produced by previous agents with probability $1/2$.
\end{example}

Here, an informed agent knows the history of action-reward pairs from previous time steps, while an uninformed agent does not. Agent 1 selects action 1 because $\EE{R_1} = 0.5 > 0.4 = \EE{R_2}$. If the realized reward $R_1 < 0.4$, action~2 is strictly better for an informed agent (since $\EE{R_2} = 0.4$). Because the principal cannot distinguish between informed and uninformed agents, the only action satisfying \BIC for all types is action~1. Consequently, if $R_1 > 0.4$, the principal never explores action~2 and thus regret is linear. Later on, we see in \cref{claim:bayesian-external-information} how we can get sublinear regret for this instance.

Therefore, external information introduces a key challenge in that \BIC may be too strict of a notion to allow for exploration. Our proposed solution is to let agents choose any action that is \emph{reasonable} for them, i.e., any action that is not worse off than another action. Moreover, we let the principal send any messages with the goal of exploring all actions, without necessarily requiring that recommended actions are \BIC. In \cref{section:behavior-policies}, we capture the behavior of agents to external information and principal messages with the notion of a behavior policy, and then in \cref{section:principal-and-agent-behavior-policies} we formalize notions of reasonable behavior policies.

Going back to the setup of \cref{example:bayesian-ext-info}, the principal can explore all actions with a general policy that sends any messages, not just actions given agents are choosing reasonable actions. We make this precise in \cref{claim:bayesian-external-information} in \cref{section:separations}.

\smallskip
\noindent\textbf{Information Asymmetry. }
\cref{example:bayesian-ext-info-full-info,example:bayesian-ext-info-principal-knows,example:bayesian-ext-info} highlight the range of information asymmetry. \cref{example:bayesian-ext-info-full-info,example:bayesian-ext-info-principal-knows} are one-sided information asymmetry, i.e., the principal knows everything the agents know (though agents in \cref{example:bayesian-ext-info-principal-knows} have more knowledge than agents in \cref{example:bayesian-ext-info-full-info}) and can achieve sublinear regret using a \BIC policy. In \cref{example:bayesian-ext-info}, however, the situation changes to two-sided information asymmetry: the principal knows part of what the agent knows, but not everything, and the agent knows something the principal does not. See illustration in \cref{fig:information-asymmetry}.

Finally, we consider a simple instance, which illustrates that allowing agents to choose any reasonable action requires going beyond the standard Bayesian view, and motivates a more general setup than the one assumed in \BIC exploration.

\begin{example}
\label{example:bayesian-full-info}
    Consider an instance with two actions ($k = 2$). Let $R_1 = 1/2$ (with probability 1), and $R_2 \sim \Unif{0,1}$.
\end{example}

Initially, both actions have an expected reward $1/2$. Both actions are reasonable for agent~1, and the principal cannot force agent~1 to select a specific action. Agent~2 now needs to reason about the choice of agent~1. However, there is no prior distribution over the choice of agent~1. This suggests that in order to capture the notion of reasonable actions, we need to move beyond the standard Bayesian setting.

%% file: sec-policies.tex
\section{Model}
\label{section:behavior-policies}

In this section, we present generalized incentivized exploration with external information, and formalize general principal and agent behavior policies.

We extend the setup of \BIC exploration from \cref{section:kremer-review} with two significant changes. Now agent $t$ receives a private external information $f_t$ from a class~$\calF_t$; this external information is not observed by the principal. Moreover, the principal and agents have a common collection of priors $\Theta$ over reality, instead of a single common Bayesian prior.

The true prior~$\theta^* \in \Theta$ defines independent random variables $R_a \sim \theta^*_a$ for the reward of each action $a \in [k]$.
Similar to \citet{kremer2014implementing} and \cref{section:kremer-review}, the reward $r_a$ of action $a \in [k]$, is sampled from $\theta^*_a$ once, and then any application of action $a$ yields the same reward. The true prior $\theta^*$ is unknown at the start of the procedure.

Let $\calH_t$ denote the set of all histories observable by the principal up to and excluding timestep $t$, where a specific history $h_t \in \calH_t$ is a sequence of action-reward pairs observed by the principal up to and excluding time $t$, namely $h_t = \left((a_1, r_{a_1}), \dots, (a_{t - 1}, r_{a_{t - 1}})\right)$.

At each time step $t$, agent~$t$ receives private external information $f_t \in \calF_t$, drawn from a fully supported distribution $\theta^*_{\external, t}$ that is defined by the true prior $\theta^*$ and may be correlated with the reward distributions $\theta^*_a$. Agent~$t$ observes only the realization $f_t$, e.g., traffic delay report on the radio, while the principal cannot observe $f_t$.

\smallskip
\noindent\textbf{Information Settings.\ }
We distinguish two information settings. In the \emph{full information} setting, agents receive no external information, i.e., $\calF_t = \{ \emptyset \}$ for all $t$. In the \emph{external information} setting, the signal $f_t$ may reveal additional information to agent $t$.

\smallskip
\noindent\textbf{Principal Policies.\ }
The principal selects a policy $\pip$, defined as a sequence of mappings $\{\pip^t\}_{t=1}^T$. Each mapping  $\pip^t : \mathcal{H}_t \to \Delta(\Sigma_t)$ maps a history $h_t$ of action-reward pairs to a distribution over $\Sigma_t$, the set of permitted messages at time $t$. We use $\Delta(\mathcal{X})$ to denote the simplex of all distributions over the abstract space~$\mathcal{X}$. At time $t$, the principal samples a message $\sigma_t \sim \pip^t(h_t)$ and sends it to agent $t$. The principal's objective is to maximize the expected cumulative reward of all the agents.
Formally, the expected cumulative reward is $\EE{\sum_{t = 1}^T R_{a_t}}$, where the expectation is taken with respect to the true prior distribution $\theta^*$ and the principal policy $\pip$.

A special type of principal policy that only sends actions as messages is an \emph{advice policy}. We call a message $\sigma_t$ that recommends an action, so $\sigma_t \in [k]$, an \emph{advice}. Correspondingly, principal policies only send advice at every step $t$, so $\Sigma_t \subseteq [k]$ are \emph{advice policies}. The recommendation policies of \citet{kremer2014implementing} are advice policies where the recommended action is guaranteed to be \BIC.
We use the term advice rather than recommendation to distinguish the communication structure from incentive properties. Our definition of advice refers to the signaled action itself, regardless of whether it is incentivized. Note that the standard notion of a recommendation policy is equivalent in our framework to an \IC advice policy paired with an \IC behavior policy.

\begin{protocol}
    \renewcommand{\algorithmicrequire}{\textbf{Parameters:}}%
    \renewcommand{\algorithmicensure}{\textbf{Unknowns:}}%
    \caption{General Incentivized Exploration Protocol}
    \label{algorithm:incentivized-exploration-protocol}
    \begin{algorithmic}[1]
        \Require $k$ actions, $T$ rounds, collection of priors $\Theta$ %
        \Ensure principal policy $\pip$, agents' behavior policies $\{\pia^t\}_{t=1}^T$ %
        \State Nature selects true prior distribution $\theta^* \in \Theta$
        \State Nature samples true reward vector $r \in [0,1]^k$ from $\theta^*$
        \State Initialize $h_1 = \emptyset$
        \For{$t = 1, \dots, T$}
            \State Agent $t$ arrives
            \State The principal selects a message $\sigma_t \sim \pip^t(h_t)$ and sends $\sigma_t$ to agent $t$
            \State Agent $t$ receives their external information $f_t$ and the principal's message $\sigma_t$
            \State Agent $t$ selects action $a_t = \pia^t(\sigma_t, f_t)$ and receives reward $r_{a_t}$
            \State The principal observes action $a_t$ and the corresponding reward $r_{a_t}$
            \State The pair $(a_t, r_t)$ is appended to the history $h_t$, so $h_{t+1} = (h_t, (a_t, r_t))$
            \State Agent $t$ departs and never returns
        \EndFor
    \end{algorithmic}
    \renewcommand{\algorithmicrequire}{\textbf{Input:}}%
    \renewcommand{\algorithmicensure}{\textbf{Output:}}%
\end{protocol}

\smallskip
\noindent\textbf{Agent Behavior Policies.\ }
Each agent $t$ knows the collection of priors $\Theta$. At their own turn, agent~$t$ has access to the following data: the principal policy $\pip$ and the timestep $t$ (their place in line). They receive their external information $f_t \in \calF_t$ and the message sent by the principal $\sigma_t$. Agent~$t$, given their knowledge, selects their action according to a behavior policy $\pia^t : \Sigma_t \times \calF_t \to [k]$ that maps messages and external information to an action. The mapping $\pia^t$ is chosen by the agent based on their place in line $t$ and their knowledge of the principal policy $\pip$, while the mapping itself determines the selected action $a_t$ only based on the principal's message $\sigma_t$ and the agent's external information $f_t$.

\smallskip
\noindent\textbf{Bayesian vs. Non-Bayesian Regimes.\ }
We distinguish between Bayesian and non-Bayesian regimes from the perspective of each agent. At time $t$, we say the regime is \emph{Bayesian} if agent $t$ can assign a well-defined probability distribution to the actions taken by previous agents given the agent's knowledge. Otherwise, the regime is \emph{non-Bayesian} for agent $t$. Importantly, an interaction can become non-Bayesian even if the environment starts with a single common prior: if earlier agents faced ties among undominated actions and used arbitrary tie-breaking, or received private external information (that agent $t$ cannot model probabilistically), then agent $t$ might have no distribution over the induced history. Separately, we call the instance Bayesian when $\abs{\Theta} = 1$, so there is a single common prior, and non-Bayesian when $\abs{\Theta} > 1$, so agents may disagree about which prior governs rewards.

\smallskip
\noindent\textbf{Exploration Protocol.\ }
An incentivized exploration protocol consists of a principal policy $\pip$ and a sequence of agent behavior policies $\{\pia^t\}_{t = 1}^T$. \cref{algorithm:incentivized-exploration-protocol} outlines the execution of a general incentivized exploration protocol.

\smallskip
\noindent\textbf{Notation.\ }
For convenience, we summarize all symbols we use in \cref{table:notation} in \cref{section:notation}.

%% file: sec-main-text.tex
\section{Notions of Reasonable Behaviors}
\label{section:principal-and-agent-behavior-policies}

In this section, we define the types of principal and agent behavior policies. We extend Bayesian incentive compatibility~\citep{kremer2014implementing} to handle external information and non-Bayesian instances, then introduce strong incentive compatibility and Pareto-optimality to characterize reasonable agent behaviors. Throughout this section, we use the term ``reasonable'' to refer to any action that is undominated given the agent's knowledge.

\subsection{Incentive Compatibility}
\label{section:weak-ic}

As we saw in \cref{section:kremer-review}, in the standard Bayesian regimes~\cite{kremer2014implementing} with full information, the revelation principle implies that we may restrict the principal's messages to advice without loss of generality. However, the presence of private external information and non-Bayesian agents complicates this picture, as advice policies are not always sufficient to induce exploration.
Nevertheless, we begin by generalizing the notion of incentive compatibility to accommodate these broader settings, as well as non-Bayesian instances.
We define when a particular advice (action) $\sigma_t$ is incentive-compatible for agent $t$, and then use this to define incentive-compatible advice policies for the principal and the corresponding behavior policy for agents.

\begin{definition}[Incentive-Compatible Advice]
\label{definition:ic-advice}
    Given an advice policy $\pip$, an advice $\sigma_t$ is \emph{incentive-compatible} (\IC) if for any external information $f_t \in \calF_t$ there is no other action $a \in  [k]$, such that for all priors $\theta \in \Theta$
    \begin{equation*}
        \EE{R_{a} - R_{\sigma_t} \mid f_t, \sigma_t; \theta, \pip, t, \calE_{t - 1}} \geq 0,
    \end{equation*}
    and for some prior $\theta \in \Theta$
    \begin{equation*}
        \EE{R_{a} - R_{\sigma_t} \mid f_t, \sigma_t; \theta, \pip, t, \calE_{t - 1}} > 0,
    \end{equation*}
    where $\calE_{t - 1}$ is the event that all previous agents $1, \dots, t - 1$ followed the \IC advice.
\end{definition}

Intuitively, an advice $\sigma_t$ is \IC if it is not strictly dominated by any other action $a \in  [k]$. That is, no alternative action $a$ is at least as good as $\sigma_{t}$ for every prior and strictly better for at least one, given the history of compliance.

In the Bayesian full information setting (with no external information and a single prior $\Theta = \{\theta\}$), the definition reduces to the classical form of \BIC~\citep{kremer2014implementing}, i.e., $\EE{R_{a} - R_{\sigma_t} \mid f_t, \sigma_t; \theta, \pip, t, \calE_{t - 1}} \leq 0$ for all actions $a \in  [k]$.

\smallskip
\noindent\textbf{Incentive-Compatible Policies.\ }
An advice policy $\pip$ for the principal is an \emph{incentive-compatible advice policy} if at every timestep $t$ and every history $h_t \in \calH_t$ the advice $\sigma_t \sim \pip^t(h_t)$ is \IC. Furthermore, we let $\Pi_{\IC}(\Theta)$ denote the set of all valid \IC advice policies for a collection~$\Theta$.

A behavior policy $\pia^t$ for agent $t$ is an \emph{incentive-compatible behavior policy} if $\pia^t(\sigma_t, f_t) = \sigma_t$ for any \IC advice $\sigma_t$. That is, under an \IC behavior policy, each agent follows the \IC advice policy, even if the \IC advice is not unique. If the advice is not \IC, it is unreasonable. As a consequence, the agent will select a different action that is reasonable (undominated).

We now present several examples that illustrate how \IC advice policies interact with \IC behavior policies. We start with a warm-up full information example and \IC advice policies in the presence of ties. Recall \cref{example:bayesian-full-info}. Initially, both actions have an expected reward $1/2$. Both actions are \IC advice to agent~1, and they will therefore follow it. To maximize expected cumulative reward, the principal recommends action 2 to agent~1, then all rewards are observed, and the principal can recommend the optimal action to the subsequent agents. This is an optimal \IC advice policy for the principal.

Finally, we look at a non-Bayesian instance with two priors, where even though there is a different action that is optimal in expectation under each prior, neither action is dominated by the other.
\begin{example}
\label{example:non-bayesian-full-info}
    Consider a non-Bayesian full information instance with two actions ($k = 2$). Under prior $\theta_1$, $R_1 \sim \Bern{0.1}$ and $R_2 \sim \Bern{0.9}$. Under prior $\theta_2$, the distributions are swapped.
\end{example}

Here, both actions can be \IC advice for agent~1, and when the agents use \IC behavior policy the principal can recommend either. If the sampled reward of this action is $1$, the principal recommends it for subsequent agents. If not, the principal recommends the other action to agent 2 since it is \IC advice, and then for subsequent agents (it is also \IC advice for them).  This is an optimal \IC advice policy for the principal.

\subsection{Strong Incentive Compatibility}
\label{section:strong-ic}

Next, we define a stronger version of incentive compatibility, in which the principal's advice is the unique undominated action for each agent~$t$. Unlike \IC advice, which permits multiple undominated actions (relying on favorable tie-breaking), strong incentive compatibility (SIC) requires that the recommended action be uniquely undominated. This restriction guarantees that there is a unique action the principal can send, and the agent will follow it. However, \SIC may be infeasible in instances where a unique undominated  action does not exist.

\begin{definition}[Strong Incentive-Compatible Advice]
\label{definition:strong-ic-advice}
    Given an advice policy $\pip$, an advice $\sigma_t$ is \emph{strong incentive-compatible} (\SIC) if for any external information $f_t \in \calF_t$ there is no other action $a \in  [k]$, such that either for all priors $\theta \in \Theta$
    \begin{equation*}
        \EE{R_{a} - R_{\sigma_t} \mid f_t, \sigma_t; \theta, \pip, t, \calE_{t - 1}} \geq 0
    \end{equation*}
    or for some prior $\theta \in \Theta$
    \begin{equation*}
        \EE{R_{a} - R_{\sigma_t} \mid f_t, \sigma_t; \theta, \pip, t, \calE_{t - 1}} > 0,
    \end{equation*}
     where $\calE_{t - 1}$ is the event that all previous agents $1, \dots, t - 1$ followed the \SIC advice.
\end{definition}

We remark that for Bayesian full information instances, SIC advice was defined in \citet{mansour2020bayesian}, and in this case, our definition coincides with it. Intuitively, an advice $\sigma_t$ is \SIC if it strictly dominates every other action $a \in  [k]$. More precisely, no alternative action $a$ is at least as good as $\sigma_t$ for every prior in $\Theta$ or has higher expected reward than $\sigma_t$ for some prior $\theta \in \Theta$, given a history of \SIC advice. Therefore, under every prior $\theta \in \Theta$, the advice $\sigma_t$ is the unique reward-maximizing action. Notice that any \SIC advice $\sigma_t$ is also an \IC advice $\sigma_t$. However, in the case of multiple reasonable (undominated)  actions, a \SIC advice $\sigma_t$ does not exist, and there can only be \IC advice.

\smallskip
\noindent\textbf{Strong Incentive-Compatible Policies.\ }
We define a \SIC advice policy for the principal analogously to \cref{section:weak-ic}. An advice policy $\pip$ is \emph{strong incentive-compatible} if at every timestep $t$ and every history $h_t \in \calH_t$ the advice $\sigma_t \sim \pip^t(h_t)$ is  \SIC advice for agent $t$. We let $\Pi_{\SIC}(\Theta)$ denote the set of all valid \SIC advice policies for a collection $\Theta$. We use the shorthand \oSIC to refer to statements that apply to both \IC and \SIC, and write $\Pi_{\oSIC}(\Theta)$, when the distinction is clear.

A behavior policy $\pia^t$ for agent $t$ is a \textit{strong incentive-compatible behavior policy} if $\pia^t(\sigma_t, f_t) = \sigma_t$ for any \SIC advice $\sigma_t$. When the advice $\sigma_t$ is not \SIC, agent $t$ does not necessarily follow the advice and may select any reasonable action.
That is, under an \SIC behavior policy, each agent follows the \SIC advice policy.

An important distinction between \IC advice and \SIC advice is how they treat ``indifference'' between actions. Consider an agent with multiple undominated actions given their knowledge. The principal's advice is \IC if it is one of the undominated actions (and will be followed if a \IC behavior policy is used). In contrast, \SIC advice does not exist.

Recall the setup of \cref{example:bayesian-full-info}. Both actions have the same expected reward for agent~1, and so both are reasonable (undominated)  selections. The principal can recommend either of them as \IC advice to agent~1, and the agent will follow the advice under an \IC behavior policy. However, there is no \SIC advice for the exact same reason, and so there is no \SIC advice policy for the principal. Therefore, \SIC might be infeasible in some settings. Similar reasoning applies to the non-Bayesian full information instance in \cref{example:non-bayesian-full-info}: there are two undominated actions at time $t = 1$, so both can be \IC advice, however, there is no \SIC advice.

Next, we show that the principal can explore marginally faster using an \IC advice policy compared to a \SIC advice policy. \cref{example:bayesian-full-info-ic-sic} illustrates that while the difference is small, the stricter \SIC constraints bound the probability of recommending unexplored actions earlier than \IC constraints.

\begin{restatable}{example}{exampleSicIc}
\label{example:bayesian-full-info-ic-sic}
    Consider a Bayesian full information instance with two actions ($k = 2$). Let $R_1 \sim \Unif{0,1}$ and $R_2 \sim \Bern{0.1}$.
\end{restatable}

Agent~1 selects action~1 since the a-priori expected reward is higher, and it is therefore the only \oSIC advice. If $R_1 < 0.1$, the principal sends action 2 to agent~2, and if $R_1 \geq 0.1$, the principal sends action 2 with probability $p$, and action~1 with probability $1 - p$.
Specifically, action 2 is \IC for $p \leq 1/81$, but it only becomes \SIC when $p < 1/81$. Full derivation can be found in \cref{section:sic-example}.

\smallskip
\noindent\textbf{Regret for (Strong) Incentive-Compatible Policies.\ }
We formalize the performance of \oSIC advice policies using a regret notion.
Because the \oSIC constraints must be satisfied simultaneously for all priors $\theta \in \Theta$, we evaluate performance in the worst-case scenario. In our setting, the reward for each action is realized upon its first selection and remains fixed for all subsequent steps. Consequently, to strictly measure the loss against the best available option, we define regret as the difference between the expected cumulative reward of the principal's policy and that of the optimal action for the realized reward instance, maximized over the worst possible prior in $\Theta$ and the worst possible sequence of \oSIC behavior policies.

Given an advice policy $\pip$, the \emph{\oSIC regret of $\pip$} with respect to a collection of priors $\Theta$ is:
\begin{equation*}
\label{definition:pseudo-regret-ic-sic}
    \reg^{\oSIC}_T(\pip, \Theta) = \sup_{\{\pia^t\}_{t = 1}^T \text{ are $\oSIC$}} \sup_{\theta \in \Theta} \EE[\{R_a \sim \theta_a\}_{a \in [k]}]{ T \cdot \sup_{a \in  [k]} R_a - \EE[\pip]{ \sum_{t = 1}^T R_{\pia^t(\sigma_t, f_t)} } }.
\end{equation*}
The outer expectation is taken with respect to the randomness of the rewards. The inner expectation is taken with respect to the randomness in the principal policy. This isolates the principal policy's performance from the particular prior. In general, we are interested in principal policies with sublinear regret, i.e. $\reg^{\oSIC}_T(\pip, \Theta) = \littleo{T}$.

Because every \SIC advice policy is also \IC advice policy, we can relate \SIC regret to \IC regret.

\begin{restatable}{observation}{SICisIC}
\label{claim:sic-is-ic}
    We have $\Pi_{\SIC} \subseteq \Pi_{\IC}$. Hence, $\inf_{\pip \in \Pi_{\IC}(\Theta)} \reg^{\IC}_T(\pip, \Theta) \leq \inf_{\pip \in \Pi_{\SIC}(\Theta)} \reg^{\SIC}_T(\pip, \Theta).$
\end{restatable}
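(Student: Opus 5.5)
The plan has three parts. First show pointwise that every \SIC advice is \IC advice. Lift this to the policy classes. Then show that for a policy in $\Pi_{\SIC}(\Theta)$ the two regret notions coincide, so the infimum over the larger class $\Pi_{\IC}(\Theta)$ can only be smaller.

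\emph{Step 1 (advice level).} Fix an advice policy $\pip$, a timestep $t$, and an advice $\sigma_t$, and suppose $\sigma_t$ is \SIC in the sense of \cref{definition:strong-ic-advice}. Then for every $f_t \in \calF_t$ and every $a \neq \sigma_t$, the \emph{disjunction} fails. That is, it is not the case that $\EE{R_a - R_{\sigma_t} \mid f_t, \sigma_t; \theta, \pip, t, \calE_{t-1}} \geq 0$ for all $\theta \in \Theta$, nor that this quantity is $> 0$ for some $\theta \in \Theta$. In particular the \emph{conjunction} of these two conditions fails, which is exactly the requirement in \cref{definition:ic-advice}.

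One point needs care: both definitions condition on $\calE_{t-1}$, the event that earlier agents followed the advice. I will argue by induction on $t$ that, along a policy all of whose advice is \SIC, "all previous agents followed the \SIC advice" and "all previous agents followed the \IC advice" describe the same event. The reason is that every earlier advice is both \SIC and (by the induction hypothesis) \IC. Hence the conditional expectations in the two definitions coincide.

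\emph{Step 2 (policy level).} Step 1 holds for every $t$ and every history $h_t$, with every $\sigma_t$ in the support of $\pip^t(h_t)$. So any $\pip \in \Pi_{\SIC}(\Theta)$ is an \IC advice policy, which gives $\Pi_{\SIC}(\Theta) \subseteq \Pi_{\IC}(\Theta)$.

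\emph{Step 3 (regret).} Fix $\pip \in \Pi_{\SIC}(\Theta)$. Every message it ever sends is \SIC, and hence \IC by Step 1. An \SIC behavior policy follows every \SIC advice, and an \IC behavior policy follows every \IC advice. Therefore, under either class of behavior policies and for every $\theta$, every $f_t$ and every realized message sequence, $\pia^t(\sigma_t, f_t) = \sigma_t$ for all $t$. The behavior policies can differ only on messages that $\pip$ never sends, which do not affect the realized trajectory.

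Consequently the inner expectation in the regret definition is identical for every admissible behavior sequence in either class. This gives $\reg^{\IC}_T(\pip,\Theta) = \reg^{\SIC}_T(\pip,\Theta)$. Taking the infimum over $\pip \in \Pi_{\SIC}(\Theta)$ and using the inclusion from Step 2 yields
\[
\inf_{\pip \in \Pi_{\IC}(\Theta)} \reg^{\IC}_T(\pip,\Theta) \leq \inf_{\pip \in \Pi_{\SIC}(\Theta)} \reg^{\IC}_T(\pip,\Theta) = \inf_{\pip \in \Pi_{\SIC}(\Theta)} \reg^{\SIC}_T(\pip,\Theta).
\]
If $\Pi_{\SIC}(\Theta)$ is empty, as in \cref{example:bayesian-full-info}, the right-hand side is $+\infty$ and the inequality is trivial.

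The only delicate point is the consistency of the conditioning events $\calE_{t-1}$ across the two definitions, handled by the induction in Step 1. Everything else is logical unwinding of the definitions.
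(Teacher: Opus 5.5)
Your proposal is correct and follows the paper's reasoning: the paper simply notes that every \SIC advice is also \IC advice (ruling out the disjunction rules out the conjunction), which gives $\Pi_{\SIC}(\Theta) \subseteq \Pi_{\IC}(\Theta)$, and then takes the infimum over the larger class. Your Steps 1--3 spell out what the paper leaves implicit: the conditioning events $\calE_{t-1}$ agree, and $\reg^{\IC}_T(\pip,\Theta) = \reg^{\SIC}_T(\pip,\Theta)$ for $\pip \in \Pi_{\SIC}(\Theta)$ because both classes of behavior policies follow every message that is actually sent.
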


In particular, if there is a \SIC advice policy with sublinear \SIC regret, then there is an \IC advice policy with sublinear \IC regret. For the reverse direction, in \cref{section:separations} we show instances across information settings and regimes, where a \SIC advice policy does not exist, yet some advice policy achieves sublinear \IC regret.

\subsection{Pareto-optimality}
\label{section:po}

The (strong) incentive compatibility notions from before assume the principal can induce a particular action by sending advice if it exists. However, in the external information setting, \oSIC advice policies can fail, since agents who privately learn information regarding the actions may rationally deviate from any recommendation.
Recall \cref{example:bayesian-ext-info}. In \cref{section:external-information}, we showed that advice policies cannot achieve sublinear \oSIC regret.

Motivated by this, we extend \oSIC behavior policies by permitting the agent to pick any action that is reasonable given their knowledge, rather than requiring them to follow a single advice action.
In particular, this removes technical dependence on whether a unique \SIC advice exists. We capture this by the notion of a Pareto-optimal behavior policy, which provides a robust characterization of reasonable behavior even when the principal does not observe the external information of the agents. We later show in \cref{claim:bayesian-external-information} that under such a behavior policy, the principal can guarantee sublinear regret for \cref{example:bayesian-ext-info}.

\begin{figure}[tbp]
    \centering
    \begin{tikzpicture}[
        scale=0.5,
        >=Stealth,
        action/.style={circle, draw=black, fill=gray!10, inner sep=1.8pt},
        dominated/.style={circle, draw=black, fill=gray!35, inner sep=1.8pt},
        po/.style={circle, draw=toastedorange, fill=toastedorange!12, line width=1.0pt, inner sep=1.8pt},
        ic/.style={circle, draw=richblue, fill=toastedorange!12, line width=1.0pt, inner sep=1.8pt},
        label/.style={font=\small},
        note/.style={font=\small\bfseries, text=toastedorange},
        callout/.style={font=\small, align=left},
    ]
        \def\xmax{8.0}
        \def\ymax{8.0}
        \def\xgrid{8}
        \def\ygrid{8}

        \draw[step=1, gray!25, very thin] (0,0) grid (\xmax,\ymax);

        \draw[->] (0,0) -- (\xmax+0.3,0) node[right] {$\EE{R_a \mid \theta_1}$};
        \draw[->] (0,0) -- (0,\ymax+0.3) node[above] {$\EE{R_a \mid \theta_2}$};

        \foreach \x/\xlabel in {2/0.2,4/0.4,6/0.6,8/0.8} {
            \draw (\x,0) -- (\x,-0.1);
            \node[below, font=\small] at (\x,-0.1) {\xlabel};
        }
        \foreach \y/\ylabel in {2/0.2,4/0.4,6/0.6,8/0.8} {
            \draw (0,\y) -- (-0.1,\y);
            \node[left, font=\small] at (-0.1,\y) {\ylabel};
        }

        \node[dominated] (a1) at (3.0,4.0) {};
        \node[dominated] (a2) at (4.5,3.5) {};
        \node[dominated] (a3) at (4.8,4.8) {};
        \node[po]        (a4) at (6.5,4.0) {};
        \node[po]        (a5) at (3.5,6.5) {};
        \node[po]        (a6) at (5.5,5.5) {};

        \node[label, above right=-4pt of a1] {$a_1$};
        \node[label, below right=-4pt of a2] {$a_2$};
        \node[label, below right=-4pt of a3] {$a_3$};
        \node[label, below right=-4pt of a4] {$a_4$};
        \node[label, below right=-4pt of a5] {$a_5$};
        \node[label, above right=-4pt of a6] {$a_6$};

        \def\qpad{0}

        \begin{scope}[on background layer]
            \foreach \p in {a3} {
                \path[fill=gray!60, fill opacity=0.1]
                  ($(\p.center)+(\qpad,\qpad)$) rectangle (\xgrid,\ygrid);
                \path[draw=gray!80!black, draw opacity=0.55, line width=0.5pt]
                  ($(\p.center)+(\qpad,\qpad)$) rectangle (\xgrid,\ygrid);
            }

            \foreach \p in {a4,a5,a6} {
                \path[fill=toastedorange!40, fill opacity=0.08]
                  ($(\p.center)+(\qpad,\qpad)$) rectangle (\xgrid,\ygrid);
                \path[draw=toastedorange, draw opacity=0.55, line width=0.5pt]
                  ($(\p.center)+(\qpad,\qpad)$) rectangle (\xgrid,\ygrid);
            }
        \end{scope}

        \begin{scope}[shift={(0.5,0.5)}]
            \node[po] at (0.25,1.0) {};
            \node[label, anchor=west, text=toastedorange] at (0.5,1.0) {Pareto-optimal};
            \node[dominated] at (0.25,0.25) {};
            \node[label, anchor=west] at (0.5,0.25) {Dominated action};
        \end{scope}
    \end{tikzpicture}
    \caption{
    Visualization of expected rewards for agent~1 under two priors. For simplicity, we assume the full information setting. Each action $a$ is mapped to a point $(\EE{R_{a} \mid \theta_1},\EE{R_{a} \mid \theta_2})$. An action is Pareto-optimal if there is no other action that dominates it, i.e., if there are no actions in its upper-right quadrant (weakly higher under both priors and strictly higher under at least one). Action $a_3$ is dominated because action $a_6$ is in the gray upper-right quadrant corresponding to action $a_3$, while $a_6$ is Pareto-optimal because its quadrant is empty. Notice that if action $a_6$ was not in the action space, then action $a_3$ would be \PO even though actions $a_4$ and $a_5$ improve on $a_3$ in one dimension, but are worse off in the other dimension.
    }
    \label{fig:po-rewards}
\end{figure}

\begin{definition}[Pareto-optimal Behavior Policies]
\label{definition:po-behavioral-policies}
    Given a principal policy $\pip$, a behavior policy $\pia^t$ for agent~$t$ is \emph{Pareto-optimal} (\PO) if for any given message $\sigma_t \in \Sigma_t$ and external information $f_t \in \calF_t$ there is no action $a \in  [k]$, such that for all priors $\theta \in \Theta$ and all sequences of Pareto-optimal behavior policies $\pia^1, \dots, \pia^{t - 1}$
    \begin{equation*}
        \EE{R_{a} - R_{\pia^t(\sigma_t, f_t)} \mid f_t, \sigma_t ; \theta, \pip, \pia^1, \dots, \pia^{t - 1}, t} \geq 0,
    \end{equation*}
     and for some prior $\theta \in \Theta$ and some sequence of Pareto-optimal behavior policies $\pia^1, \dots, \pia^{t - 1}$
    \begin{equation*}
        \EE{R_{a} - R_{\pia^t(\sigma_t, f_t)} \mid f_t, \sigma_t ; \theta, \pip, \pia^1, \dots, \pia^{t - 1}, t} > 0.
    \end{equation*}
    Moreover, let $\PO(t)$ be the set of all Pareto-optimal behavior policies for agent $t$.
\end{definition}

Given what agent $t$ knows, consider the set of ``plausible worlds'' consisting of prior $\theta$, and any history consistent with the principal policy and earlier agents acting reasonably.
An action is dominated if another action weakly improves expected reward in every plausible world and strictly improves it in at least one. This is exactly how \PO behavior policies allow multiple reasonable choices, even without external information. See \cref{fig:po-rewards}
for a visual example.

We defined Pareto-optimality to capture the reasonable behaviors of the agents. However, unlike the incentive compatibility advice notions described previously, the definition of \PO imposes no constraints on the principal's policy. In fact, we allow the principal's policy to be \emph{general}: the set of messages $\Sigma_t$ the principal sends can be arbitrary and no longer restricted to actions.

\smallskip
\noindent\textbf{Obedience vs.\ autonomy.\ }
The \IC and \SIC behavior notions represent assumptions of \emph{obedience}: when the principal sends \oSIC advice, an \oSIC behavior policy follows it.
This is useful when the principal can provide a recommendation that is (strongly) incentivized, but it also relies on the strong behavioral commitment to follow the message.
In contrast, the \PO behavior notion represents \emph{autonomy}: the agent may choose any action that is not dominated given their information, and does not need to follow a recommendation even when it is undominated. Under \PO, \emph{how} the agent selects among undominated actions may depend arbitrarily on the external information $f_t$, and the principal must adjust to that.

Next, we observe that given a \SIC advice policy, the induced \SIC behavior policy is exactly the unique \PO behavior policy.

\begin{observation}
\label{observation:sic-ic-is-po}
    Given a \SIC advice policy $\pip$, for every time $t$ and every external information $f_t\in\calF_t$, the behavior policy that follows the advice, i.e., $\pia^t(\sigma_t, f_t) = \sigma_t$, is the only Pareto-optimal behavior policy for agent~$t$.
\end{observation}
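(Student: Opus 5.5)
The plan is to prove two things: that following the advice is a \PO behavior policy, and that every \PO behavior policy must follow the advice. I will argue by induction on $t$, because the \PO definition quantifies over sequences of earlier \PO behavior policies, while the \SIC definition conditions on $\calE_{t-1}$, the event that all earlier agents followed the advice.

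The inductive hypothesis is that for every $s<t$ the only \PO behavior policy for agent $s$ is $\pia^s(\sigma_s,f_s)=\sigma_s$ on every message in the support of $\pip^s$ and every $f_s$. Under this hypothesis, the set of sequences $\pia^1,\dots,\pia^{t-1}$ of \PO behavior policies collapses to the single obedient sequence. The history then has the same distribution as under $\calE_{t-1}$, so for every prior $\theta$ and action $a$ we get
\begin{equation*}
\EE{R_a - R_{b} \mid f_t,\sigma_t;\theta,\pip,\pia^1,\dots,\pia^{t-1},t} = \EE{R_a - R_{b} \mid f_t,\sigma_t;\theta,\pip,t,\calE_{t-1}}.
\end{equation*}
The base case $t=1$ is the same statement with an empty history.

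Given this identity, the two quantifiers ``all sequences'' and ``some sequence'' in \cref{definition:po-behavioral-policies} become vacuous. The \PO condition at time $t$ then says the chosen action $b=\pia^t(\sigma_t,f_t)$ is not strictly dominated in the sense of \cref{definition:ic-advice}.

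\textbf{Following is \PO.} If $b=\sigma_t$, then \SIC gives, for every $a\neq\sigma_t$, that the $\geq 0$ inequality fails for some $\theta$. So no $a$ dominates $\sigma_t$. (Equivalently, \SIC implies \IC, as in \cref{claim:sic-is-ic}.)

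\textbf{Any other choice is not \PO.} If $b\neq\sigma_t$, I take $a=\sigma_t$. The negation of the second \SIC clause applied to the alternative $b$ says $\EE{R_b-R_{\sigma_t}\mid\cdot}\le 0$ for all $\theta$. So $\sigma_t$ weakly dominates $b$ under every prior. For strictness, the negation of the first clause says it is not the case that $\EE{R_b-R_{\sigma_t}}\ge 0$ for all $\theta$. Hence some $\theta$ has $\EE{R_{\sigma_t}-R_b}>0$. Therefore $\sigma_t$ strictly dominates $b$, and $\pia^t$ is not \PO.

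These two cases establish uniqueness at time $t$ and close the induction.

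The main obstacle is the step aligning the \PO conditioning with the \SIC conditioning on $\calE_{t-1}$. In particular, the definitions are circular: \PO at time $t$ refers to \PO at earlier times. The induction resolves this, but care is needed about conditioning on messages $\sigma_t$ of positive probability, and about the fact that $f_t$ has full support so that the conditional expectations are well defined. I would state explicitly that behavior only matters on the support of $\pip^t(h_t)$, and interpret ``only'' up to that support.
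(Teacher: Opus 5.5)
Your proposal is correct and follows the same idea as the paper, which simply notes that \SIC advice dominates every alternative action in the sense of the \PO definition, so any \PO behavior policy must follow it. Your induction on $t$ collapses the quantification over earlier \PO sequences to the obedient sequence, and hence to conditioning on $\calE_{t-1}$. Together with your explicit check that obedience is itself \PO, it makes rigorous what the paper's one-line justification leaves implicit. One small refinement: the ``up to support'' caveat is best stated in terms of the joint support of $(\sigma_t, f_t)$, since with external information a message that has positive probability marginally can still be inconsistent with a particular $f_t$.
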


In other words, under a \SIC advice policy $\pip$, for any time $t$ there is a unique reasonable action for agent~$t$ for any external information $f_t$. Therefore, any alternative action is dominated by the principal's advice in the sense of \cref{definition:po-behavioral-policies}, and so any \PO behavior policy must follow the \SIC advice.

\smallskip
\noindent\textbf{Regret for General Policies w.r.t. \PO Behavior Policies.\ }
We define regret for a principal policy with respect to agents who use PO behavior policies. Given a principal policy $\pip$, the \emph{Pareto-optimal regret} of $\pip$ with respect to the collection of priors $\Theta$ is
\begin{equation}
\label{definition:pseudo-regret-po}
    \reg^{\PO}_T(\pip, \Theta) = \sup_{\{\pia^t\}_{t = 1}^T \text{ are $\PO$}} \sup_{\theta \in \Theta} \EE[\{R_a \sim \theta_a\}_{a \in [k]}]{ T \cdot \sup_{a \in  [k]} R_a - \EE[\pip]{\sum_{t = 1}^T  R_{\pia^t(\sigma_t, f_t)} }},
\end{equation}
where the regret is taken over the worst possible sequence of \PO behavior policies and the worst possible prior $\theta \in \Theta$. Similarly to \oSIC regret, \PO regret compares the expected cumulative reward obtained from the principal's policy to the expected cumulative reward of the best fixed action policy assuming that agents use \PO behavior policies.

\begin{remark}
    \cref{section:deterministic-behavioral-policies} explains our focus on deterministic \PO behavior policies.
\end{remark}

\section{Connections between (Strong) Incentive Compatibility and Pareto-optimality}
\label{section:separations}

In this section, we compare strong incentive compatibility (\SIC), incentive compatibility (\IC), and Pareto-optimality (\PO) across different information settings (full or external information) and regimes (Bayesian or non-Bayesian).
Our aim is to understand when different behavioral notions are (1) well-defined and feasible, and (2) permit exploration with sublinear regret. Complete proofs for this section appear in \cref{section:proofs-sic-ic-po}.

\subsection{Basic Relations between Regret Notions}
\label{section:relation-po-regret}

First, we directly compare \PO regret to \SIC regret and \IC regret. Some comparisons are immediate from inclusion between behavior policy classes, while others depend on the information setting faced by the agents. Recall that in \cref{observation:sic-ic-is-po} we saw that given a \SIC advice policy $\pip$, the notion of \SIC behavior policy coincides with the notion of a \PO behavior policy. This implies the following relation between \SIC regret and \PO regret.

\begin{observation}
\label{observation:sic-po}
    For any instance, we have $\inf_{\pip} \reg^{\PO}_T(\pip, \Theta) \leq \inf_{\pip \in \Pi_{\SIC}} \reg^{\SIC}_T(\pip, \Theta).$
\end{observation}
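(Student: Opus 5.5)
The plan is to prove the inequality pointwise: for every \SIC advice policy $\pip \in \Pi_{\SIC}(\Theta)$, show that $\reg^{\PO}_T(\pip, \Theta) = \reg^{\SIC}_T(\pip, \Theta)$. Since every \SIC advice policy is in particular a general principal policy, taking the infimum over the larger class on the left then gives the claim. If $\Pi_{\SIC}(\Theta)$ is empty, the right-hand side is $+\infty$ and there is nothing to prove.

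Fix $\pip \in \Pi_{\SIC}(\Theta)$. By \cref{observation:sic-ic-is-po}, at every $t$ and for every $f_t$, the set $\PO(t)$ consists of exactly one behavior policy, $\pia^t(\sigma_t, f_t) = \sigma_t$. By definition, every \SIC behavior policy also satisfies $\pia^t(\sigma_t, f_t) = \sigma_t$ on all \SIC advice. Because $\pip$ only ever sends \SIC advice along every history, the \SIC behavior policies agree with the advice on every message that is actually sent. Hence, on the support of the process, the sequence of \SIC behavior policies induces the same action sequence $a_t = \sigma_t$ as the unique \PO sequence.

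The joint law of $(\sigma_t, a_t, r_{a_t})_{t\le T}$ under any $\theta \in \Theta$ is therefore identical in the two cases. So the expectations inside the regret definitions coincide for every $\theta$. The outer suprema over behavior policies are suprema over sets that induce a single common trajectory law, so
\begin{equation*}
\reg^{\PO}_T(\pip, \Theta) = \reg^{\SIC}_T(\pip, \Theta).
\end{equation*}
This equality only needs the inequality $\le$, so it suffices to know that each \PO sequence induces the same trajectory law as the \SIC sequence.

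The main obstacle is the recursive definition of \PO (\cref{definition:po-behavioral-policies}), in which the conditioning ranges over all \PO sequences of earlier agents, whereas \SIC conditions on the compliance event $\calE_{t-1}$. I would handle this by induction on $t$. Assuming the earlier \PO sets are the singletons that follow the advice, the \PO conditioning at time $t$ reduces exactly to conditioning on $\calE_{t-1}$. The \SIC property of $\sigma_t$ then makes every other action strictly worse under every $\theta$, which pins $\PO(t)$ down to following the advice. This is precisely the content of \cref{observation:sic-ic-is-po}, which I would invoke rather than reprove.
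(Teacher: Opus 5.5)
Your proposal is correct and follows the paper's argument. The paper derives this observation directly from \cref{observation:sic-ic-is-po}: under a \SIC advice policy, the advice-following behavior is the unique \PO behavior and is also what every \SIC behavior policy does, so the two regrets coincide for that policy. It then takes the infimum over general policies, which form a larger class than $\Pi_{\SIC}(\Theta)$. One minor point: read literally, \cref{definition:strong-ic-advice} only makes $\sigma_t$ weakly better than every other action under all priors and strictly better under some, not strictly better under every prior. That Pareto-domination is exactly what your induction needs, so the argument is unaffected.
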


Note that the inequality above can be strict. \cref{claim:bayesian-external-information} shows that there exists an external-information instance and a principal policy with sublinear \PO regret, but any \SIC advice policy has linear \SIC regret (hence, $\inf_{\pip} \reg^{\PO}_T(\pip, \Theta) < \inf_{\pip \in \Pi_{\SIC}} \reg^{\SIC}_T(\pip, \Theta)$).
Therefore, sublinear \SIC regret implies sublinear \PO regret. In contrast, in the full information setting, we have a reverse relation with \PO: the optimal \PO regret upper bounds the optimal \IC regret. More precisely, a principal can design an \IC advice policy by recommending an action that some \PO behavior policy would take under the same history.

\begin{restatable}{theorem}{ICPOFullInfo}
\label{claim:ic-po-full-info}
    In the full information setting we have $\inf_{\pip \in \Pi_{\IC}} \reg^{\IC}_T(\pip, \Theta) \leq \inf_{\pip} \reg^{\PO}_T(\pip, \Theta).$
\end{restatable}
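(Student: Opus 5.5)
Given any general principal policy $\pip$, we construct an \IC advice policy $\pip'$ with $\reg^{\IC}_T(\pip', \Theta) \leq \reg^{\PO}_T(\pip, \Theta)$; taking the infimum over $\pip$ then gives the claim. The construction simulates $\pip$ against agents who use one fixed ``worst-case'' \PO behavior. Since there is no external information, $\calF_t = \{\emptyset\}$, so a \PO behavior policy $\pia^t$ is just a map from messages to actions. We define the policies inductively and let $\pip'$ recommend what the simulated \PO agent would do.

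\textbf{Step 1: pin down an adversarial \PO profile.} Since $T$ is finite and the $\sup$ in \eqref{definition:pseudo-regret-po} ranges over sequences of \PO behavior policies, we fix, for any $\varepsilon>0$, a sequence $\pia^1,\dots,\pia^T$ (each $\pia^t \in \PO(t)$) and a prior attaining the \PO regret of $\pip$ up to $\varepsilon$. In fact it is cleaner to argue pathwise. Consider any sequence of \IC behavior policies facing $\pip'$. We will show that the induced distribution over action/reward trajectories coincides with one induced by $\pip$ and some \PO profile. The \IC regret of $\pip'$ is then at most $\sup$ over \PO profiles of the regret of $\pip$, and no $\varepsilon$ is needed.

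\textbf{Step 2: define $\pip'$.} The principal keeps a private copy of $\pip$. At time $t$ with history $h_t$, it samples $\sigma_t \sim \pip^t(h_t)$ and sends the advice $\sigma'_t = \pia^t(\sigma_t) \in [k]$, where $\pia^t$ is a \PO behavior policy chosen for $\pip$ (fixed in advance for each $t$; such a policy exists since undominated actions exist among finitely many). The key claim is that $\sigma'_t$ is \IC advice for $\pip'$, i.e., no action $a$ dominates $\sigma'_t$ across all $\theta\in\Theta$ conditional on $\sigma'_t$ and $\calE_{t-1}$. Since $\pia^t(\sigma_t)$ is undominated conditional on the finer information $\sigma_t$, we need a garbling argument. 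Suppose $a$ dominated $\sigma'_t$ given $\sigma'_t$. Conditioning on $\sigma'_t = b$ averages the conditional expectations over all $\sigma_t$ with $\pia^t(\sigma_t)=b$. For $a$ to be weakly better for every $\theta$ on average yet $b$ undominated for each such $\sigma_t$, we need to rule out mixing. This is the delicate point: averaging can create domination when different $\sigma_t$ have different ``good priors''. To handle it, I would note that the \PO condition also quantifies over all \PO histories of earlier agents, and that on $\calE_{t-1}$ the history under $\pip'$ is exactly a history generated by $\pip$ with the \PO agents $\pia^1,\dots,\pia^{t-1}$. I would then argue by contradiction: $a \geq b$ for every $\theta$ on average while $b$ is undominated pointwise forces equality $\EE{R_a - R_b \mid \sigma_t;\theta}=0$ for every relevant $\sigma_t$ and $\theta$. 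This uses that, under a single $\theta$, an average of nonpositive-somewhere terms being $\geq 0$ requires each term to be $0$. That conclusion needs each undominated comparison to be $\le 0$ for that $\theta$, which holds in the Bayesian case. In the non-Bayesian case, I would instead choose $\pia^t$ to break ties deterministically per message so as to preserve the ordering, possibly by choosing among undominated actions the maximizer of a fixed strictly positive weighted sum of expectations over $\Theta$. A weighted-sum maximizer is undominated, and weighted-sum optimality survives averaging, which shows $\sigma'_t$ is undominated under $\pip'$. This scalarization trick is my intended resolution.

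\textbf{Step 3: compare regrets.} Under any \IC behavior the agents follow $\sigma'_t$, so the trajectory law under $\pip'$ equals that under $\pip$ with the \PO profile $\{\pia^t\}$. Hence for each $\theta$ the two regret expressions coincide, giving $\reg^{\IC}_T(\pip',\Theta)\le \reg^{\PO}_T(\pip,\Theta)$. The main obstacle is Step 2, namely showing that coarsening the message to the recommended action preserves undominatedness. In the multi-prior case this is handled by selecting \PO actions as scalarized maximizers (weights fixed per $t$). One must also check that such a selection is \PO in the sense of \cref{definition:po-behavioral-policies}, where dominance quantifies over earlier \PO profiles. Here I would fix the earlier profile to the constructed one and argue that domination over all profiles implies domination for this one, so an undominated-for-this-profile action cannot be dominated.
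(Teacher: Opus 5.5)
Your construction matches the paper's: simulate $\pip$ against one fixed sequence of \PO behavior policies, send $\pia^t(\sigma_t)$ as advice, and transfer the regret. You also correctly flag a point the paper's proof passes over. Its displayed identity conditions on the simulated message $\sigma_t$, whereas \IC must be checked conditional on the advice $\pia^t(\sigma_t)$ alone. With a single prior this is harmless by the tower property.

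Your multi-prior repair, however, fails as stated. Conditioning on the advice mixes the per-message conditional expectations with $\theta$-dependent coefficients:
\[
\EE{R_a-R_b \mid \pia^t(\sigma_t)=b;\theta}=\sum_{\sigma:\,\pia^t(\sigma)=b}\Pr[\theta]{\sigma_t=\sigma \mid \pia^t(\sigma_t)=b}\,\EE{R_a-R_b \mid \sigma_t=\sigma;\theta}.
\]
So optimality for a fixed weighted sum of conditional expectations does not survive coarsening. Suppose two messages mapped to $b$ have gaps $\EE{R_a-R_b\mid\sigma;\theta}$ equal to $(0.23,-0.25)$ and $(-0.07,0.05)$ under $(\theta_1,\theta_2)$. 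Then $b$ strictly wins the equal-weight sum at each message. But if the first message has conditional probability $0.9$ under $\theta_1$ and $0.1$ under $\theta_2$, the coarsened gaps are $0.2$ and $0.02$, so $a$ strictly dominates the advice. Such configurations are realizable once messages carry information about a previously played arm, since message probabilities then depend on $\theta$.

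What does work is to put the fixed weights on joint quantities: let $b$ maximize $\sum_{\theta} w_\theta \EE{R_b\mathbf{1}\{\sigma_t=\sigma\};\theta}$ under the simulated profile. Summing over the messages mapped to $b$ shows that $b$ maximizes $\sum_\theta w_\theta \Pr[\theta]{\pia^t(\sigma_t)=b}\,\EE{R_c \mid \pia^t(\sigma_t)=b;\theta}$ over $c\in[k]$. The weights here are positive for every prior under which the advice has positive probability, so $b$ is undominated given the advice. This device needs strictly positive summable weights, hence a countable $\Theta$, which the theorem does not assume.

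Second, your final step is invalid, and it is exactly where the paper's auxiliary \cref{claim:po-wrt-sequence-remains-po-wrt-all} is needed. Global domination of $b$ by $a$ means $a$ is weakly better for every prior and every earlier \PO profile, and strictly better for some pair. This yields the weak inequalities for your simulated profile, but the strict one may be witnessed only under a different profile. So ``undominated for this profile'' does not imply membership in $\PO(t)$.

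Concretely, take \cref{example:bayesian-full-info} with the policy from part (3) of \cref{claim:bayesian-full-information}, and let your simulated agent~1 play action~1 (the actions tie at $t=1$). For $\sigma_2>1/2$ the two actions tie under your profile, yet action~2 is strictly better under the profile in which agent~1 played action~2. A tie-broken choice of action~1 is therefore not in $\PO(2)$, and your Step 3 would compare against a profile outside the supremum defining $\reg^{\PO}_T(\pip,\Theta)$.

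The repair is the paper's argument. An action undominated for the fixed profile can only be globally dominated by actions that are also undominated for that profile. Global domination is a strict partial order on a finite set, so a globally maximal such action exists. In your scheme, the set of scalarized maximizers is closed under global domination: a global dominator is weakly better for every $\theta$ under your profile, hence also a maximizer. So you should pick an element of that set that is maximal for global domination. With both repairs, your Step 3 goes through.
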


Moreover, the inequality can be strict: \cref{claim:non-bayesian-full-information-no-po} establishes that there exists an instance and a \IC advice policy with sublinear \IC regret, but any principal policy has linear \PO regret (hence, $\inf_{\pip \in \Pi_{\IC}} \reg^{\IC}_T(\pip, \Theta) < \inf_{\pip} \reg^{\PO}_T(\pip, \Theta)$).

Therefore, in the full information setting, a principal policy with sublinear \PO regret implies an advice policy with sublinear \IC regret.
Later, we see that this implication can fail in the external-information setting.

\subsection{Bayesian Full Information Setting}
\label{section:bayesian-full-info}

In the classical Bayesian full information setting, we exhibit an instance in which strong incentive compatibility is infeasible, while incentive compatibility and Pareto-optimality allow for exploration with sublinear regret.

\begin{restatable}{theorem}{bayesianFullInfo}
\label{claim:bayesian-full-information}
    There exists a Bayesian full information instance, such that (1) there is no \SIC advice policy, (2) there exists an advice policy with sublinear \IC regret, and (3) there exists a general policy with sublinear \PO regret.
\end{restatable}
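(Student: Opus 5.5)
The natural witness is \cref{example:bayesian-full-info}: $k=2$, $R_1 = 1/2$ almost surely, $R_2 \sim \Unif{0,1}$, a single prior and no external information. The plan is to verify the three items in turn, using \cref{claim:ic-po-full-info} to obtain (2) from (3) if that is more convenient. I will nevertheless also give the direct \IC construction, since it is immediate.

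For (1), consider agent $1$. Whatever the advice policy, the principal's history at $t=1$ is empty, so the advice $\sigma_1$ is independent of the rewards. Conditioning on $\sigma_1$ therefore leaves $\EE{R_1 \mid \sigma_1} = \EE{R_2 \mid \sigma_1} = 1/2$. Take $\sigma_1 = 1$. The alternative action $a=2$ satisfies $\EE{R_2 - R_1 \mid \sigma_1} = 0 \geq 0$ under the unique prior, so the first disjunct in \cref{definition:strong-ic-advice} holds. Symmetrically, $\sigma_1 = 2$ is ruled out by $a=1$. Hence no advice is \SIC at $t=1$, and $\Pi_{\SIC}(\Theta) = \emptyset$.

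For (2), I take the advice policy that recommends action $2$ to agent $1$ and, from $t \geq 2$ on, recommends $\argmax_a r_a$ (both rewards are known once action $2$ has been played, since $r_1 = 1/2$ is deterministic). Advice $2$ at $t=1$ is \IC because the expectations tie, so action $1$ does not strictly dominate it. Under the event $\calE_1$, the agent at $t\ge2$ can infer $r_2$ from the advice: advice $2$ means $R_2 \geq 1/2$ and advice $1$ means $R_2 < 1/2$ (ties have measure zero). Thus the advice is the unique conditional maximizer, and in particular it is \IC. Under any \IC behavior policy every agent follows, so the regret is at most the loss of round $1$, which is $\EE{\max(R_1,R_2) - R_2} \le 1/2 = \bigo{1}$.

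For (3), I use a general policy that sends no informative message at $t=1$ and, for $t\ge2$, reveals the full history $h_t$ as its message. At $t=1$ both actions are \PO, so agent $1$ picks some action, possibly adversarially. If agent $1$ picks action $2$, all rewards are now known. If agent $1$ picks action $1$, then $r_2$ is still unknown, and I must argue that agent $2$ also has both actions undominated. This is the main obstacle, because the conditioning in \cref{definition:po-behavioral-policies} ranges over all \PO behavior sequences for earlier agents. However, once the message reveals $h_2 = ((1,1/2))$, agent $2$'s posterior on $R_2$ is still $\Unif{0,1}$, since agent $1$'s choice was independent of $R_2$ for every admissible $\pia^1$. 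So each expected difference is exactly $0$ under every prior and every earlier sequence, and neither action dominates the other. This tie can recur indefinitely: an adversarial \PO sequence may keep choosing action $1$, giving reward $1/2$ versus $\EE{\max(1/2,R_2)} = 5/8$, which is linear regret.

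So the plan for (3) must change. I would exploit the fact that the tie at $t=1$ is only broken by the agent, and that for $t\ge2$ the principal can instead send an uninformative message, or a message that, if the agent had previously chosen action $1$, still leaves a tie. Ties persist, which is exactly the danger. The resolution I would try first is to invoke \cref{claim:ic-po-full-info} in the opposite direction: it cannot yield (3). So I would instead change the instance to break the symmetry. Take $R_1 \sim \Unif{0,1}$ and $R_2 \sim \Bern{1/2}$. The a-priori expectations still tie, so (1) holds verbatim. After either action is played, the revealed reward makes the other action strictly better or strictly worse in expectation almost surely, so a history-revealing general policy forces exploration by $t=2$ whenever it is needed. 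This gives $\bigo{1}$ \PO regret, and (2) then follows from \cref{claim:ic-po-full-info} or from the direct argument above.
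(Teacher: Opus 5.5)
Your items (1) and (2) for \cref{example:bayesian-full-info} are correct and match the paper, and you are right that full history revelation fails on that instance. The genuine gap is in (3) after you switch to $R_1 \sim \Unif{0,1}$, $R_2 \sim \Bern{1/2}$: the history-revealing policy does not force exploration ``whenever it is needed.'' Take the \PO sequence in which agent~1 plays action~1, which is allowed because both prior means equal $1/2$. On the event $r_1 > 1/2$, every later agent sees a history in which action~2 was never sampled. Under every \PO sequence consistent with that history, $\EE{R_2 - R_1 \mid h_t} = 1/2 - r_1 < 0$, so action~1 strictly dominates and action~2 is never played. Yet $R_2 = 1$ with probability $1/2$. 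The \PO regret is therefore at least $(T-1)\cdot\frac{1}{2}\cdot\EE{(1-R_1)\mathbf{1}\{R_1 > 1/2\}} = (T-1)/16$, which is linear. This is exactly why \BIC exploration relies on withholding information: full transparency removes the asymmetry the principal needs. As a consequence, (2) cannot be routed through \cref{claim:ic-po-full-info} for your new instance either. The direct \IC argument must also be redone there, since $r_1$ is no longer known after round~1, though that part is easy.

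The missing idea is to use the quantifier over earlier agents' \PO policies in \cref{definition:po-behavioral-policies} to pool the two branches of agent~1's tie-break. You nearly had it when you mentioned ``a message that, if the agent had previously chosen action~1, still leaves a tie.'' Under \PO, a weak tie in one branch together with a strict gain in the other branch \emph{is} domination. The paper keeps the original instance and sends $\sigma_2 = r_2$ if agent~1 played action~2, and $\sigma_2 \sim \calU(1/2,1]$ if agent~1 played action~1. If $\sigma_2 > 1/2$, then $\EE{R_2 - R_1 \mid \sigma_2}$ equals $0$ under $\pia^1 \equiv 1$ and $\sigma_2 - 1/2 > 0$ under $\pia^1 \equiv 2$. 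So action~2 dominates action~1, and agent~2 is forced to explore it. If $\sigma_2 < 1/2$, only the branch $\pia^1 \equiv 2$ is consistent, and action~1 is known to be better. After round~2 all rewards are known in either branch, and recommending the argmax gives $\bigo{1}$ \PO regret; (2) then follows directly or from \cref{claim:ic-po-full-info}. So the change of instance was unnecessary. Alternatively, your new instance could be rescued by a Kremer-style threshold scheme in place of full revelation, but that needs its own incentive check against both possible choices of agent~1.
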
 %
\begin{proof}[Proof sketch]
    Consider the Bayesian full information instance in \cref{example:bayesian-full-info}. For (1), because neither action dominates the other for agent~1, the principal cannot send \SIC advice.
    This rules out any \SIC advice policy.

    For (3), when agents use \PO behavior policies, the above principal policy fails because the principal cannot force agent~1 to choose any specific action; both actions are reasonable for agent~1. In this case, the principal can send messages that do not necessarily contain an action. If agent~1 chooses action~2, the principal sends the observed reward. Otherwise, if agent~1 chooses action~1, the principal sends a uniformly random number in $(1/2, 1]$. This makes action~2 the only reasonable action for agent~2 by making action~1 strictly dominated given the message. Therefore, after the second agent, the principal explores all actions and can recommend the optimal action to all subsequent agents and achieve sublinear regret. Combining (3) with \cref{claim:ic-po-full-info} implies (2).
\end{proof}

\subsection{Bayesian External Information Setting}
\label{section:bayesian-external-info}

Next, we consider the effect of external information, when agents may observe additional signals unknown to the principal.
External information can make both \SIC and \IC advice infeasible.

In contrast, under \PO behavior policies (which are a weaker requirement from the agents than \IC behavior policies), general principal policies allow agents to select reasonable actions own their own, and thus always remain well-defined and can sometimes induce exploration in cases where even \IC advice policies cannot. We exhibit an instance in which a general principal policy achieves sublinear \PO regret, even though \oSIC advice policies cannot.

\begin{restatable}{theorem}{bayesianExternalInfo}
\label{claim:bayesian-external-information}
    There exists a Bayesian external information instance, such that (1) there is no advice policy with sublinear \SIC regret, (2) there is no advice policy with sublinear \IC regret, and (3) there exists a general policy with sublinear \PO regret.
\end{restatable}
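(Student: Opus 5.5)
The plan is to use the instance of \cref{example:bayesian-ext-info}: $R_1 \sim \Unif{0,1}$, $R_2 \sim \Bern{0.4}$, and each agent privately sees the full history with probability $1/2$. For (1) and (2) we formalize the argument of \cref{section:external-information}.

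\textbf{Impossibility for advice policies.} Agent~1 has no informative external signal. Since $\EE{R_1} = 0.5 > 0.4 = \EE{R_2}$, the only undominated advice is action~1, so $a_1 = 1$ and $R_1$ is revealed.

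Now take any later round $t$ at which action~2 has not yet been played. An informed agent $t$ knows $r_1$. On the event $r_1 < 0.4$, advising action~2 is \oSIC for the informed type. On the event $r_1 > 0.4$, action~1 strictly dominates action~2 for the informed type. The \oSIC conditions in \cref{definition:ic-advice,definition:strong-ic-advice} are required for every $f_t \in \calF_t$, and the principal cannot observe $f_t$.

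We argue by induction on $t$ that, on the event $r_1 > 0.4$, any \oSIC advice policy must advise action~1 at every step. The inductive step runs as follows. Suppose the history so far contains only action~1 with reward $r_1 > 0.4$, and the principal advises action~2 with positive probability. The informed realization $f_t$ equals that history, and that history has positive probability. Conditioned on $f_t$, action~1 strictly dominates action~2 regardless of $\sigma_t$, because $f_t$ already pins down $r_1$ and $R_2$ is independent of it. So advice~2 is not \IC, and a fortiori not \SIC.

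Hence under any \oSIC advice policy with compliant agents, action~2 is never explored when $r_1 > 0.4$. On the event $\{0.4 < r_1 < 1\} \cap \{R_2 = 1\}$, which has probability $0.6 \cdot 0.4 > 0$, each round then loses $1 - r_1$. This gives $\Omega(T)$ regret. Some care is needed here, because an \oSIC advice policy may not even exist: advising action~2 when $r_1 < 0.4$ is fine for the informed type but must also be undominated for the uninformed type. Either way, every advice policy in $\Pi_{\oSIC}$ has linear regret, which is what (1) and (2) claim (vacuously if the class is empty).

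\textbf{Sublinear \PO regret for a general policy.} We need messages that make action~2 the only undominated choice for both types at some round after $r_1 < 0.4$. The policy proceeds as follows:
\begin{itemize}
    \item After round~1, if $r_1 > 0.4$, send the message ``$r_1$''. Both types then pick action~1, which is the only undominated action. Whether action~2 should be explored on this event is a separate concern: it is suboptimal in expectation.
    \item If $r_1 < 0.4$, send the message ``$r_1$'' to agent~2. The uninformed agent then knows $r_1 < 0.4 = \EE{R_2}$ and the informed one knows the same, so action~2 is the unique \PO action. Its reward is then revealed.
    \item From then on, reveal the full history, so every \PO agent picks the best action.
\end{itemize}

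The main obstacle is the regret accounting on the event $r_1 > 0.4$, where action~2 may be better. Revealing $r_1$ makes action~1 undominated, so this event is never explored, and $\EE{\max(r_1, R_2) - r_1}$ is then a positive constant. That yields linear regret.

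The fix is a \cref{claim:bayesian-full-information}-style obfuscation, a Kremer-type scheme in which the message hides $r_1$. Concretely, at round~2 send a message that is ``explore'' whenever $r_1 < 0.4$, and also with small probability $p$ whenever $r_1 \ge 0.4$. The value $p$ is chosen so that the conditional mean of $R_1$ given ``explore'' is below $0.4$, which makes action~2 strictly better for the uninformed type.

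The informed type, however, knows $r_1$, and when $r_1 > 0.4$ will deviate. This is allowed under \PO: that type still picks action~1. Each round is therefore a fresh independent chance, with probability $1/2$, that the agent is uninformed, and each uninformed agent receiving ``explore'' pulls action~2. Repeating the obfuscated exploration over rounds $2, 3, \dots$, action~2 gets explored within $O(\log T / p)$ rounds in expectation on every event. This gives $o(T)$ regret.

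The delicate step is re-verifying the uninformed agent's incentive at later rounds. That agent must condition on the past behavior of informed agents, which is a non-Bayesian quantity: it depends on arbitrary \PO tie-breaking. I would handle this through the worst case over \PO behavior sequences in \cref{definition:po-behavioral-policies}, choosing $p$ small enough that the domination of action~1 by action~2 holds uniformly over all such sequences.
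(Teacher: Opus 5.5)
Your arguments for (1) and (2) match the paper's. Agent~1 must take action~1. On $r_1>0.4$, the informed realization of $f_t$ makes advice~2 dominated, so action~2 is never explored, and the event $\{r_1>0.4,\ R_2=1\}$ gives $\Omega(T)$ regret.

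\textbf{The gap is in (3).} Your repeated obfuscation scheme stops working at round~3, and shrinking $p$ cannot repair it.
\begin{enumerate}
\item On $r_1<0.4$ with ``explore'' sent at round~2, action~2 is the unique undominated action for \emph{both} types. The informed agent sees $r_1<0.4=\mathbb{E}[R_2]$, and the uninformed agent complies by your choice of $p$. So under every sequence of \PO behavior policies, action~2 is explored at round~2, and your policy then reveals the full history.
\item Hence from round~3 on, ``explore'' is sent only on unexplored histories with $r_1\ge 0.4$. An uninformed agent receiving it can infer this.
\item Since $\mathbb{E}[R_1\mid R_1\ge 0.4]=0.7>0.4=\mathbb{E}[R_2]$, action~1 strictly dominates action~2 and the message is ignored.
\end{enumerate}
Consequently, on $\{r_1\ge 0.4\}$ action~2 is explored with probability at most $p/2$, and only at round~2. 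On $\{0.4<r_1<1,\ R_2=1\}$ the regret is therefore linear.

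The obstruction is neither \PO tie-breaking (every agent's best response here is unique) nor the size of $p$. It is that your only ``cover'' for the exploration message, the unexplored low-$r_1$ histories, is used up after one round. More generally, informed agents deplete that cover at rate at least $1/2$ per round, while high-$r_1$ histories linger, so the posterior given ``explore'' drifts toward $r_1\ge 0.4$.

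The missing idea is the second phase of the \citet{kremer2014implementing} scheme, which is what the paper uses. It runs the three-step policy of \cref{section:kremer-advice-policy}. At step~3, action~2 is recommended both on the unexplored event $R_1\ge 0.799$ and on the already-explored event $\{R_1<0.799,\ R_2=1\}$. These explored ``good'' histories supply cover that does not disappear once action~2 has been sampled. This keeps the recommendation undominated for uninformed agents even though it is sent with probability one on the high-$r_1$ event. Each step is repeated until some agent complies: informed agents with large $r_1$ deviate, uninformed ones comply, so action~2 is explored on every event within constant expected time.

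In your terms: once action~2 is found to have $r_2=1$, keep sending the same ``explore'' token rather than the history, so that it pools with the genuine exploration message. You would still need to check that this pooled message stays undominated at later rounds, because histories in which informed agents deviated linger and shift the posterior. The paper addresses this through the mixture over phases $S_2,S_3,S_4$.
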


\begin{proof}[Proof sketch]
    Consider the Bayesian external information instance in \cref{example:bayesian-ext-info}. Agent 1 selects action~1 because $\EE{R_1} = 0.5 > 0.4 = \EE{R_2}$. If the realized reward $R_1 > 0.4$, action~2 is strictly dominated for an informed agent (since $\EE{R_2} = 0.4$). Because the principal cannot distinguish between informed and uninformed agents, the only advice satisfying \IC for all types is action~1. Consequently, if $R_1 > 0.4$, the principal never explores action~2. Since action~2 is optimal with probability $0.4$, the policy incurs linear \IC regret. By \cref{claim:sic-is-ic}, no advice policy achieves sublinear \SIC regret.

    However, there is a policy with sublinear \PO regret. From agent 2, if $R_1 > 0.4$, the principal recommends action~2 to explore with positive probability according to the algorithm described in~\citet{kremer2014implementing}.\footnote{Notice that while \citet{kremer2014implementing} state their results for continuous reward distributions with full support, they note that their results generalize to distributions that do not have full support and are no necessarily continuous.} While informed agents observing $R_1 > 0.4$ will reject an advice that was for action 2 (playing action~1 instead), uninformed agents find action~2 undominated given their knowledge and will follow the recommendation. Since uninformed agents arrive with constant probability ($1/2$), the principal repeats the recommendation until one accepts, ensuring exploration in constant expected time. Once $R_2$ is sampled, the principal recommends the  optimal action to all subsequent agents, achieving sublinear regret.
\end{proof}

We complement the result above with an instance in which no policy achieves sublinear \SIC, \IC or \PO regret. %

\begin{restatable}{theorem}{bayesianExternalInfoNoSublinear}
\label{claim:bayesian-external-information-no-sublinear}
    There exists a Bayesian external information instance, such that (1) there is no advice policy with sublinear \SIC regret, (2) there is no advice policy with sublinear \IC regret, and (3) there is no general policy with sublinear \PO regret.
\end{restatable}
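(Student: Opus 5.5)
The plan is to take \cref{example:bayesian-ext-info} and make the external information maximal: every agent privately observes the entire history of action-reward pairs with probability $1$. Concretely, $k = 2$, $R_1 \sim \Unif{0,1}$, $R_2 \sim \Bern{0.4}$, and $f_t = h_t$. The intuition is that the principal's knowledge is then a subset of every agent's knowledge, which is the reverse of case (a) in \cref{fig:information-asymmetry}. So the principal has no information asymmetry left to exploit, whatever messages it sends. One point to check is that this choice of $\theta^*_{\external,t}$ meets the full-support requirement; I would read that requirement relative to the set of histories that can actually arise.

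\textbf{Step 1: agent~1 plays action~1, and action~2 is never touched while it is dominated.} At $t=1$ the history is empty, so no message can carry information about $R_1$ or $R_2$. Hence $\EE{R_1} = 0.5 > 0.4 = \EE{R_2}$, and action~1 is the unique undominated action. Consequently every \PO behavior policy, and every \oSIC behavior policy, plays action~1.

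Now fix any $t \geq 2$ and any principal policy, general or advice. Suppose action~2 has not yet been played. The key fact is that $R_2$ is independent of everything agent $t$ conditions on, namely $(h_t, \sigma_t, t, \pip)$. This holds because $\sigma_t$ depends only on $h_t$ and the principal's internal randomness, and $h_t$ contains only $R_1$. Therefore
\begin{equation*}
    \EE{R_2 - R_1 \mid f_t, \sigma_t} = 0.4 - r_1 .
\end{equation*}
This identity holds for every choice of earlier \PO (or \oSIC) behavior policies, since conditioning on $f_t = h_t$ fixes the realized past play. So whenever $r_1 > 0.4$, action~1 strictly dominates action~2. By induction on $t$, action~2 is never selected on the event $R_1 > 0.4$.

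\textbf{Step 2: each of the three claims follows.}
\begin{itemize}
    \item[(3)] Under \PO behavior, Step~1 shows that on the event $R_1 > 0.4$ all agents play action~1, for every general policy.
    \item[(2)] An \IC advice policy must satisfy the \IC condition for every $f_t$. On $R_1 > 0.4$ it therefore cannot advise action~2, so it advises action~1, which \IC behavior follows. The same event applies.
    \item[(1)] No advice policy has sublinear \SIC regret: this follows from (2) and \cref{claim:sic-is-ic}. Alternatively, one can check directly that \SIC behavior also never plays action~2 here.
\end{itemize}

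\textbf{Step 3: the regret is linear.} On the event $\{R_1 \in (0.4, 0.9),\, R_2 = 1\}$, every round loses at least $0.1$. This event has probability $0.5 \cdot 0.4 = 0.2$. Hence every policy incurs regret at least $0.02\,T = \bigomega{T}$ in all three senses.

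The main obstacle is making the conditioning in Step~1 rigorous under the non-Bayesian \PO definition. That definition quantifies over all sequences of earlier \PO behavior policies, and the message could in principle encode past behavior. The point to nail down is that, given $f_t = h_t$, no message can shift the conditional law of $R_2$ away from its prior mean $0.4$. This is what makes the domination hold uniformly over all plausible worlds rather than only in expectation over them.
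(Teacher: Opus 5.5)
Your proposal is correct and is essentially the paper's proof. It uses the same instance with every agent observing the full history, agent~1 playing action~1, action~2 strictly dominated for every later agent on the event $R_1 > 0.4$ (so it is never explored regardless of the message), and a constant-probability event giving linear regret under all three behavior notions; your conditional-independence argument for $R_2$ given $(h_t, \sigma_t)$ and the explicit $0.02\,T$ bound just make rigorous the steps the paper's short proof leaves implicit.
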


\begin{proof}
    Consider the Bayesian external information instance in \cref{example:bayesian-ext-info}, but each agent observes the history with probability $1$, instead of $1/2$. Agent~1 chooses action 1 because it has maximal expected reward. If $R_1 > 0.4$, each subsequent agent will choose action 1 regardless of the principal's message. Hence, the principal does not explore action~2. Because this event has constant probability, regret is linear under all types of behavior policies.
\end{proof}

\subsection{Non-Bayesian Full Information Setting}
\label{section:non-bayesian-full-info}

Next, we turn to non-Bayesian instances, where agents and the principal may disagree about which prior governs the rewards.
We first describe an instance that has advice policy with sublinear \IC regret and a general principal policy with sublinear \PO regret, then show an instance in which \PO behavior policies can be too permissive even under full information, so no principal policy can achieve sublinear \PO regret.

\smallskip
\noindent\textbf{Both \IC and \PO explore.\ }
The instance in \cref{example:non-bayesian-full-info} shows that merely having multiple priors does not rule out \IC advice policies. Similar to \cref{claim:bayesian-full-information}, \SIC advice policies do not exist, while there is an advice policy with sublinear \IC regret and a general policy with sublinear \PO regret.

\begin{restatable}{theorem}{nonBayesianFullInfo}
\label{claim:non-bayesian-full-information}
There exists a non-Bayesian full information instance, such that (1) there is no \SIC advice policy, (2) there exists an advice policy with sublinear \IC regret, and (3) there exists a general policy with sublinear \PO regret.
\end{restatable}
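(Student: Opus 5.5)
We will use the instance of \cref{example:non-bayesian-full-info}: two actions, full information, and $\Theta = \{\theta_1, \theta_2\}$. Under $\theta_1$ we have $R_1 \sim \Bern{0.1}$ and $R_2 \sim \Bern{0.9}$; under $\theta_2$ the two distributions are swapped. The plan mirrors the proof of \cref{claim:bayesian-full-information}. We prove (1) directly, prove (3) with a message-based policy, and then obtain (2) from \cref{claim:ic-po-full-info}. We will also note that the explicit \IC advice policy described after \cref{example:non-bayesian-full-info} gives (2) directly.

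For (1), look at agent~1. Under $\theta_1$, $\EE{R_2 - R_1} = 0.8 > 0$; under $\theta_2$, $\EE{R_1 - R_2} = 0.8 > 0$. Whatever action $\sigma_1$ the principal advises, the other action $a$ has strictly larger expected reward than $\sigma_1$ under some prior. This violates the second clause of \cref{definition:strong-ic-advice}. So no \SIC advice exists at $t=1$, and hence no \SIC advice policy exists.

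For (3), the principal sends the empty message to agent~1, and both actions are \PO for agent~1. Let $a_1$ be the chosen action. From $t = 2$ on, the principal sends the realized reward $r_{a_1}$. First suppose $r_{a_1} = 1$. We then argue that $a_1$ is the unique undominated action for every later agent, under every prior and every sequence of earlier \PO behaviors. The reason is that rewards lie in $\{0,1\}$, so $R_{a_1}=1$ weakly dominates the other action under both priors. Strict dominance holds under the prior in which the other action has mean $0.1$ or $0.9$, since either value is $<1$. Since $a_1$ is optimal, regret is $0$ after step~1. Now suppose $r_{a_1} = 0$. By independence, conditionally on the message the other action $a'$ has expected reward $0.1$ or $0.9$ depending on the prior, while $a_1$ is known to yield $0$. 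So $a'$ strictly dominates $a_1$ under both priors. Agent~2 must play $a'$, both rewards are then known, and from $t = 3$ the principal announces both rewards, so every \PO agent plays an optimal action. In every case regret is at most $2$, which is $\littleo{T}$ uniformly over priors and \PO behavior sequences.

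The main obstacle is in (3): the conditioning in \cref{definition:po-behavioral-policies} ranges over all sequences of earlier \PO behavior policies. We must check that the message $r_{a_1}$ (together with $t$ and $\pip$) makes the conditional distribution of the unexplored reward the prior marginal, whatever agent~1's arbitrary tie-breaking was. This follows from the independence of $R_1$ and $R_2$ under each $\theta\in\Theta$ and from $a_1$ being a function of no reward information. One point needs care: when $r_{a_1}=1$, the message may not reveal which action $a_1$ is. To avoid this, the principal's message should include the pair $(a_1, r_{a_1})$. With that choice, the dominance computations above go through pointwise for every history, and combining (3) with \cref{claim:ic-po-full-info} yields (2).
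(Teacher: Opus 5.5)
Your proof is correct. Part (1) matches the paper. For (2), the paper gives the explicit advice policy you also mention; your derivation from (3) via \cref{claim:ic-po-full-info} is an equally valid and more modular route.

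The real difference is in (3). You disclose the history, sending $(a_1, r_{a_1})$ and later both rewards, so each agent's check reduces to comparing known rewards with prior means. The paper instead sends a pooled recommendation that hides $a_1$: $\sigma_t = 1$ if the principal has observed $R_1 = 1$ or $R_2 = 0$, and $\sigma_t = 2$ otherwise.

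Here is what each choice buys.
\begin{itemize}
\item Under your message, the event $\{\sigma_2 = (a_1, r_{a_1})\}$ has probability zero for the earlier behavior policy in which agent~1 played the other action. Your dominance check in \cref{definition:po-behavioral-policies} therefore tacitly quantifies only over earlier \PO sequences consistent with the received message. The paper relies on the same convention elsewhere (e.g., for the message $\sigma_2 < 1/2$ in \cref{claim:bayesian-full-information}), so this is not a gap, but you should state it.
\item The paper's pooled message has positive probability under every pair $(\theta, \pia^1)$. The recommended action is also strictly better under each pair: $\sigma_2 = 1$ means $R_1 = 1$ if $\pia^1 \equiv 1$, and $R_2 = 0$ if $\pia^1 \equiv 2$. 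So dominance holds literally over all \PO sequences, with no convention needed.
\item Your disclosure approach, in exchange, makes every verification a pointwise computation.
\end{itemize}

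A minor point: the ambiguity you flag is not specific to $r_{a_1} = 1$. If only $\sigma_2 = 0$ is sent, agent~2 cannot tell which action is known to have reward $0$, and neither action is dominated. Sending the pair fixes both cases.
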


\begin{proof}[Proof sketch]
    Consider the non-Bayesian full information instance in \cref{example:non-bayesian-full-info}.
    First, at time $t = 1$, both actions are reasonable: under prior $\theta_2$ action~1 is preferred, while under prior $\theta_1$ action~2 is preferred. Hence, no \SIC advice exists at time $t = 1$, so there is no \SIC advice policy. Next, consider the following \IC advice policy: the principal recommends action~1 to agent~1. This advice is \IC because no alternative action strictly dominates it across all priors. After observing $R_1 \in \{0, 1\}$, the principal recommends action~1 to subsequent agents if $R_1=1$ (which attains the maximal possible reward), and otherwise recommends action~2. If $R_1 = 0$, in which case action~2 strictly dominates action~1 under both priors. Therefore, by time $t = 2$ the principal identifies an action with maximal reward and recommends it to subsequent agents.

    For (3), consider the following policy for the principal: send an empty message $\sigma_1 = \emptyset$ to agent~1, who may choose any reasonable action (either action in this instance). Then for $t \geq 2$ the principal sends $\sigma_t = 1$ if the principal observes either $R_1 = 1$ or $R_2 = 0$. Otherwise, the principal sends $\sigma_t = 2$.
    Therefore, the principal achieves sublinear \PO regret.
\end{proof}

\smallskip
\noindent\textbf{\IC explores, but \PO does not.\ }
Even Pareto-optimality can be too weak to guarantee exploration in non-Bayesian regimes, even with full information. In particular, for the instance described in \cref{table:non-bayesian-no-po}, we show that \PO behavior policies may be too permissive, and the principal may not explore all actions. In contrast, if agents are using \IC behavior policies, a principal can use \IC advice policy to explore all actions. Hence, the \PO regret of any principal policy will be linear, despite the existence of an advice policy with sublinear \IC regret.

\begin{table}[tbp]
\centering
\caption{Non-Bayesian full information instance with two priors and three actions. We use $\calU\{S\}$ to denote the uniform distribution over a discrete set $S$.
}
\label{table:non-bayesian-no-po}
\begin{tabular}{@{}lcc@{}}
    \toprule
    \textbf{Reward} & \textbf{Prior $\theta_1$} & \textbf{Prior $\theta_2$} \\
    \midrule
    $R_1$ & $0$ & $0.5$ \\
    $R_2$ & $0.5$ & $0$ \\
    $R_3$ & $\calU\{0.25, 0.55\}$ & $\calU\{0.25, 0.55\}$ \\
    \bottomrule
\end{tabular}
\end{table}

\begin{restatable}{theorem}{nonBayesianFullInfoNoPO}
\label{claim:non-bayesian-full-information-no-po}
There exists a non-Bayesian full information instance, such that (1) there is no \SIC advice policy, (2) there exists an advice policy with sublinear \IC regret, and (3) there is no general policy with sublinear \PO regret.
\end{restatable}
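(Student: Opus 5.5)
We prove the three parts separately, all on the instance of \cref{table:non-bayesian-no-po}. The key quantities are the a-priori expected rewards $(\EE{R_a \mid \theta_1}, \EE{R_a \mid \theta_2})$. These equal $(0, 0.5)$ for action~1, $(0.5, 0)$ for action~2, and $(0.4, 0.4)$ for action~3. Since $R_1, R_2$ are deterministic given the prior, observing either of them reveals $\theta^*$. Moreover, $R_3$ has the same law under both priors, independent of $R_1, R_2$.

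\textbf{Part (1).} At $t = 1$ there is no history and no external information. None of the three points above weakly dominates another: actions~1 and~2 trade off across the two priors, and action~3 is strictly better than each of them under one prior and strictly worse under the other. So all three actions are undominated, and none strictly dominates the other two. By \cref{definition:strong-ic-advice}, no \SIC advice exists at $t=1$, so $\Pi_{\SIC}(\Theta) = \emptyset$.

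\textbf{Part (2).} We use the following advice policy. Recommend action~3 at $t=1$; this is \IC because it is undominated. Recommend action~1 at $t=2$ deterministically. Since the advice is independent of everything, agent~2's expectations are the a-priori ones, so action~1 is undominated and hence \IC. After $t=2$ the principal knows $R_3$ and, through $R_1$, the true prior, hence the full reward vector. For $t \ge 3$ it recommends $\argmax_a R_a$. This is one of three outcomes: action~2 (under $\theta_1$ with $R_3 = 0.25$), action~3 (when $R_3 = 0.55$), or action~1 (under $\theta_2$ with $R_3=0.25$).

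We check incentive compatibility case by case, conditioning on the recommendation under each prior for which it has positive probability. In each such case the recommended action is the reward maximizer. For example, conditioned on the recommendation ``3'', we have $R_3 = 0.55$ under both priors, which beats $0.5$ and $0$. Hence no alternative action has a positive expected gain under any such prior, and the advice is \IC.

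One technical point needs care: the convention for priors under which a given recommendation has zero probability. We would fix the convention that these priors impose no constraint, since the definition only requires that no action dominates, and a prior under which the conditional expectation is undefined contributes no strict improvement. With this, the \IC regret is at most $2$, which is sublinear.

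\textbf{Part (3).} This is the main step. We fix an arbitrary principal policy $\pip$ and build, by induction on $t$, a sequence of \PO behavior policies that never select action~3 on messages in the support of $\pip^t$. Suppose agents $1, \dots, t-1$ use such policies. Then along this sequence the history contains only $R_1$ or $R_2$. The message $\sigma_t$ is therefore conditionally independent of $R_3$, so
\begin{equation*}
    \EE{R_3 \mid \sigma_t; \theta, \pip, \pia^1, \dots, \pia^{t-1}, t} = 0.4 \quad \text{for both } \theta \in \{\theta_1, \theta_2\}.
\end{equation*}
Let agent $t$ choose whichever of actions~1 and~2 is undominated; at least one is. Neither of them can be dominated by the other, because their difference has opposite signs under the two priors. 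Nor can action~3 dominate them. The domination condition in \cref{definition:po-behavioral-policies} requires a weak improvement for \emph{all} \PO histories, including the present one. In the present history, $0.4 < 0.5$ means action~3 is strictly worse than action~1 under $\theta_2$ and strictly worse than action~2 under $\theta_1$. The action that fits the surviving prior (e.g.\ action~1 when $R_1=0.5$ has been revealed) is undominated in this sense. On zero-probability messages, the agent picks any \PO action; this does not affect realized play.

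The obstacle here is the self-referential quantification over \emph{all} \PO predecessor sequences in \cref{definition:po-behavioral-policies}. The induction handles it because we only ever need to exhibit \emph{one} \PO history (ours) that breaks the ``for all'' clause. Under this sequence action~3 is never pulled. With probability $1/2$ we have $R_3 = 0.55$, while actions~1 and~2 give at most $0.5$. The regret is therefore at least $\tfrac12 \cdot 0.05 \cdot T = \bigomega{T}$ for every $\pip$.
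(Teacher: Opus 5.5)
Your proof is correct. Parts (1) and (2) follow the paper's argument up to a cosmetic change. You recommend action~1 to agent~2 unconditionally, so agent~2's beliefs are just the prior. The paper instead recommends action~3 to agent~2 when $R_3 = 0.55$ and action~1 otherwise. Both choices cost $O(1)$ regret. You also make explicit the zero-probability convention that the paper's proof relies on implicitly: the recommendations ``1'' and ``2'' for $t \ge 3$ each occur under only one prior.

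The genuine difference is in part (3).
\begin{itemize}
\item The paper takes the adversarial PO sequence to be the constant policy ``always play action~3''. Under it the history depends only on $R_3$, so every message has the same distribution under both priors and $\mathbb{E}[R_3 \mid \sigma_t] \ge 0.25$. Each of actions~1 and~2 is worth $0$ under one prior, so action~3 is never dominated. The loss is charged to the event $R_3 = 0.25$, giving regret $T/8$.
\item You instead build a sequence, depending on the principal's policy, that never plays action~3. You use $\mathbb{E}[R_3 \mid \sigma_t] = 0.4 < 0.5$ and charge the loss to the event $R_3 = 0.55$, giving regret at least $T/40$.
\end{itemize}
Both rest on the same observation: the inductively constructed sequence is itself an admissible witness that breaks the ``for all PO predecessor sequences'' clause.

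The paper's witness is simpler. It is independent of the principal's policy and needs no bookkeeping about which prior makes $\sigma_t$ reachable. Yours needs exactly that bookkeeping.

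Your sentence that neither of actions~1 and~2 can be dominated by the other is only true when $\sigma_t$ is reachable under both priors. What actually makes the step work is your rule of picking the action that is optimal under a prior for which $\sigma_t$ is reachable along your own sequence. State that rule as the definition of agent~$t$'s policy rather than as an afterthought.

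In exchange, your route shows a complementary phenomenon. Agents who never touch the risky arm are just as PO, and just as damaging to exploration, as agents who always take it.
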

\begin{proof}[Proof sketch]
    Consider the non-Bayesian full information instance with rewards as described in \cref{table:non-bayesian-no-po}. \SIC advice is impossible because no single action is uniquely optimal for all priors at time $t = 1$.
    Next, observe that all actions are \IC advice at time $t = 1$. The principal first recommends action~3 to agent~1. If the realized reward satisfies $R_3 =0.55$, then action~3 is optimal and the principal recommends it to subsequent agents.
    Otherwise, the principal recommends action~1 to agent~2 (who is then indifferent between action~1 and action~2), determines the prior $\theta^*$ and recommends the optimal action to subsequent agents, which yields sublinear \IC regret.

    Regarding \PO, the constant behavior policy ``always play action~3'' is \PO  for every agent under any principal policy. With probability $1/2$ the reward for action~3 is $0.25$, so with constant probability the principal cannot guarantee selecting the action with highest reward. Thus, no principal policy can have sublinear \PO regret.
\end{proof}

\noindent\textbf{Relations between all regret notions.\ } To summarize the results of this section, \Cref{figure:comparison-of-regret} illustrates the hierarchical relationships between the corresponding regret notions, and \cref{table:summary-sic-ic-po-results} provides a setting-by-setting overview of our main results, indicating whether sublinear-regret policies of each type exist.

\section{Approximate Pareto-optimality}
\label{section:approximate-pareto-optimality}

In \cref{section:po} we defined Pareto-optimality to capture the notion of reasonable behavior, that is agents avoid actions that are dominated given their knowledge. We now relax this requirement to model approximately reasonable behaviors. Approximately reasonable behaviors model friction in the agent's choice---agents are not willing to switch to a better option if it does not yield a non-negligible improvement. For example, if agents are recommended a specific route, they might choose to stick with it for convenience's sake, rather than switching to a slightly faster alternative to save a single minute. See \cref{fig:eps-po-rewards}
for a visual example.

\begin{definition}[$\varepsilon$-Approximate Pareto-optimal Behavior Policies]
\label{definition:sequential-eps-po}
    Given a principal policy $\pip$, a behavior policy $\pia^t$ for agent $t$ is \emph{$\varepsilon$-approximate Pareto-optimal} (\hepsPO) for $\varepsilon \geq 0$ if for any given message $\sigma_t \in \Sigma_t$ and external information $f_t \in \calF_t$ there is no action $a \in  [k]$, such that for all priors $\theta \in \Theta$ and all sequences of $\varepsilon$-approximate Pareto-optimal behavior policies $\pia^1, \dots, \pia^{t - 1}$,
    \begin{equation*}
        \EE{R_{a} - R_{\pia^t(\sigma_t, f_t)} \mid f_t, \sigma_t ; \theta, \pip, \pia^1, \dots, \pia^{t - 1}, t} \geq \varepsilon,
    \end{equation*}
     and for some prior $\theta \in \Theta$ and some sequence of $\varepsilon$-approximate Pareto-optimal behavior policies $\pia^1, \dots, \pia^{t - 1}$
    \begin{equation*}
        \EE{R_{a} - R_{\pia^t(\sigma_t, f_t)} \mid f_t, \sigma_t ; \theta, \pip, \pia^1, \dots, \pia^{t - 1}, t} > \varepsilon.
    \end{equation*}
    Moreover, let $\hepsPO(t)$ be the set of all $\varepsilon$-approximate Pareto-optimal behavior policies for agent~$t$.
\end{definition}

Intuitively, a behavior policy $\pia^t$ is \hepsPO if the action $\pia^t(\sigma_t, f_t)$ is not \emph{$\varepsilon$-dominated}, or approximately reasonable, given that the previous agents $1, \dots, t - 1$ were using \hepsPO behavior policies.
Notice that if we set $\varepsilon = 0$, we recover \PO behavior policies (\cref{definition:po-behavioral-policies}).
Moreover, at time $t = 1$, we have $\PO(1) \subseteq \hepsPO(1)$ because there are no previous agents to quantify over. From $t \geq 2$ onward, however, the two notions can diverge because \hepsPO allows histories generated by approximately reasonable behavior policies of previous agents. The next result shows that this difference matters even in full information settings.

\begin{figure}[tbp]
    \centering
    \begin{tikzpicture}[
        scale=0.5,
        >=Stealth,
        action/.style={circle, draw=black, fill=gray!10, inner sep=1.8pt},
        dominated/.style={circle, draw=black, fill=gray!35, inner sep=1.8pt},
        po/.style={circle, draw=toastedorange, fill=toastedorange!12, line width=1.0pt, inner sep=1.8pt},
        epspo/.style={circle, draw=richblue, fill=richblue!20, line width=1.0pt, inner sep=1.8pt},
        label/.style={font=\small},
        note/.style={font=\small\bfseries, text=toastedorange},
        callout/.style={font=\small, align=left},
    ]

        \def\xmax{8.0}
        \def\ymax{8.0}
        \def\xgrid{8}
        \def\ygrid{8}

        \draw[step=1, gray!25, very thin] (0,0) grid (\xmax,\ymax);

        \draw[->] (0,0) -- (\xmax+0.3,0) node[right] {$\EE{R_a \mid \theta_1}$};
        \draw[->] (0,0) -- (0,\ymax+0.3) node[above] {$\EE{R_a \mid \theta_2}$};

        \foreach \x/\xlabel in {2/0.2,4/0.4,6/0.6,8/0.8} {
            \draw (\x,0) -- (\x,-0.1);
            \node[below, font=\small] at (\x,-0.1) {\xlabel};
        }
        \foreach \y/\ylabel in {2/0.2,4/0.4,6/0.6,8/0.8} {
            \draw (0,\y) -- (-0.1,\y);
            \node[left, font=\small] at (-0.1,\y) {\ylabel};
        }

        \node[dominated]    (a1) at (3.0,4.0) {};
        \node[dominated]    (a2) at (4.5,3.5) {};
        \node[epspo]        (a3) at (4.8,4.8) {};   %
        \node[po]           (a4) at (6.5,4.0) {};   %
        \node[po]           (a5) at (3.5,6.5) {};   %
        \node[po]           (a6) at (5.5,5.5) {};   %

        \node[label, above right=-4pt of a1] {$a_1$};
        \node[label, below right=-4pt of a2] {$a_2$};
        \node[label, below right=-4pt of a3] {$a_3$};
        \node[label, below right=-4pt of a4] {$a_4$};
        \node[label, below right=-4pt of a5] {$a_5$};
        \node[label, below right=-4pt of a6] {$a_6$};

        \def\qpad{1.0}      %

        \begin{scope}[on background layer]
            \foreach \p in {a3} {
                \path[fill=blue!40, fill opacity=0.15]
                  ($(\p.center)+(\qpad,\qpad)$) rectangle (\xgrid,\ygrid);

                \path[draw=blue!80!black, draw opacity=0.55, line width=0.5pt]
                  ($(\p.center)+(\qpad,\qpad)$) rectangle (\xgrid,\ygrid);
            }
        \end{scope}

        \begin{scope}[shift={(0.5,0.5)}]
            \node[po] at (0.25,1.75) {};
            \node[label, anchor=west, text=toastedorange] at (0.5,1.75) {Pareto-optimal};
            \node[epspo] at (0.25,1.0) {};
            \node[label, anchor=west, text=blue!80!black] at (0.5,1.0) {$\varepsilon$-Pareto-optimal};
            \node[dominated] at (0.25,0.25) {};
            \node[label, anchor=west] at (0.5,0.25) {Dominated action};
        \end{scope}
    \end{tikzpicture}
    \caption{Visualization of mean rewards for agent~1 under two priors with Pareto-optimal and $\varepsilon$-approximately Pareto-optimal actions, where $\varepsilon = 0.1$. For simplicity, we assume the full information setting and that the principal's signal $\sigma_1 = \emptyset$ is empty. Each action $a$ is mapped to the point $(\EE{R_a \mid \theta_1},\EE{R_a \mid \theta_2})$. Colored points denote \PO actions as in \cref{fig:po-rewards}. The blue point $a_3$ is not \PO, but it is $\varepsilon$-\PO because no other action is better by more than $\varepsilon$ for both priors. The blue shaded region contains any action that would be better than $a_3$ by at least $\varepsilon$ in expectation under both priors; in the picture it is empty, so $a_3$ is not $\varepsilon$-dominated.
    }
    \label{fig:eps-po-rewards}
\end{figure}

\begin{restatable}{theorem}{POepsPOIncomparable}
\label{proposition:po-to-epspo-sequential}
    Consider the full information setting and any $\varepsilon > 0$. At time $t = 1$, for any principal policy $\pip$, we have $\PO(1) \subseteq \hepsPO(1)$. However, there exists a principal policy $\pip$ and a problem instance, such that
    {$\PO(t) \not\subseteq \hepsPO(t)$ and $\hepsPO(t) \not\subseteq \PO(t)$} for some time $t > 1$.
\end{restatable}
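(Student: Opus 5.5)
The first claim follows directly from the two definitions at $t=1$, where there are no earlier agents to quantify over. In both \cref{definition:po-behavioral-policies} and \cref{definition:sequential-eps-po}, domination of the chosen action $b=\pia^1(\sigma_1,\emptyset)$ by an action $a$ then depends only on the gaps $\Delta_\theta = \EE{R_a - R_b \mid \sigma_1; \theta, \pip, 1}$ for $\theta\in\Theta$. Suppose $b$ is $\varepsilon$-dominated by $a$, so $\Delta_\theta \ge \varepsilon$ for all $\theta$ and $\Delta_\theta>\varepsilon$ for some $\theta$. Since $\varepsilon>0$, this gives $\Delta_\theta \ge 0$ for all $\theta$ and $\Delta_\theta>0$ for some $\theta$, so $b$ is dominated. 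The contrapositive is $\PO(1)\subseteq\hepsPO(1)$.

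For the separation at $t=2$ I would use a Bayesian full-information instance with two actions. Take $R_1 = c$ deterministic and $R_2\sim\Bern{c-\varepsilon/2}$, with $c$ chosen so that $\varepsilon/2 \le c$ and $1-c>\varepsilon$. For example, $c=\varepsilon/2$ works whenever $\varepsilon<2/3$; for larger $\varepsilon$ I would rescale or restrict, and I expect the statement is really meant for small $\varepsilon$. Agent~1 receives an empty message. Action~1 beats action~2 in expectation by exactly $\varepsilon/2$, so under \PO agent~1 must play action~1, while under \hepsPO both actions are allowed. At $t=2$ the principal sends the message $\sigma_2=h_2$, the observed pair $(a_1, r_{a_1})$.

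To show $\hepsPO(2)\not\subseteq\PO(2)$, consider the message $(1,c)$. In every admissible world (PO or $\varepsilon$-PO), agent~2 learns nothing about $R_2$, so action~1 beats action~2 by $\varepsilon/2$. This strict but sub-$\varepsilon$ advantage makes action~2 dominated, but not $\varepsilon$-dominated. Hence the policy that plays action~2 on this message, and is otherwise PO, lies in $\hepsPO(2)\setminus\PO(2)$.

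For $\PO(2)\not\subseteq\hepsPO(2)$, consider the message $(2,1)$. It can arise only if agent~1 deviated to action~2, which is permitted under \hepsPO but impossible under \PO. Under $\varepsilon$-PO worlds the message reveals $R_2=1$, so action~2 beats action~1 by $1-c>\varepsilon$ in every world, and playing action~1 on this message is $\varepsilon$-dominated. Under PO worlds no world generates this message. Hence the strict-domination clause ("for some prior and some sequence of PO policies, $>0$") can never be met, and every action is undominated. So the behavior policy that plays action~1 on $(2,1)$, and action~1 on $(1,c)$, lies in $\PO(2)\setminus\hepsPO(2)$.

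The main obstacle is exactly this off-path conditioning. My first observation is that, by induction over $t$, PO-worlds form a subset of $\varepsilon$-PO-worlds. Whenever a message has positive probability in some PO world, $\varepsilon$-domination then implies ordinary domination, so $\PO(t)\subseteq\hepsPO(t)$ on such messages. The non-inclusion must therefore come from messages that are null under every PO sequence. I would make explicit the convention that conditional expectations given a probability-zero message impose no constraint, so the quantified inequalities are vacuous. I would check that this reading is consistent with how the definitions were used earlier, e.g.\ in the proof sketch of \cref{claim:bayesian-full-information}, before finalizing.
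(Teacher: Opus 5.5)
Your mechanism matches the paper's. In a Bayesian full-information instance, $\PO(1)$ forces one action while $\hepsPO(1)$ also allows a slightly worse, risky action. The principal forwards agent~1's outcome. A high realization reachable only through that $\varepsilon$-PO deviation is then a PO-null message on which the PO-admissible action is $\varepsilon$-dominated. The paper uses three actions with $R_3\in\{0,0.5+2\varepsilon\}$ and forwards only the reward. Your two-action version also checks $\hepsPO(2)\not\subseteq\PO(2)$ at $t=2$, which the paper leaves implicit.

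\textbf{The gap.} The instance you actually commit to is degenerate, so the direction $\PO(2)\not\subseteq\hepsPO(2)$ fails for it. With $c=\varepsilon/2$ you get $R_2\sim\Bern{0}$, so $R_2=0$ almost surely and the message $(2,1)$ has probability zero in every world. By your own convention the clause ``for some sequence, $>\varepsilon$'' then cannot be met on $(2,1)$. So playing action~1 there is $\varepsilon$-PO, and your witness lies in $\hepsPO(2)$. In fact, for $c=\varepsilon/2$ both rewards are deterministic with gap $\varepsilon/2<\varepsilon$. Nothing is ever $\varepsilon$-dominated, $\hepsPO(2)$ contains every behavior policy, and no witness can exist.

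You need $c>\varepsilon/2$ strictly, e.g.\ $c=1/2-\varepsilon/4$ for $\varepsilon<2/3$. If you decouple the gap from $\varepsilon$ and take $R_2\sim\Bern{p}$ with $\max(0,c-\varepsilon)<p<c<1-\varepsilon$, the construction covers every $\varepsilon\in(0,1)$. Your doubt about large $\varepsilon$ is justified. With rewards in $[0,1]$, the clause $>\varepsilon$ is unsatisfiable for $\varepsilon\ge 1$, so $\hepsPO(t)$ contains every policy and the second claim is false there. The paper's own instance also needs $0.5+2\varepsilon\le 1$.

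\textbf{Smaller fixes.} First, the witness for $\hepsPO(2)\not\subseteq\PO(2)$ must be $\varepsilon$-PO, not merely PO, on the messages other than $(1,c)$. In the corrected instance it has to play action~2 on $(2,1)$, because action~1 there is $\varepsilon$-dominated even though it is (vacuously) PO.

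Second, the sentence ``PO-worlds form a subset of $\varepsilon$-PO-worlds'' is literally false, since it would give $\PO(2)\subseteq\hepsPO(2)$. What the induction actually gives is weaker: every PO sequence coincides, on the messages it actually generates, with some $\varepsilon$-PO sequence. That is all your heuristic needs.

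Third, the consistency check you plan would fail as you phrased the convention. In the proof of \cref{claim:bayesian-full-information} the message $\sigma_2$ is continuous, so every individual value has probability zero. Under ``probability-zero messages impose no constraint,'' every action would be PO for agent~2 there. State the convention as ``messages outside the support of the law of $\sigma_t$ in that world.'' For your finite-message instance the two readings coincide.
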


\begin{proof}
    At time $t = 1$, there are no previous agents, so the quantification over sequences of earlier behavior policies is vacuous in both \cref{definition:po-behavioral-policies} and \cref{definition:sequential-eps-po}. Therefore the only remaining constraints are with respect to $\Theta$, and we get $\PO(1) \subseteq \hepsPO(1)$ for every $\varepsilon \geq 0$.

    Now, we turn to the case $t > 1$. Consider a single prior with rewards: $R_1 = 0.5$, $R_2 = 0.5 + \varepsilon / 3$, and $R_3 = 0.5 + 2 \varepsilon$ with probability $1 / (4 \varepsilon + 1)$ and $R_3 = 0$ with probability $4\varepsilon / (4 \varepsilon + 1)$. At step 1, there is a single \PO behavior policy that chooses action~2. However, notice that a behavior policy that selects any of the actions at time $t = 1$ is \hepsPO.
    Now, consider the principal policy $\pip$ that sends the reward observed at time $t$ as the message for time $t + 1$. In the \PO case, the only \PO policy for any timestep $t$ given that the principal is following $\pip$ is to select action~2. However, in the \hepsPO case, if at time $t = 1$ agent~1 chose action~3 and the realized reward was $0.5 + 2 \varepsilon$, then action~2 cannot be \hepsPO at timestep $t = 2$ given that the observed message is $\sigma_2 = 0.5 + 2 \varepsilon$.
\end{proof}

We define \hepsPO regret similarly to \PO regret. Given a principal policy $\pip$, the \emph{$\varepsilon$-approximate Pareto-optimal regret} of $\pip$ with respect to the collection of priors $\Theta$ is
\begin{equation*}
    \reg^{\hepsPO}_T(\pip, \Theta) = \sup_{\{\pia^t\}_{t = 1}^T \text{ are $\hepsPO$}} \sup_{\theta \in \Theta} \EE[\{R_a \sim \theta_a\}_{a \in [k]}]{ T \cdot \sup_{a \in  [k]} R_a - \EE[\pip]{\sum_{t = 1}^T  R_{\pia^t(\sigma_t, f_t)} }}.
\end{equation*}
Namely, the regret is taken over the worst possible sequence of \hepsPO behavior policies and the worst possible prior $\theta \in \Theta$.

\cref{proposition:po-to-epspo-sequential} shows there exist (full information) instances where a \PO behavior policy for agent~$t$ is not an \hepsPO behavior policy for the same agent, and vice versa.
We show that this can significantly alter the achievable optimal regret.
There are scenarios where a principal policy can achieve sublinear \PO regret, while when agents use \hepsPO behavior policies any general principal policy cannot achieve sublinear \hepsPO regret.

\begin{restatable}{theorem}{POSublinearEpsPOLinear}
\label{claim:po-sublinear-epspo-linear}
    There exists a Bayesian full information instance, such that there is a principal policy with sublinear \PO regret, however, there is no principal policy with sublinear \hepsPO regret.
\end{restatable}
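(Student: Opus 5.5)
We will build a two-action Bayesian full information instance in which an $\varepsilon$-gap keeps exploration from ever being forced, while exact \PO still forces it.

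Take a single prior with $R_1 = 1/2$ deterministic and $R_2 \sim \Unif{0,1}$, as in \cref{example:bayesian-full-info}. Use $\varepsilon>0$ as fixed by the theorem's friction parameter.

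\textbf{Sublinear \PO regret.} This follows from the policy in the proof sketch of \cref{claim:bayesian-full-information}. If agent~1 plays action~2, the principal reveals $R_2$. Otherwise the principal sends a uniform number in $(1/2,1]$. Action~1 is then strictly dominated for agent~2, so agent~2 explores. Afterwards the principal reveals both rewards, and every \PO agent plays the better action. Regret is therefore $O(1)$.

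\textbf{Linear \hepsPO regret.} We show that the constant behavior ``always play action~1'' is \hepsPO at every $t$, for every principal policy. Fix $\pip$ and suppose every agent plays action~1. Then no agent ever pulls action~2, and the history reveals nothing about $R_2$. For the all-action-1 profile we get
\[
\EE{R_2 - R_1 \mid \sigma_t; \theta,\pip,\text{all-1}} = \EE{R_2}-1/2 = 0 < \varepsilon .
\]

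The definition, however, quantifies over \emph{all} sequences of earlier \hepsPO policies. To show action~1 is not $\varepsilon$-dominated by action~2, we need a single sequence of earlier \hepsPO policies under which the conditional advantage of action~2 is at most $\varepsilon$. The all-action-1 sequence is that witness, provided it is itself \hepsPO. Domination requires the inequality $\ge\varepsilon$ for all sequences, and this one gives $0$.

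The argument runs by induction on $t$:

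1. Assume the all-action-1 prefix consists of \hepsPO policies up to time $t-1$.
2. The conditional gap at time $t$ is then $0$ along that prefix, so action~1 is not $\varepsilon$-dominated.
3. Hence the all-action-1 prefix is also \hepsPO up to time $t$.

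There is a subtlety. The sup in the regret definition ranges over sequences all of whose members are \hepsPO. We must check that picking action~1 remains \hepsPO for every message $\sigma_t$, including messages with zero probability under the all-1 history. I will handle this either by the convention that conditioning is defined on supported messages, or by noting that off-path messages are uninformative about $R_2$ under the all-1 prefix.

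Given that all agents play action~1, the regret is
\[
T\,\EE{\max(1/2,R_2)-1/2} = T/8 ,
\]
which is linear.

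\textbf{Main obstacle.} The delicate step is the self-referential quantification in \cref{definition:sequential-eps-po}: the witness prefix must itself be \hepsPO. Along the way we must rule out that some other \hepsPO prefix forces the gap above $\varepsilon$ for all sequences. It does not, because domination needs the inequality for all sequences and the all-1 witness gives gap $0<\varepsilon$. The PO side does not suffer from this, because there the gap $1/2$ strictly dominates.

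If the exact-tie instance is problematic, I will perturb it to $R_1=1/2+\varepsilon/2$. Then action~1 is still not $\varepsilon$-dominated. The \PO side still works: agent~1 uniquely picks action~1, and the $(1/2+\varepsilon/2,1]$ signal trick forces agent~2 to explore.
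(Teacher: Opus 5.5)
Your argument is correct, and it takes a different route from the paper.

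\textbf{The paper's route.} The paper builds an $\varepsilon$-dependent instance: $R_1 \sim \Bern{1/2}$, and $R_2$ equal to $1/2-\varepsilon/2$ or $1/2+\varepsilon/4$ with equal probability. Under \PO, agent~1 uniquely plays action~1, and revealing $r_1$ forces exploration. The bad \hepsPO profile is ``always play action~2''. It survives because $\EE{R_1 - R_2 \mid \sigma_t} \le \varepsilon/2$ no matter what the principal reveals about $R_2$.

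\textbf{Your route.} You reuse \cref{example:bayesian-full-info}, and your bad profile plays the deterministic action~1. The principal then learns nothing, every message is independent of $R_2$, and the gap is exactly $0$. Your instance does not depend on $\varepsilon$, so it works for every $\varepsilon>0$ at once. It also makes the mechanism transparent: the signaling policy of \cref{claim:bayesian-full-information} forces agent~2 only through \emph{weak} domination. The gap is $0$ on the all-1 prefix and positive on the other prefix, and an $\varepsilon$-margin is exactly what destroys this. Your remark that on the \PO side ``the gap $1/2$ strictly dominates'' misdescribes this.

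\textbf{What the paper's route buys.} The paper's instance has a unique \PO action at $t=1$. Its \PO half is therefore elementary and does not rely on a knife-edge tie or on an earlier theorem.

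\textbf{Fix 1: the constant policy.} Your literal claim that ``always play action~1'' is \hepsPO for every principal policy can fail. Suppose a principal, once action~2 has been pulled, reveals $r_2$ in a message format never used on the all-1 history. At such a message with $r_2 > 1/2+\varepsilon$, the following holds:
\begin{itemize}
\item the all-1 prefix gives no witness, because it never produces this message;
\item every \hepsPO prefix under which the message can occur (for example, agent~1 playing action~2) has gap $r_2 - 1/2 > \varepsilon$;
\item so action~1 is $\varepsilon$-dominated there.
\end{itemize}
Neither of your proposed remedies works. Convention (a) excludes the all-1 prefix at exactly these messages. Remedy (b) is not meaningful, since these messages have probability zero under that prefix.

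The repair is easy. Let $\pia^t$ play action~1 on messages in the support of $\pip^t$ at the all-1 history, and any non-$\varepsilon$-dominated action elsewhere (one exists, since two actions cannot $\varepsilon$-dominate each other). By induction, this prefix generates the all-1 history almost surely, the on-path gap is $0$, and the regret is still $T/8$. The paper's ``always play action~2'' argument has the same issue.

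\textbf{Fix 2: drop the fallback.} Drop the fallback instance $R_1 = 1/2+\varepsilon/2$. There, agent~1's unique \PO action is action~1, so the all-1 prefix is the only \PO prefix. Every message is then independent of $R_2$, action~1 strictly dominates at every step, and the \PO regret is linear. The exact tie is essential to your construction: the signal trick works only because agent~2 cannot rule out that agent~1 pulled action~2.
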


\begin{proof}%
    Consider a Bayesian full information instance with two actions, such that $R_1 = \Bern{1/2}$ and $R_2 = 1/2 - \varepsilon/2$ with probability $1/2$ and $R_2 = 1/2 + \varepsilon/4$ with probability $1/2$. Action 2 has expected reward $1/2 - \varepsilon / 8$.

    In the \PO setting, agent~1 selects action~1. The principal then sends $\sigma_2 = r_1$ (the realized reward of $R_1$) to agent~2. If $r_1 = 0$, agent~2 explores action~2. Otherwise, agent~2 selects action~1 because it has the highest reward possible. The principal recommends the optimal action  to subsequent agents. This policy achieves sublinear \PO regret.

    In the \hepsPO setting, agent~1 can choose either action because their expected rewards are $\varepsilon$-close. The sequence of behavior policies that only select action~2 is \hepsPO. We show this by induction, and assume it is an \hepsPO behavior policy for times $1, \dots, t - 1$. Suppose at step $t$ selecting action~2 is not \hepsPO. By \cref{definition:sequential-eps-po} it must be dominated by more than $\varepsilon$ under every possible previous sequence of \hepsPO policies. However, regardless of the message that the principal sends, under the sequence of \hepsPO for times $1, \dots, t - 1$ this is not the case because the expected rewards differ by no more than $\varepsilon$. Therefore, the principal does not explore action~1 under some sequence of \hepsPO behavior policies. Because with probability 1/2 action~1 has reward 1, the \hepsPO regret is not sublinear.
\end{proof}

\section{Discussion and Future Work}
\label{section:conclusion}

In the current work, we extended the framework of incentivized exploration beyond the traditional Bayesian full information setting. We introduced three notions that capture different types of reasonable behavior: incentive compatibility, in which agents agree to follow any reasonable action; strong incentive compatibility in which agents only agree to follow advice only if it is the only reasonable action; and Pareto-optimality, in which agents choose any reasonable action.
We showed that while strong incentive compatibility often fails when agents are faced with tied actions, possess external information, or the instance is non-Bayesian, notions like incentive compatibility and Pareto-optimality remain viable and enable sublinear regret guarantees. In particular, Pareto-optimality remains well-defined in external information settings, allowing for efficient exploration.

Our main technical  contributions analyze the feasibility of efficient exploration under different classes of reasonable behavior. In the Bayesian regime with full information, we showed that ties between optimal actions make strong incentive compatibility impossible but incentive-compatible policies that explore efficiently do exists. In non-Bayesian instances, we proved that incentive-compatible advice policies can explore efficiently, even though any general policy has linear Pareto-optimal regret. Finally, in instances with external information where agents observe private data unknown to the principal, we constructed policies with sublinear Pareto-optimal regret despite the lack of incentive-compatible solutions.

Future research could characterize the general properties of external information structures that permit efficient exploration. It would also be natural to extend our results to long-lived agents, introducing strategic considerations where users might internalize the future value of information. Moreover, we leave in-depth investigation of approximate notions of reasonable behavior policies to future work (see \cref{section:per-step-approx-po,section:eps-ic-eps-po} for two different additional formalizations). Finally, analyzing Pareto-optimal behavior under stochastic rewards remains a key open problem.

%% file: app-notation.tex
\section{Notation and Symbols}
\label{section:notation}

\begin{table}[H]
\centering
\begin{tabular}{@{}ll@{}}
    \toprule
    \textbf{Symbol} & \textbf{Description} \\
    \midrule
    $T$ & Number of rounds (time horizon) \\
    $k$ & Number of actions (arms) \\
    $t$ & Current round, $t \in [T]$ \\
    $R_a$ & Reward random variable for action $a$ \\
    $r_a$ & Realized reward for action $a$ (unknown)
    \\
    $a_t$ & Action chosen by agent $t$ at time $t$ \\
    $\Theta$ & Collection of priors (known to all) \\
    $\theta$ & A prior in $\Theta$ \\
    $\theta^*$ & True prior (unknown) \\
    $\pip$ & Principal's policy \\
    $\pia^t$ & Behavior policy for agent $t$ \\
    $\Sigma_t$ & Set of permissible messages at time $t$\\
    $\sigma_t$ & Message sent to agent $t$ \\
    $f_t$ & External/private information available to agent $t$ \\
    $\calF$ & External information class \\
    $\calE_{t-1}$ & Event that agents $1, \dots, t-1$ followed their (S)IC advice \\
    \IC & Incentive Compatible \\
    \SIC & Strong Incentive Compatible \\
    \oSIC & Either IC or SIC, or both, depending on context \\
    \PO & Pareto Optimal \\
    \hepsPO & $\varepsilon$-approximately Pareto Optimal \\
    $\reg^{\calM}_T(\pip, \Theta)$ & Regret of policy $\pip$ w.r.t. collection of priors $\Theta$ \\ & under behavior policies of type $\calM$ \\
    \bottomrule
\end{tabular}
\caption{Notations and symbols used throughout the paper.}
\label{table:notation}
\end{table}

%% file: app-kremer-policy.tex
\section{Advice Policy for \cref{example:bayesian-ext-info-full-info}}
\label{section:kremer-advice-policy}

Below we design an advice policy for \cref{example:bayesian-ext-info-full-info} in the full information setting inspired by the algorithm presented by \citet{kremer2014implementing}. First, $\sigma_1 = 1$. Agent~1 follows the advice because this is the unique reasonable action (\SIC advice). At time $t = 2$, if $R_1 \geq 0.799$, the principal sends action~1 as advice. Otherwise, if $R_1 < 0.799$ the principal sends action~2 as advice. Notice that in either case, agent~2 follows the advice because it is the unique reasonable action (\SIC advice). In particular, we have
\begin{align*}
    \EE{R_1 - R_2 \mid \sigma_2 = 1}
    &=  \EE{R_1 - R_2 \mid R_1 \in [0.799, 1]} \\
    &=  0.8995 - 0.4 \\
    &=  0.4995 \\
    &>   0
\end{align*}
and
\begin{align*}
    \EE{R_2 - R_1 \mid \sigma_2 = 2}
    &=  \EE{R_2 - R_1 \mid R_1 \in [0, 0.799)} \\
    &=  0.4 - 0.3995 \\
    &=  0.0005 \\
    &>   0
\end{align*}
The situation for agent~3 is trickier. The principal sends the following advice
\begin{align*}
\sigma_3 =
\begin{cases}
    2 & \text{if } 0.799 \leq R_1 <1 \text{ or } (R_1 < 0.799 \text{ and } R_2 = 1) \\
    1 & \text{if } R_1 < 0.799 \text{ and } R_2 = 0
\end{cases}
\end{align*}
Observe that if the principal sends $\sigma_3 = 1$, then agent 3 follows the advice:
\begin{align*}
    \EE{R_1 - R_2 \mid \sigma_3 = 1}
    &=  0.3995 - 0 \\
    &=  0.3995 \\
    &>  0
\end{align*}
If the principal sends $\sigma_3 =2$, agent~3 follows the advice because in expectation we have:
\begin{align*}
    \EE{R_2 - R_1 \mid \sigma_3 = 2}
    &= \frac{\EE{R_2 - R_1 \mid R_1 \in [0.799, 1]} \cdot  \Pr{R_1 \in [0.799, 1]}}{\Pr{R_1 \in [0.799, 1]} + \Pr{R_2 \leq 0.799, R_2 = 1}} \\
    &\qquad\qquad + \frac{\EE{R_2 - R_1 \mid R_1 \leq 0.799, R_2 = 1} \cdot  \Pr{R_1 \leq 0.799, R_2 = 1}}{\Pr{R_1 \in [0.799, 1]} + \Pr{R_1 \leq 0.799, R_2 = 1}} \\
    &= \frac{(0.4 - 1.799 / 2) \cdot 0.201 + (1 - 0.799 / 2) \cdot (0.799 \cdot 0.4) }{0.5206} \\
    &= 0.1758 \\
    &> 0
\end{align*}
Therefore, in three steps the principal can guarantee to explore all actions and recommend the optimal action for the rest of the agents.

%% file: app-sic-example.tex
\section{Derivation of \cref{example:bayesian-full-info-ic-sic}}
\label{section:sic-example}

\exampleSicIc*

Agent~1 selects action~1 since the a-priori expected reward is higher, and it is therefore the only \oSIC advice. If $R_1 < 0.1$, the principal sends action 2 as advice to agent~2, and if $R_1 \geq 0.1$, the principal sends action 2 with probability $p$ (independent of everything else), and action~1 with probability $1 - p$.
Specifically, action 2 is \IC advice for $p \leq 1/81$, but it only becomes \SIC advice when $p < 1/81$. In particular, we have:
\begin{align*}
    \EE{R_2 - R_1 \mid \sigma_2 = 2}
    &=  \EE{R_2 - R_1 \mid R_1 \geq 0.1, \sigma_2 = 2} \Pr{R_1 \geq 0.1, \sigma_2 = 2} \\
    &\qquad\qquad + \EE{R_2 - R_1 \mid R_1 < 0.1} \Pr{R_1 < 0.1} \\
    &=  \frac{(0.1 - 0.55) \cdot p \cdot 0.9 + (0.1 - 0.05) \cdot 0.1}{0.9p + 0.1} \\
    &=  \frac{0.005 - 0.405 \cdot p}{0.9p + 0.1}
\end{align*}
The second line writes out the expectation when $\sigma_2 = 2$ as the portion that comes from $R_1 < 0.1$ and the portion that comes from $R_1 \geq 0.1$ and that the principal sent action 2 with probability p.
The last line is strictly bigger than zero when $0 \leq p < 1/81$, and it is exactly zero when $p = 1/81$.

%% file: app-deterministic-behavioral-policies.tex
\section{Why Are Behavior Policies Deterministic?}
\label{section:deterministic-behavioral-policies}

In \cref{section:behavior-policies} we defined a behavior policy $\pia^t$ for agent $t$ as a deterministic mapping from principal messages and external information to action. However, it is possible that agents would like to use mixed strategies, i.e. policies $\bar{\pi}_\mathbf{a}^t : \Sigma_t \times \calF \to \Delta([k])$. Below we show that any randomized behavior policy that is Pareto optimal, must be outputting a distribution supported on the set of deterministic Pareto optimal actions. This justifies restricting attention to deterministic behavior policies when studying Pareto-optimality in \cref{section:po}.

\begin{claim}[Randomized Pareto Optimal Behavior Policies are Supported on the Set of Deterministic Pareto Optimal Actions]
\label{claim:deterministic-po-suffices}
    Consider any principal policy $\pip$ and agent $t$ with message $\sigma_t$ and external information $f_t$. Let $\calA_{\PO}(\sigma_t, f_t) \subseteq [k]$ denote the set of deterministic actions that are Pareto optimal for agent $t$ given $(\sigma_t, f_t)$ according to \cref{definition:po-behavioral-policies}. Then a randomized \PO behavior policy $\pia^t$ outputs a distribution $p \in \Delta{[k]}$ over actions, such that $\supp(p) \subseteq \calA_{\PO}(\sigma_t, f_t)$.
\end{claim}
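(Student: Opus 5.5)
The plan is to argue by contradiction, using linearity of expectation in the agent's own mixing distribution. First we need the natural extension of \cref{definition:po-behavioral-policies} to randomized policies. A randomized $\pia^t$ that outputs $p\in\Delta([k])$ on $(\sigma_t,f_t)$ is \PO if no action $a\in[k]$ satisfies
\[
\sum_{b} p_b\,\EE{R_a - R_b \mid f_t,\sigma_t;\theta,\pip,\pia^1,\dots,\pia^{t-1},t} \geq 0
\]
for all priors $\theta\in\Theta$ and all sequences of earlier \PO policies, with strict inequality for some pair $(\theta,\text{sequence})$. We also need that the agent's own randomization is independent of the conditioning, so the conditional expectation of $R_{\pia^t(\sigma_t,f_t)}$ splits as $\sum_b p_b\,\EE{R_b\mid\cdot}$.

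For each "plausible world" $w=(\theta,\pia^1,\dots,\pia^{t-1})$, write $v_w(b)=\EE{R_b\mid f_t,\sigma_t;w,\pip,t}$. Pareto-optimality is then a statement about the vectors $(v_w(b))_w$ indexed by worlds. Suppose some $b_0\in\supp(p)$ is not in $\calA_{\PO}(\sigma_t,f_t)$. Then there is an action $a$ with $v_w(a)\ge v_w(b_0)$ for all $w$, and strict inequality for some $w_0$. Since the dominance relation on a finite set is acyclic, we may take $a$ to be a maximal dominator of $b_0$, and hence $a\in\calA_{\PO}(\sigma_t,f_t)$.

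The key construction is a deviation that keeps the mixture and moves the mass $p_{b_0}$ from $b_0$ to $a$. The resulting mixture $p'$ satisfies, for every $w$,
\[
\sum_b p'_b v_w(b) - \sum_b p_b v_w(b) = p_{b_0}\bigl(v_w(a)-v_w(b_0)\bigr) \ge 0,
\]
with strict inequality at $w_0$ because $p_{b_0}>0$. So $p'$ dominates $p$ in the Pareto sense.

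The main obstacle is that the definition of \PO only compares $p$ against pure actions $a$, not against other mixtures. Moving mass to $a$ shows $p$ is dominated by a mixture, which does not by itself contradict the definition. To close this gap I would argue that \PO should be defined against all (possibly randomized) alternatives, which is the natural reading for randomized policies.

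If the definition must stay restricted to pure deviations, the statement can genuinely fail. Take two worlds, actions $A=(1,0)$, $B=(0,1)$, $C=(0.4,0.4)$, and $D=(0.3,0.3)$. The mixture $\tfrac12 A+\tfrac12 D$ gives $(0.65,0.15)$, which no pure action dominates, yet $D$ is dominated by $C$. So I would state and use the mixed-alternative definition explicitly. With that definition, the construction above yields the contradiction and hence $\supp(p)\subseteq\calA_{\PO}(\sigma_t,f_t)$.

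Finally, since the worlds quantify over earlier policies, the argument should be carried out by induction on $t$. The inductive step takes the set of admissible earlier sequences as given, and the argument above is then pointwise in $w$ and goes through unchanged.
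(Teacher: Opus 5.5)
Your proof is correct and is the paper's argument: take a support action $b_0$ that some $a$ dominates, move its mass onto $a$, and use linearity in $p$ to show that the resulting $p'$ weakly improves on $p$ in every plausible world $(\theta,\pia^1,\dots,\pia^{t-1})$, and strictly in at least one. The paper's proof also reaches its contradiction from domination by the mixture $p'$, so it implicitly uses the mixed-alternative reading of \PO for randomized policies that you state explicitly, and your $A,B,C,D$ example correctly shows that this reading is needed.
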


\begin{proof}
    Suppose there exists a randomized \PO behavior policy $\pia^t$ that assigns positive probability to some action $\hat{a} \notin \calA_{\PO}(\sigma_t, f_t)$. Therefore, by definition, action $\hat{a}$ is not \PO, so there exists some action $a^* \in [k]$ such that for all priors $\theta \in \Theta$ and all sequences of \PO behavior policies we have $\pia^1, \ldots, \pia^{t-1}$, we have $\EE{R_{a^*} - R_{\hat{a}} \mid f_t, \sigma_t; \theta, \pip, \pia^1, \ldots, \pia^{t-1}, t} \geq 0,$ and for some prior $\theta \in \Theta$ and some sequence of \PO behavior policies $\pia^1, \ldots, \pia^{t-1}$, we have $\EE{R_{a^*} - R_{\hat{a}} \mid f_t, \sigma_t; \theta, \pip, \pia^1, \ldots, \pia^{t-1}, t} > 0.$

    We define utility functions indexed by prior and sequence of \PO behavior policies: for each $\theta \in \Theta$ and each sequence of \PO behavior policies $\pia^1, \ldots, \pia^{t-1}$, let
    \begin{equation*}
        u_{\theta, \pia^1, \ldots, \pia^{t-1}}(a) = \EE{R_{a} \mid f_t, \sigma_t; \theta, \pip, \pia^1, \ldots, \pia^{t-1}, t}.
    \end{equation*}
    Next, for a randomized policy $\pia^t$ that outputs a distribution $p$ over actions, the expected utility is
    \begin{equation*}
        \bar{u}_{\theta, \pia^1, \ldots, \pia^{t-1}}(p) = \EE[a \sim p]{u_{\theta, \pia^1, \ldots, \pia^{t-1}}(a)}.
    \end{equation*}

    Because $\hat{a}$ is dominated by $a^*$, we have $u_{\theta, \pia^1, \ldots, \pia^{t-1}}(a^*) \geq u_{\theta, \pia^1, \ldots, \pia^{t-1}}(\hat{a})$ for all utility functions, with strict inequality for at least one. Consider a modified distribution $p'$ that shifts all probability mass from $\hat{a}$ to~$a^*$. Then
    \begin{equation*}
        \bar{u}_{\theta, \pia^1, \ldots, \pia^{t-1}}(p') = \bar{u}_{\theta,\pia^1, \ldots, \pia^{t-1}}(p) + p(\hat{a}) \cdot \left(u_{\theta, \pia^1, \ldots, \pia^{t-1}}(a^*) - u_{\theta, \pia^1, \ldots, \pia^{t-1}}(\hat{a})\right) \geq \bar{u}_{\theta, \pia^1, \ldots, \pia^{t-1}}(p)
    \end{equation*}
    for all utility functions, with strict inequality for at least one. Therefore, the distribution $p$ produced by $\pia^t$ is dominated by another distribution $p'$ over undominated actions. This contradicts the assumption that the randomized policy using $p$ is Pareto optimal. Therefore, any randomized \PO behavior policy must have support contained in the set of deterministic Pareto optimal actions $\calA_{\PO}(\sigma_t, f_t)$.
\end{proof}

\begin{remark}
    The result in \cref{claim:deterministic-po-suffices} shows that if the behavior policy $\pia^t$ is a randomized, its output distribution must be supported on a subset of the set of deterministic Pareto optimal actions, i.e.\ $\calA_{\PO}(\sigma_t, f_t)$. However, it is possible to design a randomized behavior policy that outputs a distribution supported on a strict subset of $\calA_{\PO}(\sigma_t, f_t)$. This would not affect its Pareto-optimality.
\end{remark}

%% file: app-proofs.tex
\section{Proofs from \cref{section:separations}}
\label{section:proofs-sic-ic-po}

\subsection{Proof from \cref{section:relation-po-regret}}

\ICPOFullInfo*

\begin{proof}
    Given an arbitrary principal policy $\pip$, we will construct an \IC advice policy $\pip^{\IC}$ whose \IC regret is bounded by the \PO regret of $\pip$. Since we are in the full-information setting, we omit $f_t = \emptyset$ from equations below. Furthermore, full-information allows the principal to compute what action an agent would pick given any message sent by the principal.

    First, we need to consider a refinement of the notion of \PO behavior policies (\cref{definition:po-behavioral-policies}) with respect to a specific sequence of past agent behaviors. A behavior policy $\pia^t$ is \PO with respect to a sequence of \PO behavior policies $\{\pia^1, \dots, \pia^{t - 1}\}$ if there is no action $a \in [k]$, such that for all priors $\theta \in \Theta$
    \begin{equation*}
        \EE{R_{a} - R_{\pia^t(\sigma_t)} \mid \sigma_t ; \theta, \pip, \pia^1, \dots, \pia^{t - 1}, t} \geq 0,
    \end{equation*}
    and for some prior $\theta \in \Theta$
    \begin{equation*}
        \EE{R_{a} - R_{\pia^t(\sigma_t)} \mid  \sigma_t ; \theta, \pip, \pia^1, \dots, \pia^{t - 1}, t} > 0.
    \end{equation*}
    We will also need the following auxiliary claim. The proof of \cref{claim:po-wrt-sequence-remains-po-wrt-all} is deferred to immediately after the proof of \cref{claim:ic-po-full-info}.

    \begin{proposition}
    \label{claim:po-wrt-sequence-remains-po-wrt-all}
        There exists a sequence of \PO behavior policies $\pia = \{\pia^t\}_{t=1}^T$, such that $\pia^t$ is also \PO with respect to the specific sequence of past policies $\{\pia^1, \dots, \pia^{t - 1}\}$.
    \end{proposition}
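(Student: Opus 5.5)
We will build the sequence $\pia = \{\pia^t\}_{t=1}^T$ by forward induction on $t$, choosing each $\pia^t$ greedily as a best response to the already-fixed past behaviors. The goal is that $\pia^t$ is both \PO with respect to the realized prefix $\{\pia^1,\dots,\pia^{t-1}\}$ and \PO in the sense of \cref{definition:po-behavioral-policies}. We work in the full information setting, so $f_t=\emptyset$ throughout.

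\textbf{Step 1: construction.} Suppose $\pia^1,\dots,\pia^{t-1}$ are fixed and each is \PO. Then for every message $\sigma_t$ and every prior $\theta$, the conditional expectations $u_\theta(a) = \EE{R_a \mid \sigma_t;\theta,\pip,\pia^1,\dots,\pia^{t-1},t}$ are well defined (on positive-probability messages). We call this utility profile the \emph{specific profile}.
\begin{itemize}
    \item Since $[k]$ is finite and the dominance relation induced by the vectors $(u_\theta(a))_{\theta\in\Theta}$ is a strict partial order, there is at least one undominated action.
    \item More usefully, for a fixed reference prior $\theta_0 \in \Theta$, any maximizer of $u_{\theta_0}$ that is lexicographically refined to be undominated within the set of maximizers is undominated overall.
    \item Define $\pia^t(\sigma_t)$ to be such an action. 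By construction, $\pia^t$ is \PO with respect to the specific sequence.
\end{itemize}

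\textbf{Step 2: \PO with respect to all sequences.} We must check that $\pia^t$ is \PO in the sense of \cref{definition:po-behavioral-policies}. That definition declares an action dominated only if some $a$ is weakly better for \emph{all} priors and \emph{all} \PO prefixes, and strictly better for some prior and some prefix.
\begin{itemize}
    \item The specific prefix $\pia^1,\dots,\pia^{t-1}$ is itself a sequence of \PO policies, by the induction hypothesis.
    \item Suppose $a$ dominated $\pia^t(\sigma_t)$ in the general sense. Then in particular $u_\theta(a) \ge u_\theta(\pia^t(\sigma_t))$ for all $\theta$ under the specific prefix.
    \item Undominatedness under the specific profile then forces equality for all $\theta$ under that prefix. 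A strict gain would therefore have to occur under a different prefix.
\end{itemize}

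\textbf{Step 3: the main obstacle.} The weak-inequality-everywhere condition does not by itself rule out a strict gain under some other prefix. So undominatedness under one profile does not immediately give undominatedness in the general sense, and this is where we expect the difficulty to lie. To handle it, we strengthen the choice in Step 1 by tie-breaking across prefixes:
\begin{itemize}
    \item Among actions that are undominated under the specific profile, we pick one that is also undominated with respect to the general relation, restricted to actions tied with it under the specific profile.
    \item The general dominance relation (weak everywhere, strict somewhere) is itself a strict partial order on the finite set $[k]$. It is irreflexive, and it is transitive because weak inequalities compose and a strict inequality survives composition with weak ones.
    \item Hence it has maximal elements within any nonempty subset. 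We choose $\pia^t(\sigma_t)$ maximal for the general relation inside the set of specific-profile-undominated actions.
\end{itemize}

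\textbf{Step 4: closing the gap.} Take any $a$ that generally dominates this choice.
\begin{itemize}
    \item By Step 2, $a$ ties with $\pia^t(\sigma_t)$ for every $\theta$ under the specific profile.
    \item Therefore $a$ is also specific-undominated: anything specifically dominating $a$ would specifically dominate $\pia^t(\sigma_t)$.
    \item So $a$ lies in the subset from Step 3, which contradicts the maximality of $\pia^t(\sigma_t)$ there.
\end{itemize}
This closes the induction. Measure-zero messages can be assigned any generally undominated action, since these do not affect the specific conditional expectations.
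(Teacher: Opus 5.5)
Your proposal is correct and follows essentially the same route as the paper. Both restrict to the set $\calA_t$ of actions undominated under the specific prefix $\pia^1,\dots,\pia^{t-1}$, then show that any action generally dominating a member of $\calA_t$ must itself lie in $\calA_t$ (you via the tie argument, the paper via transitivity of weak dominance through a specific dominator), and finally pick an element of $\calA_t$ that is maximal for the general dominance order on the finite action set; your explicit check that general dominance is a strict partial order and your handling of zero-probability messages make this slightly tighter than the paper's chain-of-dominations sketch, and the lexicographic-maximizer bullet in Step 1 can be dropped because Step 3 supersedes it.
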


    Let $\pia = \{\pia^t\}_{t=1}^T$ be the sequence promised by \cref{claim:po-wrt-sequence-remains-po-wrt-all}. We define the candidate \IC advice policy $\pip^{\IC}$ as follows: at each timestep $t$, given history $h_t$, the principal samples a message $\sigma_t \sim \pip^t(h_t)$, computes the action $a_t = \pia^t(\sigma_t)$, and sends $a_t$ as the message. In particular, $\pip^{\IC}$ runs $\pip$ on the sequence of \PO behavior policies $\{\pia^1, \dots, \pia^T\}$.

    Next, we show by induction that each recommended action $a_t$ is \IC advice, and so agent $t$ who uses any \IC behavior policy follows it. For $t = 1$, agent 1 follows the principal's recommendation because it is already Pareto-optimal. Now assume this is true for agents $1, \dots, t - 1$. Recall that $\pia^t$ is a \PO behavior policy with respect to $\{\pia^1, \dots, \pia^{t - 1}\}$, then the recommended action $a_t$ is not dominated with respect to the expected rewards across all $\theta \in \Theta$ given the simulated message $\sigma_t$. As a result, a compliant agent has no strictly dominating alternative and the action $a_t$ is \IC advice. Formally, for any action $a \in [k]$ we have that for any prior $\theta \in \Theta$:
    \begin{align*}
        \EE{R_{a} - R_{\pia^t(\sigma_t)} \mid \sigma_t ; \theta, \pip^{\IC}, t, \calE_{t - 1}}
        &= \EE{R_{a} - R_{\pia^t(\sigma_t)} \mid \sigma_t ; \theta, \pip, \pia^1, \dots, \pia^{t - 1}, t},
    \end{align*}
    where $\calE_{t - 1}$ is the event that previous agents followed their \IC advice, which is guaranteed by the inductive hypothesis. Then $\pia^t(\sigma_t)$ is \IC advice because $\pia^t$ is \PO with respect to $\{\pia^1, \dots, \pia^{t - 1}\}$.
    Therefore, an agent following an \IC behavior policy will follow the recommended action.

    Finally, we can safely relate the \IC regret of $\pip^{\IC}$ to the regret of $\pip$ with respect to $\pia = \{\pia^t\}_{t = 1}^T$:
    \begin{align*}
        \reg_T(\pip^{\IC}, \Theta)
        &= \sup_{\theta \in \Theta} \EE[\{R_a \sim \theta_a\}_{a \in [k]}]{ T \cdot \sup_{a \in  [k]} R_a - \EE[\pip^{\IC}]{ \sum_{t = 1}^T R_{a_t} } } \\
        &= \sup_{\theta \in \Theta} \EE[\{R_a \sim \theta_a\}_{a \in [k]}]{ T \cdot \sup_{a \in  [k]} R_a - \EE[\pip]{ \sum_{t = 1}^T R_{\pia^t(\sigma_t)} } }.
    \end{align*}
    Observing that (1) $\pia^t \in \PO(t)$ for every timestep $t$ and (2) \PO regret takes the supremum over all \PO behavior policies, we get $\reg^{\IC}_T(\pip^{\IC}, \Theta) \leq \reg^{\PO}_T(\pip, \Theta)$.
    The upper bound holds for an arbitrary principal policy $\pip$. Taking infimum over principal policies on both sides yields the desired inequality:
    \[
        \inf_{\pip^{\IC} \in \Pi_{\IC}} \reg^{\IC}_T(\pip^{\IC}, \Theta) \leq \inf_{\pip} \reg^{\PO}_T(\pip, \Theta).
    \]
\end{proof}

\begin{proof}[Proof of \cref{claim:po-wrt-sequence-remains-po-wrt-all}]
    We inductively construct such a sequence. Let $\pia^1 \in \PO(1)$ be any \PO behavior policy for agent 1. Observe that it is trivially \PO with respect to any past sequence of \PO behavior policies; there is none of them. Now suppose the sequence $\{\pia^1, \dots, \pia^{t - 1}\}$ has the desired property. Next, we show that there exists some behavior policy $\pia^t \in \PO(t)$ that is also \PO with respect to $\{\pia^1, \dots, \pia^{t - 1}\}$. We do this by explicitly specifying $\pia^t(\sigma_t)$ for each $\sigma_t$.

    Consider the set $\calA_t$ of all actions that are \PO with respect to $\{\pia^1, \dots, \pia^{t - 1}\}$ given a message $\sigma_t$. When we lift to evaluating Pareto-optimality it with respect to any sequence of past \PO behavior policies, any action $a \in \calA_t$ can only be dominated by some other action $a' \in \calA_t$, and at least one of them must remain undominated by all others; otherwise, we get a contradiction by following any chain of dominated actions. Indeed, suppose there was some other action $a'' \not\in \calA_t$ that dominated $a$ in the sense of \cref{definition:po-behavioral-policies}, then for all priors $\theta \in \Theta$ and all \PO sequences $\hat{\pi}_{\bfa}^1, \dots, \hat{\pi}_{\bfa}^{t - 1}$
    \begin{equation*}
        \EE{R_{a''} - R_{a} \mid  \sigma_t ; \theta, \pip, \hat{\pi}_{\bfa}^1, \dots, \hat{\pi}_{\bfa}^{t - 1}, t} \geq 0,
    \end{equation*}
    and for some prior $\theta \in \Theta$ and some \PO sequence $\hat{\pi}_{\bfa}^1, \dots, \hat{\pi}_{\bfa}^{t - 1}$
    \begin{equation*}
        \EE{R_{a''} - R_{a} \mid  \sigma_t ; \theta, \pip, \hat{\pi}_{\bfa}^1, \dots, \hat{\pi}_{\bfa}^{t - 1}, t} > 0.
    \end{equation*}
    Fixating on the sequence $\{\pia^1, \dots, \pia^{t - 1}\}$, we have $\EE{R_{a''} - R_{a} \mid  \sigma_t ; \theta, \pip, \pia^1, \dots, \pia^{t - 1}, t} \geq 0$ for all priors $\theta \in \Theta$. However, $a'' \not\in \calA_t$, yet it is at least as good as $a$ under any prior, then we must have that $a \not\in \calA_t$. This is a contradiction. Therefore, there must exist some action $a \in \calA_t$ that remains \PO even when evaluated against all possible previous sequence of \PO behavior policies. We take $\pia^t(\sigma_t) = a$. Therefore, overall $\pia^t$ constitutes a valid \PO behavior policy that is also \PO with respect to the sequence of previously constructed behavior policies $\{\pia^1, \dots, \pia^{t - 1}\}$.
\end{proof}

\subsection{Proof from \cref{section:bayesian-full-info}}

\bayesianFullInfo*

\begin{proof}
    Consider the Bayesian full information instance in \cref{example:bayesian-full-info}. We derive each statement separately.

    \paragraphname{(1)\ \ Strong Incentive Compatibility}
    For the $t=1$, there is no unique dominant action since $\EE{R_1} = \EE{R_2} = 1/2$. Therefore, the principal cannot send any \SIC advice.

    \paragraphname{(2)\ \ Incentive Compatibility}
    Consider the following policy $\pi$. First, the principal sends the advice $\sigma_1 = \{2\}$ to agent~1. They follow the advice because in the absence of external information both actions have the same expected reward. Recall that $R_1 = 1/2$ with probability 1, so the reward for action 1 is fully known by the principal at the end of the first iteration. Then, for $t > 1$ the principal sends $\sigma_t = \argmax_{a \in \{1, 2\}} R_{a}$ which is \IC advice by definition.
    Because the principal learns the optimal action after $t=1$, the principal policy incurs regret $\reg_T^{\IC}(\pi,\{\theta\})\leq 1/2=\littleo{T}$.

    \paragraphname{(3)\ \ Pareto Optimality}
    Consider the following policy $\pi$. At time $t = 1$, the principal sends an empty message. Because both actions have the same expected reward, both are reasonable and agent 1 chooses either one. If agent 1 selected action 2, the principal sends $\sigma_2 = R_2$. If agent 1 selected action 1, then the principal sends $\sigma_2 \sim \mathcal{U} (1/2, 1]$. If agent 2 receives $\sigma_2 < 1/2$, it means that action 2 is strictly dominated by action 1, so they select action 1.

    Now, if $\sigma_2 > 1/2$, then agent 2 faces uncertainty over what choice agent 1 made. The principal could have sent $\sigma_2 > 1/2$, even if agent 1 selected action 1.
    However, if agent 1 selected action 2 (which is a reasonable action for them) and $\sigma_2 > 1/2$, then $R_2 > 1/2$, and so $\EE[R_2 \sim \Unif{0,1}]{R_2 \mid R_2 \geq 1/2} > 1/2=R_1$. Hence, $\pi_2^2$ is the only reasonable action of agent 2.
    By the end of iteration 2, the principal knows all the reward realizations and sends the optimal action as $\sigma_t$ for every $t \geq 3$, which is guaranteed to be the only reasonable action and incurs regret $\reg_T^{\IC}(\pi, \{\theta\}) \leq 1 = \littleo{T}$.
\end{proof}

\subsection{Proofs from \cref{section:bayesian-external-info}}

\bayesianExternalInfo*

\begin{proof}
    Consider the Bayesian external information instance in \cref{example:bayesian-ext-info}. We derive each statement separately. For this result, we first establish the Incentive Compatibility portion, and then derive the Strong Incentive Compatibility statement as a corollary.

    \paragraphname{(2)\ \ Incentive Compatibility}
    At $t=1$, the first agent has no external information. Since $\EE{R_1} = 0.5 > 0.4 = \EE{R_2}$, the unique undominated action is action~1. Thus, the only \IC advice is $\sigma_1 = 1$, and the principal must recommend it.
    For any $t > 1$, agents may receive external information $f_t$ that reveals the past history.
    Recall that an advice $\sigma_t$ is \IC only if it is undominated for \emph{all} possible realizations of external information $f_t \in \mathcal{F}_t$, as the principal does not observe $f_t$.
    Suppose the realized reward of action~1 satisfies $r_1 > 0.4$. If the principal recommends action~2, there exists an external signal $f_t$ (observing the history) under which the agent knows $r_1$. For this agent, action~1 yields a known reward $r_1 > 0.4$, while action~2 yields an expected reward of $\EE{R_2} = 0.4$. Consequently, action~2 is strictly dominated by action~1.
    Therefore, if $r_1 > 0.4$, action~2 is not \IC advice, and the principal is constrained to recommend action~1.

    Now, consider the event $E = \{r_1 > 0.4 \land r_2 = 1\}$. The probability of this event is $\Pr{R_1 > 0.4}\cdot\Pr{R_2=1} = 0.6 \cdot 0.4 = 0.24$.
    In this event, action~2 is the unique reasonable action since $r_2 = 1 > r_1$. However, as established above, the principal cannot recommend action~2 because $r_1 > 0.4$. The agents will therefore play action~1 forever.
    This results in a linear expected regret of at least $T \cdot 0.24 \cdot \EE{1 - R_1 \mid R_1 > 0.4} = \bigomega{T}$.

    \paragraphname{(1)\ \ Strong Incentive Compatibility}
    As a consequence of \cref{claim:sic-is-ic}, there is no \SIC advice policy with sublinear regret.

    \paragraphname{(3)\ \ Pareto Optimality}
    First, \cref{section:kremer-advice-policy} contains the policy for the full information version (\cref{example:bayesian-full-info}) of this instance. Next, we show how to turn this policy into a general principal policy for the external information setting of \cref{example:bayesian-ext-info}.

    \paragraphname{Policy for external information setting} Consider the following policy. The principal sends an empty message $\sigma_1 = \emptyset$ to agent~1.
    Then for $t \geq 2$, the principal follows the policy we designed above with a slight modification. If agent $t$ ignores the recommended action at time step $t$ and selects an action that is different from the principal's recommendation, then the principal ignores that time step and sends the message again  until some agent follows the advice/recommendation. An agent will potentially ignore the principal's advice if they knows the  history, that is, they are informed. Otherwise, we need to show that the policy guarantees that the recommended actions are the unique reasonable for that timestep.

    There are three possible states $S_2, S_3$ and $S_4$ for each uniformed agent to be in corresponding to steps $t = 2, 3$ and $t \geq 4$ of the full information policy. Each uninformed agent $t > 1$ has a distribution over $\{S_2, S_3, S_4\}$, representing the probability that the principal policy is in that state.
    The distribution is defined by the probability that previous agents were informed or uninformed. Due to the way we designed the full information policy each uninformed agent $t > 0$ knows that in each state the recommended action is \SIC advice. Therefore, in expectation over the states, the recommended action is \SIC advice for an uninformed agent $t > 1$ in the external information policy, and so an uninformed agent follows the advice.

    Because agents know the history with probability $1/2$, after two time steps in expectation an uninformed agent arrives. As $T \to \infty$, the principal explores both actions, and starts recommending the optimal action. Observe that informed agents who receive the  history will always select the action with the maximal reward once both actions have been explored. Hence, this principal policy attains sublinear \PO regret.
\end{proof}

\bayesianExternalInfoNoSublinear*

\begin{proof}
    Consider the Bayesian external information instance in \cref{example:bayesian-ext-info}, but each agent observes the history with probability $1$, instead of $1/2$. Agent~1 chooses action 1 because it has maximal expected reward. If $R_1 > 0.4$, each subsequent agent will choose action 1 regardless of the principal's message. Hence, the principal does not explore action~2. Because this event has constant probability, regret is linear under all types of behavior policies.
\end{proof}

\subsection{Proofs from \cref{section:non-bayesian-full-info}}

\nonBayesianFullInfo*

\begin{proof}
    Consider the non-Bayesian full information instance from \cref{example:non-bayesian-full-info}. We derive each statement separately. We repeat the reward distributions bellow.

    \begin{table}[htb]
    \centering
    \caption{Rewards for \cref{claim:non-bayesian-full-information} as presented in \cref{section:separations}.}
    \label{table:bernoulli-rewards-appendix}
    \begin{tabular}{@{}lcc@{}}
        \toprule
        \textbf{Reward R.V.} & \textbf{Prior $\theta_1$} & \textbf{Prior $\theta_2$} \\
        \midrule
        $R_1$ & $\Bern{0.1}$ & $\Bern{0.9}$ \\
        $R_2$ & $\Bern{0.9}$ & $\Bern{0.1}$ \\
        \bottomrule
    \end{tabular}
    \end{table}

    \paragraphname{(1)\ \ Strong Incentive Compatibility}
    This scenario is analogous to \cref{claim:bayesian-full-information}: for $t=1$, none of the actions is dominated by the other. In other words, for the first agent, there is no \SIC advice. Therefore, there is no \SIC advice policy.

    \paragraphname{(2)\ \ Incentive Compatibility}
    Consider the following principal policy $\pip^{\IC}$. First, the principal sends $\sigma_1 = 1$ as advice to the first agent. Agent 1 follows because $\sigma_1={1}$ is \IC advice (no action dominates it across~$\Theta$), and under an IC behavior policy agents follow any IC advice (even when IC advice is not unique). Recall that both actions follow a Bernoulli distribution for both priors. For $t > 1$ the principal sends the advice $\sigma_t = 1$ if $ R_1 = 1$, and $\sigma_t = 2$, otherwise. Because action 2 is sampled from a Bernoulli distribution, if $R_1 = 1$, then this is the best reward attainable overall. However, if $R_1 = 0$, then action 2 strictly dominates action 1 as $\EE{ R_2 - R_1 \mid \sigma_t=2 } = 1/2$. Therefore, for $t > 1$, the agent $t$ follows the advice sent by the principal. Because the principal learns the optimal actions after at most 2 iterations, the principal policy has  regret $\reg_T^{\IC}(\pi,\{\theta_1,\theta_2\})\leq 1 = \littleo{T}$.

    \paragraphname{(3)\ \ Pareto Optimality}
    Consider the following policy $\pip^{\PO}$ for the principal. Initially, send an empty message $\sigma_1 = \emptyset$ to agent 1, who may choose any action since both are reasonable. Then for $t \geq 2$, the principal sends $\sigma_t = 1$ if the principal observes either $R_1 = 1$ or $R_2 = 0$. Otherwise, the principal sends $\sigma_2 = 2$.
    Regardless of the choice made by the first agent, subsequent agents know that the recommended action is a unique reasonable action.
    Because the principal learns the optimal action after at most 2 iterations, the principal policy has  regret $\reg_T^{\PO}(\pip^{\PO},\{\theta_1,\theta_2\})\leq 1 = \littleo{T}$
\end{proof}

\nonBayesianFullInfoNoPO*

\begin{proof}
    First, we repeat the table with priors and reward distributions, then we establish each statement separately.

    \begin{table}[htb]
    \centering
    \caption{Rewards for \cref{claim:non-bayesian-full-information-no-po} as presented in \cref{section:separations}.}
    \label{table:non-bayesian-no-po-appendix}
    \begin{tabular}{@{}lcc@{}}
        \toprule
        \textbf{Reward} & \textbf{Prior $\theta_1$} & \textbf{Prior $\theta_2$} \\
        \midrule
        $R_1$ & $0$ & $0.5$ \\
        $R_2$ & $0.5$ & $0$ \\
        $R_3$ & $\calU\{0.25, 0.55\}$ & $\calU\{0.25, 0.55\}$ \\
        \bottomrule
    \end{tabular}
    \end{table}

    \paragraphname{(1)\ \ Strong Incentive Compatibility}
    At time $t = 1$, there is no single unique action that is reasonable for all priors. Therefore, there is no \SIC advice policy.

    \paragraphname{(2)\ \ Incentive Compatibility}
    Consider the following policy. The principal recommends action 3 to agent~1. Action 3 is an \IC advice because it  is not strictly dominated by some other action across all priors in $\Theta$. If $R_3 > 0.5$, then because all other actions have reward at most $0.5$, the principal has discovered the optimal arm and recommends action to all subsequent agents. If $R_3 < 0.5$, then the principal recommends action 1 to agent 2. Action 1 is an \IC advice because action 3 is now known to have reward at most 0.5, and agent 2 is indifferent between actions 1 and 2. If $R_1 = 0.5$, then the principal continues recommending action 1 until the end of the protocol. Otherwise, the principal starts recommending action 2 from time $t = 3$ until the end of the protocol. Either way, for $t\geq 3$ the optimal action is recommended and therefore the principal achieves sublinear regret.

    \paragraphname{(3)\ \ Pareto Optimality}
    We will show that the behavior policy that selects action 3 is \PO for every agent. Then with probability $1/2$, the principal is unable to learn the optimal action because under one of the priors either action 1 or action 2 has reward 0.5. Therefore, any principal policy will incur  linear \PO regret.

    For $t=1$, all three actions are \PO for agent 1 as none is dominated across all $\Theta$. Hence all three actions are undominated for agent~$1$, so in particular $3 \in \PO(1)$.
    We proceed by induction. Fix $t \ge 2$ and assume $3 \in \PO(s)$ for every $s\in\{1,\dots,t-1\}$.
    We show that $3 \in \PO(t)$, namely that action~$3$ is not dominated.

    Fix an arbitrary message $\sigma_t\in\Sigma_t$ (and recall $f_t=\emptyset$ in this instance).
    Consider the particular sequence of Pareto-optimal behavior policies for the previous agents given by
    \[
        \pia^1=3,\ \pia^2=3,\ \dots,\ \pia^{t-1}=3,
    \]
    which is valid by the inductive hypothesis.

    Suppose towards contradiction that some action $a\in\{1,2\}$ dominates action~$3$ for agent~$t$ at $(\sigma_t,\emptyset)$.
    Then, by \cref{definition:po-behavioral-policies}, for all $\theta\in\Theta$ we must have
    \[
        \EE{R_a - R_3 \mid \sigma_t;\, \theta,\pip,\pia^1,\dots,\pia^{t-1},t} \geq 0
    \]
    and strict inequality for at least one choice of $(\theta,\pia^1,\dots,\pia^{t-1})$.

    Under this sequence, before time~$t$ the principal never observes outcomes of actions~$1$ and~$2$, so conditioned on $\sigma_t$, the conditional expectation of $R_3$ is at least its minimum possible realization:
    \[
        \mathbb E\left[R_3 \mid \sigma_t;\, \theta,\pip,\pia^1,\dots,\pia^{t-1},t \right]\ \ge\ 0.25
        \qquad
        \text{for every }\theta\in\Theta.
    \]
    If $a=1$, take $\theta=\theta_1$, and then, we have $R_1=0$ surely. Hence, for $a=1$ we get
    \[
        \mathbb E\left[R_1 - R_3 \mid \sigma_t;\, \theta_1,\pip,\pia^1,\dots,\pia^{t-1},t \right]
        = - \mathbb E\left[R_3 \mid \sigma_t;\, \theta_1,\pip,\pia^1,\dots,\pia^{t-1},t \right]
        \le -0.25 < 0,
    \]
    and this contradicts the required weak-improvement inequality.
    Observe that the same holds for $a = 2$: this time we take $\theta = \theta_2$, and then, $R_2 = 0$ surely.
    Therefore, neither $a = 1$ nor $a = 2$ can dominate action~$3$ under every prior and every sequence of PO behavior policies for earlier agents.
\end{proof}

%% file: app-eps-ic.tex
\section{Approximate Incentive Compatibility and Approximate Pareto Optimality}
\label{section:eps-ic-eps-po}

\begin{figure}[tbp]
    \centering
    \begin{tikzpicture}[
        scale=0.9,
        qty/.style={rectangle, draw=black, line width=1.5pt, rounded corners, minimum height=1cm, inner sep=5pt, fill=gray!10},
        compare/.style={semithick, -},
        doublearrow/.style={thick, <->, >=Stealth},
        highlight/.style={qty, fill=richblue!10, draw=richblue},
        note/.style={text=richblue, font=\bfseries\small, align=center},
        node distance=1.2cm and 0.8cm
    ]

        \node[qty] (epsIC) {$\reg^{\varepsilon\text{-}\IC}_T(\Theta)$};
        \node[qty, right=1.25cm of epsIC] (IC) {$\reg^{\IC}_T(\Theta)$};
        \node[highlight, right=2cm of IC] (PO) {$\reg^{\PO}_T(\Theta)$};
        \node[highlight, right=1.25cm of PO] (epsPO) {$\reg^{\varepsilon\text{-}\PO}_T(\Theta)$};

        \draw[compare] (IC) -- node[midway, above] {$\leq$} (epsIC);
        \draw[compare] (epsPO) -- node[midway, above] {$\leq$} (PO);

        \draw[compare] (PO) to[bend left=35] node[midway, below, font=\small] {$\leq$ under full information} (IC);
        \draw[doublearrow] (IC) to[bend left=35] node[midway, above, font=\small] {Incomparable in general information setting} (PO);
    \end{tikzpicture}
    \caption{Comparison of regret benchmarks for approximate and exact notions of behavior policies. The $\varepsilon$-approximate notions relax their exact counterparts. In the full information setting, $\IC \leq \PO$ holds; in the general case with side information, \IC and \PO are incomparable. The regret shown is the infimum over all policies in the corresponding policy class.}
    \label{figure:comparison-of-regret-eps-ic-eps-po}
\end{figure}

In this section we relax the notion of incentive compatible advice to capture agents who follow the recommended advice as long as it is close enough to being reasonable. In contrast to the relationship we established between \PO and \hepsPO behavior policies, when agents are required to be only $\varepsilon$-approximately incentive compatible (\epsIC), the principal explores more easily, that is any \IC advice policy is automatically an \epsIC advice policy, but not every \epsIC advice policy is an \IC advice policy.

\begin{definition}[$\varepsilon$-Approximate Incentive Compatible Advice]
\label{definition:eps-ic-advice}
    Given an advice policy $\pip$, the advice $\sigma_t$ is an \emph{$\varepsilon$-approximate incentive compatible advice} if there is no other action $a \in [k]$, such that for all priors $\theta \in \Theta$
    \begin{equation*}
        \EE{R_{a} - R_{\sigma_t} \mid f_t, \sigma_t; \theta, \pip, t, \calE_{t - 1}} \geq \varepsilon,
    \end{equation*}
    and for some prior $\theta \in \Theta$
    \begin{equation*}
        \EE{R_{a} - R_{\sigma_t} \mid f_t, \sigma_t; \theta, \pip, t, \calE_{t - 1}} > \varepsilon,
    \end{equation*}
    where $\calE_{t - 1}$ is the event that all previous agents $1, \dots, t - 1$ followed the principal's advice. In other words, an advice $\sigma_t$ is \IC advice if there is no other actions $a \in [k]$ that is at least as good as the advice $\sigma_t$ for every prior in $\Theta$ and strictly dominates the advice $\sigma_t$ by an additive $\varepsilon$ for at least one prior $\theta \in \Theta$, given that all previous agents followed the principal's advice.
\end{definition}

\begin{remark}
    While defining a notion of $\varepsilon$-approximately strong incentive compatible advice may be tempting, it has no meaningful contribution beyond the strong incentive compatible notion from \cref{section:strong-ic}. We defined \SIC advice to mitigate the effect of having multiple reasonable actions for some agent $t$. Studying the approximate version would go against this motivation. Furthermore, it is subsumed by the definition above.
\end{remark}

In the Bayesian setting with a single prior, the definition reduces to the classical form of $\varepsilon$-approximate Bayesian incentive compatibility and requires that $\EE{R_{a} - R_{\sigma_t} \mid f_t, \sigma_t; \theta, \pip, t, \calE_{t - 1}} \leq \varepsilon$ for all actions $a \in [k]$. We define $\varepsilon$-approximately \IC advice policies, $\varepsilon$-approximately \IC behavior policies and \epsIC regret analogously to their \IC counterparts. We similarly let $\Pi_{\epsIC}(\Theta)$ denote the set of all valid \epsIC advice policies for the collection~$\Theta$. Next, we establish that \epsIC advice is more permissible than \IC advice, and so \epsIC regret can improve on the \IC regret for a given problem instance.

\begin{claim}
\label{claim:eps-ic-less-than-ic}
    Consider an arbitrary prior and information setting. For any timestep $t \geq 1$ and any $\varepsilon \geq 0$, we have $\Pi_{\IC}(\Theta) \subseteq \Pi_{\epsIC}(\Theta)$. That is, any \IC advice policy is also an $\varepsilon$-approximately \IC advice policy. Therefore, $\reg^{\IC}_T(\pip, \Theta) \leq \reg^{\epsIC}_T(\pip, \Theta)$ for any principal policy $\pip$. Hence, if there exists a principal policy $\pip$ with sublinear \IC regret, then $\pip$ has sublinear \epsIC regret.
\end{claim}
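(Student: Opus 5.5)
The plan is to argue pointwise at the level of a single advice. We then lift this to policies, and finally to regret. Fix an advice policy $\pip \in \Pi_{\IC}(\Theta)$, a timestep $t$, a history $h_t$, an advice $\sigma_t$ in the support of $\pip^t(h_t)$, and an external information $f_t$. We want to show that $\sigma_t$ is \epsIC advice in the sense of \cref{definition:eps-ic-advice}. Suppose toward a contradiction that some action $a$ $\varepsilon$-dominates $\sigma_t$. Then
\[
\EE{R_a - R_{\sigma_t} \mid f_t, \sigma_t; \theta, \pip, t, \calE_{t-1}} \geq \varepsilon \geq 0
\]
for every $\theta \in \Theta$, and the difference is $> \varepsilon \geq 0$ for some $\theta$. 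This is exactly the domination condition of \cref{definition:ic-advice}, which contradicts that $\sigma_t$ is \IC. The step that needs care is checking that the conditioning event $\calE_{t-1}$ means the same thing in both definitions. In \cref{definition:ic-advice} it is the event that previous agents followed the \IC advice, and in \cref{definition:eps-ic-advice} it is the event that they followed the principal's advice. Since $\pip$ is an \IC advice policy, every advice it emits is \IC, so the two events coincide for this fixed $\pip$. The conditional expectations are therefore identical, and the implication is literal. Because this holds for every $t$, every $h_t$, and every $\sigma_t \sim \pip^t(h_t)$, we get $\pip \in \Pi_{\epsIC}(\Theta)$.

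For the regret statement, the plan is to show that under $\pip$ the two suprema in the regret definitions range over behaviors that all produce the same trajectory, namely following the advice. An \IC behavior policy must satisfy $\pia^t(\sigma_t, f_t) = \sigma_t$ whenever $\sigma_t$ is \IC. An \epsIC behavior policy, defined analogously, must follow every \epsIC advice. By the first step, every advice sent by $\pip$ is both \IC and \epsIC. Hence, along any run of $\pip$, every agent plays $a_t = \sigma_t$ under any \IC behavior sequence and also under any \epsIC behavior sequence. The inner expectation $\EE[\pip]{\sum_t R_{a_t}}$ is therefore the same for every admissible behavior sequence in both classes. The suprema over $\theta$ are taken over the same $\Theta$, so $\reg^{\IC}_T(\pip,\Theta) = \reg^{\epsIC}_T(\pip,\Theta)$, which in particular gives the claimed $\leq$. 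Sublinearity then transfers immediately.

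The main obstacle I expect is bookkeeping rather than mathematics. First, the conditioning event $\calE_{t-1}$ must be interpreted consistently, and the argument above handles this by fixing $\pip$. Second, \cref{definition:eps-ic-advice} omits the explicit ``for any $f_t \in \calF_t$'' quantifier. I would read it with the same quantifier as \cref{definition:ic-advice}, and the argument goes through verbatim for each $f_t$.
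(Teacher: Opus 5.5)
Your proposal is correct and follows the paper's proof. Like the paper, you show that $\varepsilon$-domination implies domination, so every \IC advice is \epsIC advice, and that for $\pip \in \Pi_{\IC}(\Theta)$ both behavior classes follow every advice, which gives $\reg^{\IC}_T(\pip,\Theta) = \reg^{\epsIC}_T(\pip,\Theta)$. Your handling of the conditioning event $\calE_{t-1}$ and of the missing quantifier over $f_t$ in \cref{definition:eps-ic-advice} is more careful than the paper, which passes over both points without comment.
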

\begin{proof}
    We establish each step separately.

    \paragraphname{From \IC to \epsIC}
    Consider any \IC advice policy $\pip$ and any timestep $t \geq 1$. The inequality constraints in the \IC definition for agent $t$ directly yield the inequality constraints in the \epsIC definition for $\varepsilon = 0$, and so they hold for $\varepsilon > 0$ as well. Therefore, we have that an \IC advice is also an \epsIC advice, and so we must have $\Pi_{\IC}(\Theta) \subseteq \Pi_{\epsIC}(\Theta)$.

    \paragraphname{Regret Inequality}
    Because any \IC advice policy is also an \epsIC advice policy we have: $\reg^{\IC}_T(\pip, \Theta) = \reg^{\epsIC}_T(\pip, \Theta).$
    Next, because $\Pi_{\IC}(\Theta) \subseteq \Pi_{\epsIC}(\Theta)$ and the optimal regret is taken as the infimum over those sets, we must have $\reg^{\IC}_T(\Theta) \geq \reg^{\epsIC}_T(\Theta)$.
\end{proof}

\paragraphname{Discussion on the Different Effect of Approximation}
The approximate versions of incentive compatibility and Pareto optimality interact differently with the exploration problem. For $\varepsilon$-approximate incentive compatibility (\epsIC), relaxing the inequalities makes exploration easier for the principal: we have $\Pi_{\IC}(\Theta) \subseteq \Pi_{\epsIC}(\Theta)$ (\cref{claim:eps-ic-less-than-ic}), meaning any \IC advice policy is a valid \epsIC advice policy Intuitively, agents who accept advice that is merely $\varepsilon$-close to optimal rather than exactly optimal give the principal more flexibility in which actions can be recommended.

In contrast, $\varepsilon$-approximate Pareto optimality (\hepsPO) potentially makes exploration harder. Because of the relations $\PO(t) \not\subseteq \hepsPO(t)$ and $\hepsPO(t) \not\subseteq \PO(t)$ (\cref{proposition:po-to-epspo-sequential}), the principal needs to consider a different class of possible behaviors when designing their principal policy in order to achieve low regret.
Thus, approximation plays asymmetric roles in both settings. While \epsIC relaxes constraints on what the principal can recommend, and so exploration becomes easier, \hepsPO behavior policies change the space of agent responses compared to \PO, and so exploration may be harder as we established in \cref{claim:po-sublinear-epspo-linear}.

%% file: app-per-step-pareto-optimality.tex
\section{Per-step Approximate Pareto-optimality}
\label{section:per-step-approx-po}

Here, we formalize a different notion that captures $\varepsilon$-approximately reasonable agent behaviors on the timestep scale. Under \emph{per-step} $\varepsilon$-approximate Pareto-optimality, agent~$t$ assumes that agents $1, \dots, t~-~1$ behaved exactly according to a \PO behavior policy, however, agent $t$ may select an approximately reasonable action. This notion captures reward perturbations for a particular agent, and allows the principal to be robust to small changes in the reward structure of individual agents.

\begin{definition}[Per-Step $\varepsilon$-Approximate Pareto-optimal Behavior Policies]
\label{definition:per-step-eps-po}
    Given a principal policy $\pip$ and $\varepsilon \geq 0$, a behavior policy $\pia^t$ for agent $t$ is \emph{per-step $\varepsilon$-approximate Pareto-optimal} (\epsPO) if for any given message $\sigma_t \in \Sigma_t$ and external information $f_t \in \calF_t$ there is no action $a \in  [k]$, such that for all priors $\theta \in \Theta$ and all sequences of Pareto-optimal behavior policies $\pia^1, \dots, \pia^{t - 1}$,
    \begin{equation*}
        \EE{R_{a} - R_{\pia^t(\sigma_t, f_t)} \mid f_t, \sigma_t ; \theta, \pip, \pia^1, \dots, \pia^{t - 1}, t} \geq \varepsilon,
    \end{equation*}
    and for some prior $\theta \in \Theta$ and some sequence of Pareto-optimal behavior policies $\pia^1, \dots, \pia^{t - 1}$,
    \begin{equation*}
        \EE{R_{a} - R_{\pia^t(\sigma_t, f_t)} \mid f_t, \sigma_t ; \theta, \pip, \pia^1, \dots, \pia^{t - 1}, t} > \varepsilon.
    \end{equation*}
    Let $\epsPO(t)$ be the set of all per-step $\varepsilon$-approximate Pareto-optimal behavior policies for agent $t$.
\end{definition}

Intuitively, a behavior policy $\pia^t$ is \epsPO if the action $\pia^t(\sigma_t, f_t)$ is not \emph{$\varepsilon$-dominated} given that the previous agents $1, \dots, t - 1$ were using \PO behavior policies.
Notice that if we set $\varepsilon = 0$, we recover \PO behavior policies (\cref{definition:po-behavioral-policies}). We define \epsPO regret similarly to \PO regret (\cref{definition:pseudo-regret-po}).
Given a principal policy~$\pip$, the \emph{per-step $\varepsilon$-approximate Pareto-optimal regret} of $\pip$ with respect to the collection of priors $\Theta$ is
\begin{equation*}
    \reg^{\epsPO}_T(\pip, \Theta) = \sup_{\{\pia^t\}_{t = 1}^T \text{ are $\epsPO$}} \sup_{\theta \in \Theta} \EE[\{R_a \sim \theta_a\}_{a \in [k]}]{ T \cdot \sup_{a \in  [k]} R_a - \EE[\pip]{\sum_{t = 1}^T  R_{\pia^t(\sigma_t, f_t)} }}.
\end{equation*}
Namely, the regret is taken over the worst possible sequence of \epsPO behavior policies and the worst possible prior $\theta \in \Theta$.

Next, we establish that the class of \epsPO behavior policies contains the class of \PO behavior policies for any time $t \in [T]$.

\begin{restatable}{observation}{POisAtMostEpsPO}
\label{proposition:po-is-at-most-epspo}
    Consider an arbitrary collection of priors $\Theta$ and any information setting. For any timestep $t \geq 1$ and any $\varepsilon \geq 0$, we have $\PO(t) \subseteq \epsPO(t)$. Hence, $\inf_{\pip} \reg^{\PO}_T(\pip, \Theta) \leq \inf_{\pip} \reg^{\epsPO}_T(\pip, \Theta)$.
\end{restatable}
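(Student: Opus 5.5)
The plan is to compare the two definitions directly. \cref{definition:po-behavioral-policies} and \cref{definition:per-step-eps-po} quantify over exactly the same objects: all priors $\theta \in \Theta$ and all sequences of \PO behavior policies $\pia^1, \dots, \pia^{t-1}$ for the earlier agents. So, unlike the sequential notion \hepsPO of \cref{definition:sequential-eps-po}, the conditional expectations
\[
D_a(\theta, \pia^{1:t-1}) := \EE{R_a - R_{\pia^t(\sigma_t,f_t)} \mid f_t, \sigma_t; \theta, \pip, \pia^1, \dots, \pia^{t-1}, t}
\]
appearing in both definitions are literally the same quantities. The only difference is the threshold: $0$ versus $\varepsilon$. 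This is the key point that makes the argument go through, and it is the one I would emphasize. Because the class of admissible histories is not altered, no induction over $t$ is needed.

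For the inclusion $\PO(t) \subseteq \epsPO(t)$, I would argue by contrapositive. Fix $\pia^t \notin \epsPO(t)$. Then for some message $\sigma_t$ and external information $f_t$ there is an action $a$ such that:
\begin{itemize}
\item $D_a \ge \varepsilon \ge 0$ for all $\theta$ and all \PO histories;
\item $D_a > \varepsilon \ge 0$ for some $\theta$ and some \PO history.
\end{itemize}
These are exactly the two conditions saying that $a$ dominates $\pia^t(\sigma_t,f_t)$ in the sense of \cref{definition:po-behavioral-policies}. Hence $\pia^t \notin \PO(t)$. The argument holds for every $t \ge 1$, every $\Theta$, and every information setting, since $f_t$ is simply carried along.

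For the regret consequence, I would note that for each fixed principal policy $\pip$, the \epsPO regret takes a supremum over sequences $\{\pia^t\}_{t=1}^T$ with $\pia^t \in \epsPO(t)$. By the first step, this feasible set contains every sequence with $\pia^t \in \PO(t)$. The integrand is identical in both regret expressions, so $\reg^{\PO}_T(\pip,\Theta) \le \reg^{\epsPO}_T(\pip,\Theta)$ for each $\pip$. Taking the infimum over $\pip$ on both sides gives the claimed inequality.

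The only subtlety is to check that the conditioning in \cref{definition:per-step-eps-po} really refers to \PO, not \epsPO, predecessors. Otherwise we would face the non-nestedness of \cref{proposition:po-to-epspo-sequential}. That check is immediate from the statement of the definition.
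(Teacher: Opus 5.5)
Your proposal is correct and follows essentially the same route as the paper. Both definitions quantify over the same priors and the same \PO predecessor sequences, so $\varepsilon$-domination implies ordinary domination, which gives $\PO(t) \subseteq \epsPO(t)$; the regret inequality then holds for each fixed $\pip$ because the feasible behavior sequences are nested, and taking infima finishes the argument. Your contrapositive phrasing simply makes explicit the monotonicity-in-$\varepsilon$ step that the paper states in one line.
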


\begin{proof}
    We establish each step separately.

    \paragraphname{From \PO to \epsPO} In both settings agent $t$ assumes that previous agents were using \PO behavior policies. Therefore, the inequality constraints in the \PO definition for agent $t$ directly yield the inequality constraints in the \hepsPO definition for $\varepsilon = 0$, and so it holds for $\varepsilon > 0$ as well. Therefore, $\PO(t) \subseteq \epsPO(t)$.

    \paragraphname{Regret Inequality}
    Let $\Pi^{\bfa}_{\PO}$ be the set of all sequences of \PO behavior policies, that is $\Pi^{\bfa}_{\PO} = \{\{\pia^t\}_{t = 1}^T \mid \pia^t \in \PO(t), t \in [T]\}$. Define $\Pi^{\bfa}_{\epsPO}$ similarly, substituting \PO with \epsPO. Now we invoke the definitions of \PO and \epsPO regret.
    \begin{align}
        \reg^{\PO}_T(\pip, \Theta)
        &=  \sup_{\{\pia^t\}_{t = 1}^T \in \Pi^{\bfa}_{\PO}} \sup_{\theta \in \Theta} \EE[R \sim \theta]{ T \cdot \sup_{a \in [k]} R_a - \EE[\pip]{\sum_{t = 1}^T  R_{\pia^t(\sigma_t, f_t)}}} \nonumber\\
        &\leq   \sup_{\{\pia^t\}_{t = 1}^T \in \Pi^{\bfa}_{\epsPO}} \sup_{\theta \in \Theta} \EE[R \sim \theta]{ T \cdot \sup_{a \in [k]} R_a - \EE[\pip]{\sum_{t = 1}^T  R_{\pia^t(\sigma_t, f_t)}}} \label{eq:po-regret-to-eps-po-regret}\\
        &=  \reg^{\epsPO}_T(\pip, \Theta) \nonumber
    \end{align}
    Line \eqref{eq:po-regret-to-eps-po-regret} holds because $\PO(t) \subseteq \epsPO(t)$ for every $t \in [T]$, so $\Pi^{\bfa}_{\PO} \subseteq \Pi^{\bfa}_{\epsPO}$.
\end{proof}

A natural question is then to characterize how much worse can \epsPO regret be compared to \PO regret. First, we observe that \epsPO regret can be higher than the \PO regret by at least $\bigtheta{\varepsilon T}$. Consider a simple Bayesian full information instance, where $R_1 = 1/2$ and $R_2 = 1/2 + \varepsilon$ with probability one. The optimal \PO regret is zero because the principal can let agents choose the optimal action (action~1) by themselves. However, the optimal \epsPO regret is $\varepsilon T$. Each agent who follows an \epsPO behavior policy may still select action~1 because it is within $\varepsilon$ of the optimal action. We formalize this observation below.

\begin{observation}
    There exists a Bayesian full information instance, such that $\inf_{\pip} \reg^{\epsPO}_T(\pip, \Theta) = \inf_{\pip} \reg^{\PO}_T(\pip, \Theta) + \bigtheta{\varepsilon T}$.
\end{observation}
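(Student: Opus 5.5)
We use the instance sketched just before the statement: a single prior $\Theta=\{\theta\}$, full information, $k=2$, with $R_1 = 1/2$ and $R_2 = 1/2+\varepsilon$ deterministically. (The text's parenthetical names action~1 as optimal; under these rewards the optimal action is action~2, and we treat that as a typo.) The plan is to compute both infima exactly. For $\varepsilon\le 1/2$ we will show $\inf_{\pip}\reg^{\PO}_T(\pip,\Theta)=0$ and $\inf_{\pip}\reg^{\epsPO}_T(\pip,\Theta)=\varepsilon T$. Since $\varepsilon T=\bigtheta{\varepsilon T}$, this gives the claim.

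\textbf{PO side.} Both rewards are known constants under the single prior. So for every agent $t$, every message $\sigma_t$, every previous \PO sequence and every history, the conditional expectations are $\EE{R_1\mid\cdot}=1/2$ and $\EE{R_2\mid\cdot}=1/2+\varepsilon$. Hence action~2 strictly dominates action~1 in the sense of \cref{definition:po-behavioral-policies}, and $\PO(t)$ consists only of the constant policy that plays action~2. This needs no induction, because the conditioning never changes degenerate rewards. It follows that $\sup_a R_a = R_2$ is played every round, so every principal policy (for instance the empty-message policy) has \PO regret $0$. Regret is nonnegative, so the infimum is $0$.

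\textbf{Per-step $\varepsilon$-PO side.} The key point is that action~1 is never $\varepsilon$-dominated in the sense of \cref{definition:per-step-eps-po}. The domination requires a gap that is at least $\varepsilon$ for all priors and strictly greater than $\varepsilon$ for some prior. Here the gap $\EE{R_2-R_1\mid\cdot}$ equals exactly $\varepsilon$, so the strict inequality fails. Action~1 is also not dominated by itself, since that gap is $0$. Thus the constant policy ``play action~1'' lies in $\epsPO(t)$ for every $t$, regardless of $\pip$. Because the definition quantifies only over previous exact \PO sequences, which are fixed to action~2, no induction is needed. Taking this sequence inside the supremum gives, for every $\pip$,
\[
\reg^{\epsPO}_T(\pip,\Theta)\ \ge\ T\bigl(\tfrac12+\varepsilon\bigr)-T\cdot\tfrac12=\varepsilon T.
\]
Conversely, every sequence of behaviors earns at least $T/2$, so the regret is at most $\varepsilon T$.

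Combining the two sides gives $\inf_{\pip}\reg^{\epsPO}_T=\varepsilon T=\inf_{\pip}\reg^{\PO}_T+\bigtheta{\varepsilon T}$. The only delicate step is the boundary case: we must check that the definition's strict inequality ``$>\varepsilon$'' is what lets action~1 survive. If that were not so, one could instead take $R_2=1/2+\varepsilon/2$, which gives $\varepsilon T/2$, still $\bigtheta{\varepsilon T}$.
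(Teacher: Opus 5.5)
Your proposal is correct and follows essentially the same argument as the paper. You use the same deterministic instance $R_1 = 1/2$, $R_2 = 1/2+\varepsilon$: exact \PO behavior is forced onto the optimal action (zero regret), while the per-step approximate notion admits the constant action-$1$ policy, since the gap is exactly $\varepsilon$ and the strict ``$>\varepsilon$'' clause fails, which yields regret exactly $\varepsilon T$. You are also right that the paper's parenthetical naming action~$1$ as optimal is a typo for action~$2$.
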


Moreover, in the full information setting this relationship between \PO and \epsPO regret is tight.

\begin{restatable}{theorem}{epsPORegretComparison}
\label{proposition:po-can-be-strictly-smaller-than-epspo}
    Consider the full information setting and an instance, such that there exists a principal policy~$\pip$ that explores all actions in finite time $t^*$ when agents use \PO behavior policies. Then there exist a principal policy $\pip'$, such that $\reg^{\epsPO}_T(\pip', \Theta) = \littleo{T} + \bigo{\varepsilon T}$.
\end{restatable}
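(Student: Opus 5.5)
Throughout, write $\epsPO(t)$ for the per-step $\varepsilon$-approximate \PO behavior policies of \cref{definition:per-step-eps-po}. The plan is to build $\pip'$ in two phases. First, an exploration phase that runs $\pip$ essentially unchanged for the first $t^*$ rounds. Second, an exploitation phase in which the principal discloses the full observed reward vector. The regret then splits into a bounded exploration cost and an $\varepsilon$-per-round exploitation loss.

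\textbf{Exploration phase.} Let $\pip'$ agree with $\pip$ for $t \le t^*$. In full information, every \PO behavior policy is in $\epsPO(t)$ by \cref{proposition:po-is-at-most-epspo}. The difficulty is that an \epsPO agent may pick an action that is $\varepsilon$-close but not \PO. Such a choice can change the history, and with it the exploration guarantee of $\pip$. However, agents in \cref{definition:per-step-eps-po} reason as if all predecessors were \PO. Hence the conditional expectations each agent computes are the same as in the \PO analysis; only the realized history may differ.

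To handle this I would first try to make $\pip'$ robust as follows. Whenever an agent's chosen action is one that no \PO behavior policy could have taken, the principal treats that round as wasted and re-sends the message $\pip$ would send, as in the proof of \cref{claim:bayesian-external-information}. If that does not work, I would fall back to a weaker goal. That goal is to show that after finitely many rounds all rewards are revealed, or else the unrevealed actions are ones every \epsPO agent already rates within $\varepsilon$ of the best. Either way, this phase contributes at most $O(t^*)=\littleo{T}$ regret, provided its length is bounded independently of $T$. \textbf{I expect this step to be the main obstacle}: an \epsPO deviation can steer the simulated history off every \PO path, so it is not clear that $\pip$'s guarantee transfers.

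\textbf{Exploitation phase.} After exploration, for $t > t^*$ the principal sends $\sigma_t = (r_1,\dots,r_k)$, the full realized reward vector (restricted to the explored actions). For agent $t$, the conditional expectation of each revealed $R_a$ given $\sigma_t$ equals $r_a$, under every prior and every earlier sequence of \PO policies. So any action $b$ with $r_b < \max_a r_a - \varepsilon$ is dominated by $\argmax_a r_a$ by more than $\varepsilon$ under all priors, and is therefore not chosen by an \epsPO policy. Every agent after $t^*$ thus earns at least $\max_a r_a - \varepsilon$, for a loss of at most $\varepsilon(T - t^*) = O(\varepsilon T)$.

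\textbf{Combining.} Taking the supremum over $\theta \in \Theta$ and over \epsPO behavior sequences gives $\reg^{\epsPO}_T(\pip',\Theta) \le t^* + \varepsilon T = \littleo{T} + \bigo{\varepsilon T}$, since rewards lie in $[0,1]$.
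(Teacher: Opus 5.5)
There is a genuine gap, and it is exactly the step you flagged. Your skeleton matches the paper's: run $\pip$ for $t^*$ rounds, then reveal rewards and lose at most $\varepsilon$ per round. But the middle step is the whole substance of the theorem, and you leave it open. Your combining step bounds the loss by $\varepsilon$ per round against $\sup_{a\in[k]} R_a$. Your exploitation phase, however, only guarantees $\max_a r_a - \varepsilon$ with the max over \emph{explored} actions. So without the missing step, an unexplored optimal action can cost a constant per round.

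Your first repair does not work. The ``wasted round'' trick of \cref{claim:bayesian-external-information} relies on a compliant (uninformed) type arriving with probability $1/2$, and on the agents' own model accounting for the re-sending. Here the regret is a supremum over all \epsPO sequences. An adversarial sequence can therefore deviate every time you re-send a message under which the target action is at most $\varepsilon$ better. Moreover, each agent reasons as if its predecessors were \PO, and on those paths no re-sending ever happens. A re-sent message is therefore read as $\pip^t$'s output for the current round, or it has zero probability and the agent's conditional expectations are undefined. Your fallback merely restates the goal.

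The missing idea uses the observation you already made: every \epsPO agent computes expectations as if all predecessors were \PO. Hence after round $t^*$ every agent believes all actions have been explored. So $\pip'$ may send \emph{any} list of action--reward pairs consistent with some \PO history, not only the true one. The paper spends rounds $t^*+1,\dots,t^*+k$ on the actions $a$ that were not actually observed.
\begin{itemize}
\item If some \PO-consistent list makes $a$ more than $\varepsilon$ better than every other action, the principal sends that list. Every \epsPO agent must then play $a$, since any other choice is $\varepsilon$-dominated by $a$, so $a$ gets explored.
\item If no such list exists, $a$ is $\varepsilon$-benign: on every compatible realization it beats the alternatives by at most $\varepsilon$. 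Leaving it unexplored costs $O(\varepsilon)$ per round.
\end{itemize}
Only after these rounds does exploitation begin, with a \PO-consistent list inducing play of the best explored action.

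This also fixes a second problem in your exploitation phase. A list ``restricted to the explored actions'' with some action missing never occurs on a \PO path, because there all actions are explored by $t^*$. The agent's conditioning on such a list is therefore not well defined. The paper avoids this by always sending \PO-consistent lists.
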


\begin{proof}
    Let $\pip$ be a policy that explores all actions in finite time $t^*$ when agents use \PO behavior policies. We construct a new policy $\pip'$. For rounds $t \leq t^*$, $\pip'$ coincides with $\pip$ on histories consistent with \PO behavior (off-path messages can be arbitrary).
    Now fix any round $t > t^*$. Under \epsPO, agent $t$ evaluates deviations assuming agents $1, \dots, t - 1$ used \PO behavior policies. Since $\pip'$ matches $\pip$ for the first $t^*$ rounds on such histories, agent $t$ believes that exploration has already happened by time $t^*$. Hence, from round $t^* + 1$ onward, $\pip'$ may send any list of action-reward pairs that is consistent with some \PO history.

    The key challenge is that the principal may not have actually observed all actions by time $t^*$, because early agents may have been \epsPO rather than \PO.
    We use rounds $t^* + 1, \dots, t^* + k$ to handle these potentially missing actions, one by one. Fix an action $a$ not yet observed by the principal at time $t^*$. There are now two cases:
    \begin{itemize}
        \item \textbf{$a$ is forceable.}
        The principal can construct a \PO-consistent list in which $a$ is more than $\varepsilon$ better than every other action.
        Then any \epsPO behavior policy must choose $a$ (otherwise $a$ would $\varepsilon$-dominate the chosen action), so $a$ is explored in that round.

        \item \textbf{$a$ is not forceable.}
        Then either $a$ is dominated under every compatible realization, or whenever $a$ is optimal its advantage over the next-best action is at most $\varepsilon$. We call these actions \emph{$\varepsilon$-benign}.
    \end{itemize}
    In either subcase, not exploring $a$ can increase per-round regret by at most $\varepsilon$.
    After at most $k$ such rounds, every action is either explored or $\varepsilon$-benign in the sense above.
    From round $t^* + k + 1$ onward, $\pip'$ sends a \PO-consistent list and induces play of a best explored action. Any remaining unexplored action can improve on this by at most $\varepsilon$, so per-round regret is at most $\varepsilon$.

    Therefore, $\pip'$ incurs only constant regret on rounds $1, \dots, t^* + k$, and at most $\varepsilon$ regret per round afterward:
    \begin{equation*}
        \reg_T^{\epsPO}(\pip', \Theta) \leq \bigo{t^* + k} + \varepsilon \cdot (T - t^* - k) = \littleo{T} + \bigo{\varepsilon T}.
    \end{equation*}
    This establishes the claim.
\end{proof}